\newcommand{\bits}{\{0,1\}}
\newcommand{\BE}{\mathbb{E}}
\newcommand{\BH}{\mathbf H}
\newcommand{\comp}[1]{\overline{#1}}
\newcommand{\conv}{\operatorname{conv}}
\newcommand{\cp}{p}
\newcommand{\CA}{\mathcal A}
\newcommand{\CB}{\mathcal B}
\newcommand{\CD}{\mathcal D}
\newcommand{\CE}{\mathcal E}
\newcommand{\CF}{\mathcal F}
\newcommand{\CG}{\mathcal G}
\newcommand{\CS}{\mathcal S}
\newcommand{\CT}{\mathcal T}
\newcommand{\CV}{\mathcal V}
\newcommand{\CW}{\mathcal W}
\newcommand{\CX}{\mathcal X}
\newcommand{\CY}{\mathcal Y}
\newcommand{\eps}{\varepsilon}
\newcommand{\Ex}[2]{{\mathbb{E}}_{#1}\left[#2\right]}
\newcommand{\FG}{\mathfrak G}
\newcommand{\FW}{\mathfrak W}
\providecommand{\Inf}[2]{\mathbf{I}\left(#1:#2\right)} 
\providecommand{\Infp}[3]{\mathbf{I}_{#3}\left(#1:#2\right)} 
\providecommand{\Infc}[4][]{\mathbf{I}_{#1}\left(#2:#3|#4\right)}
\newcommand{\parity}{\mathrm{parity}}
\newcommand{\Path}{\mathrm{path}}
\newcommand{\ps}[2]{\prescript{}{[#2]}{#1}}
\newcommand{\R}{\mathbb{R}}
\newcommand{\supp}{\mathrm{supp}}
\newcommand{\zeros}{\operatorname{Zeros}}
\newcommand{\Zeros}{\zeros}
\newcommand{\Z}{\mathbb{Z}}
\spnewtheorem{clm}[theorem]{Claim}{\bfseries}{\itshape}
\newcommand{\replabel}{\label} 
\NewDocumentCommand{\repeatlemma}{m}
 {
  \group_begin:
  \renewcommand{\thelemma}{\ref{#1}}
  \renewcommand{\replabel}[1]{\tag{\ref{##1}}}
  \prop_item:Nn \g_replemma_prop { #1 }
  \endlemma
  \group_end:
 }
\NewDocumentEnvironment{replemma}{m+b}
 {
  \prop_gput:Nnn \g_replemma_prop { #1 } { \lemma #2 \endlemma }
  \lemma#2\unskip\label{#1}\endlemma
 }{}
\NewDocumentCommand{\repeatclm}{m}
 {
  \group_begin:
  \renewcommand{\theclm}{\ref{#1}}
  \renewcommand{\replabel}[1]{\tag{\ref{##1}}}
  \prop_item:Nn \g_repclm_prop { #1 }
  \endclm
  \group_end:
 }
\NewDocumentEnvironment{repclm}{m+b}
 {
  \prop_gput:Nnn \g_repclm_prop { #1 } { \clm #2 \endclm }
  \clm#2\unskip\label{#1}\endclm
 }{}
\begin{document}
\title{Lower Bounds on the Complexity of Mixed-Integer Programs for Stable Set and Knapsack\thanks{Jamico Schade has been funded by the Deutsche Forschungsgemeinschaft (DFG, German Research Foundation) under project 451026932. Stefan Weltge has been supported by DFG under projects 451026932 and 277991500/GRK2201.}}
\titlerunning{MIP formulations for Stable Set and Knapsack}
\author{Jamico Schade\inst{1} \and Makrand Sinha\inst{2}\orcidID{0000-0001-5702-2049} \and Stefan Weltge\inst{1}\orcidID{0000-0002-0102-8326}}
\authorrunning{J. Schade et al.}
\institute{Technical University of Munich, \email{\{jamico.schade,weltge\}@tum.de} \and University of Illinois at Urbana-Champaign, \email{msinha@illinois.edu}}
\maketitle
\begin{abstract}
Standard mixed-integer programming formulations for the stable set problem on $ n $-node graphs require $ n $ integer variables.
We prove that this is almost optimal:
We give a family of $ n $-node graphs for which every polynomial-size MIP formulation requires $ \Omega(n/\log^2 n) $ integer variables.
By a polyhedral reduction we obtain an analogous result for $n$-item knapsack problems.
In both cases, this improves the previously known bounds of $ \Omega(\sqrt{n}/\log n) $ by Cevallos,~Weltge~\&~Zenklusen~(SODA 2018).

To this end, we show that there exists a family of $n$-node graphs whose stable set polytopes satisfy the following: any  $(1+\nicefrac{\varepsilon}{n})$-approximate extended formulation for these polytopes, for some constant $ \varepsilon > 0$, has size $2^{\Omega(n/\log n)}$.
Our proof extends and simplifies the information-theoretic methods due to G\"{o}\"{o}s,~Jain~\&~Watson~(FOCS 2016, SIAM~J.~Comput.~2018) who showed the same result for the case of exact extended formulations (i.e. $\varepsilon = 0$).

\keywords{mixed-integer programming \and stable set problem \and knapsack problem \and extended formulations.}
\end{abstract}

\section{Introduction}

Combinatorial optimization problems are often expressed by different formulations as mixed-integer programs (MIPs).
A simple example is given by the \emph{matching problem}, which is often described as
\[
    \max \left\{ c^\intercal x : x \in \Z^E_{\ge 0}, \, \sum \nolimits_{e \in \delta(v)} x(e) \le 1 \ \text{ for all } v \in V \right\},
\]
where $G = (V,E)$ is the complete undirected graph on $n$ nodes and $\delta(v)$ is the set of edges in $G$ that are incident to $v$.
This formulation is attractive in the sense that it only consists of a small number of linear constraints, which naturally reflect the definition of a matching.
However, it comes at the cost of $|E| = \Theta(n^2)$ integer variables.
Since the performance of algorithms for solving integer programs is much more sensitive in the number of integer variables than in the number of constraints, the question arises whether there are MIP formulations for the matching problem that use significantly fewer integer variables, yet a reasonable (say, polynomial in $n$) number of constraints.
Note that a formulation without any integer variables can be obtained by adding linear inequalities that completely describe the matching polytope of $G$, but this would require exponentially many constraints~\cite{edmonds1965maximum}.
However, there is a simple (lesser-known) linear-size MIP formulation for the matching problem that only uses $n$ integer variables:
If $D = (V,A)$ is a digraph that arises from $G$ by orienting its edges arbitrarily, a valid MIP formulation for the matching problem is
\[
    \max \left\{ c^\intercal x :
    x \in \R^A_{\ge 0}, \,
    y \in \Z^V, \,
    x(\delta(v)) \le 1 \text{ and }
    x(\delta^{\mathrm{in}}(v)) = y(v) \text{ for all } v \in V \right\},
\]
where $\delta^{\mathrm{in}}(v)$ denotes the set of arcs of $D$ that enter $v$, and $x(F) = \sum_{e \in F} x(e)$.
For the correctness of this formulation, see~\cite[Prop.~6.2]{Cevallos}.

As another example, consider the symmetric \emph{traveling salesman problem} over $G$.
Standard MIP formulations for this problem contain at least one integer variable for each edge, again resulting in $\Theta(n^2)$ integer variables.
However, it is possible to come up with MIP formulations for the traveling salesman problem that use only $O(n \log n)$ integer variables and still have polynomial size, see~\cite[Cor.~50]{arxiv}.
Other combinatorial optimization problems such as the \emph{spanning tree problem} even admit polynomial-size MIP formulations without any integer variables, so-called extended formulations (see, e.g., \cite{conforti2013extended,Kaibel}).

The above examples illustrate that it is usually not obvious how many integer variables are needed in small-size MIP formulations of combinatorial optimization problems.
Moreover, they refer to problems for which there exist polynomial-size MIP formulations, which use considerably fewer integer variables than the standard formulations.
In this work, we consider two prominent combinatorial optimization problems, for which such formulations are not known.
The first is the \emph{stable set problem} over a general undirected $n$-node graph $G = (V,E)$, which is usually described as
\begin{equation}
    \label{eqMIPStableSet}
    \max \left\{ c^\intercal x : x \in \{0,1\}^V, \, x(v) + x(w) \le 1 \text{ for all } \{v,w\} \in E \right\},
\end{equation}
and the second is the \emph{knapsack problem}, typically given by
\begin{equation}
    \label{eqMIPKnapsack}
    \max \left\{ c^\intercal x : x \in \{0,1\}^n, \sum \nolimits_{i=1}^n a(i) x(i) \le \beta \right\},
\end{equation}
where $a \in \R^n$ are given item sizes and $\beta \in \R$ is the given capacity.
Both (standard) formulations have $n$ integer variables, and no polynomial-size MIP formulations with $o(n)$ integer variables are known.
Our main motivation for considering these two problems is that Cevallos, Weltge \& Zenklusen~\cite{Cevallos} proved that the number of integer variables in the aforementioned MIP formulations for the matching problem and the traveling salesman problem is optimal up to logarithmic terms, while an almost quadratic gap remained for the case of the stable set problem and the knapsack problem.

To address this claim formally, let us specify what we mean by a MIP formulation for a combinatorial optimization problem.
Here, we consider a combinatorial optimization problem as a pair $(\CV,\CF)$ where $\CV$ is a finite ground set and $\CF$ is a family of (feasible) subsets of $\CV$.
Given weights of $w : \CV \to \R$, the goal is to find a set $S \in \CF$ maximizing $w(S) = \sum_{v \in S} w(v)$.
Now, a MIP formulation for $(\CV,\CF)$ is defined as follows.
First, its feasible region $\Gamma$ should only depend on $(\CV,\CF)$ (and not on the weights to be maximized).
Second, $\Gamma$ should be described by linear inequalities and equations, and a subset of variables that is constrained to integer values.
To this end, we represent $\Gamma$ by a polyhedron $Q \subseteq \R^d$ and an affine map $\sigma : \R^d \to \R^k$ by setting
$
    \Gamma = \Gamma(Q,\sigma) = \{ x \in Q : \sigma(x) \in \Z^k \}
$.
Third, we want to identify each feasible subset $S$ with a point $x_S \in \Gamma$.
However, we do not require the variables to be directly associated with the elements of the ground set, and in particular allow $d \ne |\CV|$.
Finally, node weights should translate to (affine) linear objectives in a consistent way:
We require that for each $w : \CV \to \R$ there is an affine map $c_w : \R^d \to \R$ such that the weight of every feasible set $S$ satisfies $w(S) = c_w(x_S)$.
Note that $\Gamma$ does not necessarily only contain the points $x_S$.
However, when maximizing $c_w$ over $\Gamma$, we require that the optimum is still attained in a point $x_S$, i.e., $\max\{c_w(x) : x \in \Gamma\} = \max\{c_w(x_S) : S \in \CF\} = \max\{w(S) : S \in \CF\}$.
Notice that $\max\{c_w(x) : x \in \Gamma\}$ can be formulated as a MIP.

If $(Q,\sigma)$ satisfies the above properties, we say that it is a MIP formulation for $(\CV,\CF)$.
We define the size of $(Q,\sigma)$ to be the number of facets of $Q$ (number of linear inequalities needed to describe $Q$) and say that it has $k$ integer variables.

With this notion, it is shown\footnote{Actually, our definition of a MIP formulation slightly differs from the notion in \cite{Cevallos}. However, both definitions are equivalent, see \Cref{lemMIPdef}.} in~\cite{Cevallos} that every subexponential-size MIP formulation for the matching problem or traveling salesman problem has $\Omega(\nicefrac{n}{\log n})$ integer variables.
Moreover, they proved that there exist $n$-node graphs and $n$-item instances for which any subexponential-size MIP formulation for the stable set problem or the knapsack problem, respectively, has $\Omega(\nicefrac{\sqrt{n}}{\log n})$ integer variables.
In this work, we close this gap and show that, for these two problems, the standard MIP formulations \eqref{eqMIPStableSet} and \eqref{eqMIPKnapsack} already use an (up to logarithmic terms) optimal number of integer variables:

\begin{theorem}
    \label{thmMain}
    There is a constant $c > 0$ and a family of graphs (knapsack instances) such that every MIP formulation for the stable set (knapsack) problem of an $n$-node graph ($n$-item instance) in this family requires $\Omega(\nicefrac{n}{\log^2 n})$ integer variables, unless its size is at least $2^{cn/\log n}$.
\end{theorem}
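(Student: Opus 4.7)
The plan is to deduce the MIP lower bound from a size lower bound for approximate extended formulations of suitable stable set polytopes, following the reduction of Cevallos, Weltge \& Zenklusen but plugging in a stronger (approximate) EF lower bound.

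The first and main step is to prove that there is a family of $n$-node graphs $G$ and a constant $\varepsilon>0$ for which every $(1+\varepsilon/n)$-approximate extended formulation of the stable set polytope $\mathrm{STAB}(G)$ has size $2^{\Omega(n/\log n)}$. For the exact case ($\varepsilon=0$) this is exactly the Göös--Jain--Watson theorem, whose proof goes through the non-negative rank of a cleverly-chosen slack matrix via an information-theoretic protocol argument. I would extend their proof to the approximate setting by quantifying how a multiplicative perturbation of the slack entries by a factor $1+\varepsilon/n$ affects each step of the information-theoretic bound. The key robustness claim is that a $(1\pm\varepsilon/n)$-multiplicative perturbation of the non-negative factorization translates into an additive $O(\varepsilon)$ loss in mutual information per gadget, which is absorbed into the $\Omega$-constant once aggregated over the gadget composition. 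Establishing this robustness, and possibly simplifying the GJW gadget so that it admits such a quantitative robustness statement, is expected to be the main obstacle and is where I would focus first.

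The second step is the reduction from MIP to approximate EF. Given a MIP formulation $(Q,\sigma)$ of size $s$ with $k$ integer variables for the stable set problem on $G$, each slice $Q_z:=Q\cap\sigma^{-1}(z)$ is a polytope with at most $s$ facets, and the feasible region $\Gamma=\bigcup_{z\in\Z^k}Q_z$ maps under the canonical affine map $\Phi:\R^d\to\R^V$ (sending $x_S$ to $\chi_S$) to a set optimization-equivalent to $\mathrm{STAB}(G)$. A suitable affine preprocessing of $(Q,\sigma)$, using the approximation slack $\varepsilon/n$ to absorb rounding errors, confines the active integer vectors $z$ with $Q_z\ne\emptyset$ to $[-N,N]^k$ for some $N=\mathrm{poly}(n)$; disjunctive programming over this finite disjunction then yields a $(1+\varepsilon/n)$-approximate extended formulation of $\mathrm{STAB}(G)$ of size $s\cdot N^k = s\cdot n^{O(k)}$.

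Combining the two steps gives $s\cdot n^{O(k)}\ge 2^{\Omega(n/\log n)}$, so for $s\le 2^{cn/\log n}$ with $c>0$ sufficiently small we obtain $k=\Omega(n/\log^2 n)$, proving the stable set half of the theorem. The knapsack half follows by invoking the polyhedral reduction of Cevallos, Weltge \& Zenklusen: any $n$-node stable set instance can be encoded inside an $O(n)$-item knapsack polytope while preserving MIP size and integer-variable count up to constants, so the same lower bound transfers.
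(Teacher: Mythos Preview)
Your plan is essentially the paper's approach: establish the approximate extended-formulation lower bound (this is Theorem~\ref{thmEfStableSet}, proved by extending the G\"o\"os--Jain--Watson information-theoretic argument), combine it with the Cevallos--Weltge--Zenklusen MIP-to-approximate-EF reduction to get the stable-set bound, and then transfer to knapsack via a polyhedral reduction. Two small corrections: the $\varepsilon$ does not enter as a multiplicative perturbation of the slack entries but additively---the relevant partial slack matrix has entries $|\mathrm{Zeros}(x,y)|-1+\varepsilon$ (Lemma~\ref{lemK8sgduw}), and one bounds its nonnegative rank directly rather than tracking a perturbed factorization; and the knapsack reduction the paper uses is due to Pokutta and Van~Vyve (not Cevallos et al.), which requires the hard stable-set graphs to have $O(n)$ edges---a property the construction in Section~\ref{secinstances} satisfies.
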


We will mostly focus on proving the above result for the stable set problem.
In fact, using a known polyhedral reduction, we will show that it directly implies the result for the knapsack problem.
In order to obtain lower bounds on the number of integer variables in MIP formulations, the authors in~\cite{Cevallos} observed that every MIP formulation can be turned into an approximate extended formulation of the polytope $P(\CV,\CF) = \conv \{ \chi(S) : S \in \CF \}$, where $\chi(S) \in \{0,1\}^{\CV}$ is the characteristic vector of $S$.
In terms of the stable set problem over a graph $G$, $P(\CV,\CF)$ is the stable set polytope of $G$.
Here, an \emph{$\alpha$-approximate extended formulation of size $m$} of a polytope $P$ is (a linear description of) a polyhedron with $m$ facets that can be linearly projected onto a polytope $P'$ with $P \subseteq P' \subseteq \alpha P$.
For instance, it is shown in~\cite{Cevallos} that, for every $\eps > 0$, every MIP formulation for the stable set problem of size $m$ with $k$ integer variables can be turned into a $(1+\eps)$-approximate extended formulation of size $m \cdot (1+\nicefrac{k}{\eps})^{O(k)}$ of the stable set polytope of the corresponding graph (see \cite[Thm.~1.3 \& Lem.~2.2]{Cevallos}).
In light of this result, \Cref{thmMain} follows from the following result about approximate extended formulations which is the main contribution of this work:
\begin{theorem}
    \label{thmEfStableSet}
    There is some constant $\eps > 0 $ and a family of $n$-node graphs $H$ such that any $(1 + \nicefrac{\eps}{n})$-approximate extended formulation of the stable set polytope of $H$ has size $2^{\Omega(n/\log n)}$.
\end{theorem}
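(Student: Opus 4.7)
I plan to prove \Cref{thmEfStableSet} by combining an approximate version of Yannakakis's correspondence with a robust variant of the information-theoretic lower bound of G\"o\"os, Jain and Watson.

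First I would translate the statement into a lower bound on the nonnegative rank of matrices that are entrywise close to the slack matrix of the stable set polytope. If a $(1+\nicefrac{\eps}{n})$-approximate extended formulation of size $m$ exists for the stable set polytope $P$ of an $n$-node graph $H$, then its projection is a polytope $P'$ with $P \subseteq P' \subseteq (1+\nicefrac{\eps}{n})P$. For each facet $A_i x \le b_i$ of $P$, normalized so that $b_i \in [0,1]$, the tightened inequality $A_i x \le \max_{x \in P'} A_i x$ is a valid facet of $P'$, and its slack row on the vertices of $P$ lies entrywise between the corresponding row of the slack matrix $S_P$ and $S_P + \nicefrac{\eps}{n}$. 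Taking the submatrix of a slack matrix of $P'$ indexed by these rows and the vertices of $P$ yields a nonnegative matrix $M$ of nonnegative rank at most $m$ with $S_P \le M \le S_P + \nicefrac{\eps}{n}$. The goal thus reduces to lower-bounding the nonnegative rank of every such $M$.

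For the hard instances, I would use (a mild adaptation of) the graph family constructed by G\"o\"os, Jain and Watson for the exact case. Their slack matrix contains, after normalization, a $0/1$-valued block $B$ obtained from a hard two-party communication problem via a CSP-based reduction. The heart of their argument shows that every rank-$r$ nonnegative factorization of $B$ induces a distribution over combinatorial rectangles whose information content, under a carefully designed input distribution, would solve the embedded communication problem too cheaply unless $r = 2^{\Omega(n/\log n)}$.

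The main technical obstacle is extending this rank bound from an exact factorization of $B$ to the $\nicefrac{\eps}{n}$-additively perturbed matrix $M$ arising above. Since the nonzero entries of $B$ equal $1$ while zero entries can be perturbed by at most $\nicefrac{\eps}{n} < \nicefrac{1}{2}$, zeros and ones remain separated by a threshold, and any rank-$r$ nonnegative factorization of the perturbation decomposes into $r$ rank-one nonnegative terms whose cumulative mass on the zero set of $B$ is at most an $O(\nicefrac{\eps}{n})$-fraction of the total mass. The rectangle cover extracted from these rank-one terms therefore covers the support of $B$ up to leakage of relative mass $O(\nicefrac{\eps}{n})$. I would then re-run the G\"o\"os--Jain--Watson entropy/information computation with this noisy cover, arguing that the leakage contributes only a vanishing additive term to the information bound and hence preserves the asymptotic lower bound on $r$. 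I expect that phrasing the argument directly at the level of rectangle distributions, rather than passing through nondeterministic communication protocols as in the original proof, yields both the simplification promised in the abstract and makes transparent why the scaling $\nicefrac{\eps}{n}$ in the approximation parameter is the natural threshold: it is precisely small enough that leakage onto zeros cannot be confused with genuine ones in the hard block $B$.
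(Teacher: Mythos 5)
Your first step---converting a $(1+\nicefrac{\eps}{n})$-approximate extended formulation into a nonnegative-rank lower bound for a matrix close to the slack matrix---is aligned with the paper. But the route you take from there diverges in ways that create genuine gaps.

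The paper does not perturb a $0/1$ matrix and threshold. Instead it chooses a single fixed matrix $S$ with entries $S_{xy}=|\zeros(x,y)|-1+\eps$ (so entries lie in $\{\eps,2+\eps,4+\eps,\dots\}$, all strictly positive), and Lemma~\ref{lemK8sgduw} shows $S$ is simultaneously a partial slack matrix of \emph{every} admissible $P'$, because the validity constant $|V|-1$ inflates to $|V|-1+\eps$ under $(1+\nicefrac{\eps}{n})$-approximation. Your normalization claim is also off: if you really normalize facets to $b_i\in[0,1]$, the slacks of the relevant inequalities are $\tfrac{|\zeros(x,y)|-1}{|V|-1}=\Theta(1/n)$, so zeros perturbed to $\nicefrac{\eps}{n}$ and nonzeros at $\Theta(1/n)$ are separated only by a constant factor, not by a $\nicefrac12$ threshold. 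Working unnormalized (zeros become $\eps$, nonzeros stay $\geq 2$) is the fix, and is what the paper effectively does by building $\eps$ into $S$.

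The more serious gap is the plan to "extract a rectangle cover with small leakage and re-run the GJW entropy computation on the noisy cover.'' Neither GJW's proof nor this paper's is a rectangle-cover argument: both run the Braverman--Moitra common-information framework directly on the nonnegative factorization $S=\sum_r R^{(r)}$, treating it as a probability space with $q(x,y,r)\propto R^{(r)}_{xy}$. The support of every rank-one piece already covers the entire domain (since $S>0$), so there is no zero-set leakage to bound; what must be controlled is the contribution, inside a conditioned sub-probability-space, of events landing on $\eps$-slack entries. That is precisely Proposition~\ref{bigprop3}, which the paper flags as the hard part: when $\eps=0$ one can afford an exponentially loose bound, but for constant $\eps$ one needs a tight constant-factor estimate, obtained via the map $\sigma$ on $\CG_1$, a symmetric pairing argument, and the bound $\leq 16$ in \eqref{eqn:zeroevent}. "Leakage contributes only a vanishing additive term'' does not engage with any of this, and I do not see how the rectangle-cover abstraction retains the probability weights the information argument actually uses. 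Finally, the paper deliberately bypasses GJW's CSP/Tseitin reduction and constructs the graph $H$ and matrix $S$ explicitly from a routable sparse graph $G$; your proposal would reintroduce that detour.
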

Note that the statement above considers approximations with very small error. Lower bounds for approximate formulations with small error are known for several prominent polytopes that arise in combinatorial optimization such as the matching polytope (see Braun \& Pokutta~\cite{braun2014matching}, Rothvoß~\cite[Thm. 16]{rothvoss2017matching}, and Sinha~\cite{sinha2018lower}) or the cut polytope (see Braun, Fiorini, Pokutta \& Steurer~\cite{braun2015approximation} and \cite[\S 5.2]{arxiv}). For the stable set problem on the other hand, existing lower bounds apply to much larger (say, constant) errors but do not yield bounds that are close to exponential (e.g. \cite{bazzi2019no}). For this reason, the aforementioned lower bound on MIP formulations in \cite{Cevallos} did not directly rely on approximate extended formulations for stable set polytopes but instead was obtained by transferring results about the cut polytope (see Fiorini, Massar, Pokutta, Tiwary \& de Wolf~\cite[Lem. 8]{fiorini2015exponential}).

Unfortunately, reductions to the cut or matching polytopes can only show that the sizes of approximate extended formulations must be exponential in $\sqrt{d}$, where $d$ is the dimension of the corresponding polytope. For the case of the matching problem and the traveling salesman problem, we have $\sqrt{d} = O(n)$ and hence these bounds are sufficient to prove optimality of the aforementioned MIP formulations. However, in the case of the stable set problem, we have $\sqrt{d} = \sqrt{n}$, which is the reason why the authors of \cite{Cevallos} were only able to prove that subexponential-size MIP formulations for the stable set problem have $\Omega(\nicefrac{\sqrt{n}}{\log n})$ integer variables. We are able to circumvent this $\sqrt{d}$ bottleneck since we directly work with the stable set polytopes given by \Cref{thmEfStableSet}: here any extended formulation must be of size exponential in the dimension $d$ (ignoring logarithmic factors). 

We prove \Cref{thmEfStableSet} by extending and simplifying the techniques of Göös, Jain \& Watson~\cite{Goos} who proved \Cref{thmEfStableSet} for the case of \emph{exact} ($\eps = 0$) extended formulations.
They were inspired by connections to Karchmer-Wigderson games made in \cite{Hrubes}. By relying on reductions from certain constraint satisfaction problems (CSPs) that arise from Tseitin tautologies, they constructed a family of stable set polytopes that they showed have $2^{\Omega(n/\log n)}$ extension complexity.
Their proof uses information complexity arguments building on the work of Huynh and Nordstr\"om~\cite{HN12} and is significantly involved.
Furthermore, the proof departs from other approximate extended formulations lower bounds for the cut and matching polytopes, which can be obtained by a fairly unified framework (see \cite[Ch. 12]{RY20}).

Our proof is still based on the main ideas of \cite{Goos} but extending it to the approximate case is quite involved. To prove our lower bound, we follow the common information framework introduced by \cite{BM13} and further developed in \cite{BP16,BP16a,sinha2018lower}, mostly
in the interest of bridging the gap to previous proofs for other approximate extended formulation lower bounds.
Along the way, we simplify several parts of the information-theoretic arguments in \cite{Goos} and also show that the family of stable set polytopes used in \Cref{thmEfStableSet} are in fact simple to describe explicitly (without going through CSP reductions).
Separately from this, as mentioned before, we show that this also implies a lower bound for knapsack MIP formulations via a standard polyhedral reduction.

We remark that since information complexity arguments are typically robust to approximations, we believe that with some amount of work, the approach in \cite{Goos} can be made to yield \Cref{thmEfStableSet} above.
However, our proof is very much in line with the previous lower bounds for the cut and matching polytope which ultimately  reduce the problem to understanding the \emph{nonnegative rank} of a certain matrix called the \emph{unique disjointness} matrix, possibly through a randomized reduction. Our proof in fact suggests that one may be able to prove all of these lower bounds --- for matching, stable set, and knapsack polytopes --- in a unified way, via a randomized reduction to unique disjointness. We leave this as an interesting open question for follow-up work.

\paragraph{Structure.}
We start by describing a family of graphs $H$ for which \Cref{thmEfStableSet} holds in \Cref{secinstances}.
Moreover, we derive a matrix $S$ whose nonnegative rank will give a lower bound on the size of any $(1+\frac{\eps}{n})$-approximate extended formulation of the stable set polytope of $H$.
Our proof strategy for obtaining a lower bound on the nonnegative rank of $S$ is explained in \Cref{seclowbound}.
Notions and statements on information theory as well as technical details of the main proof are presented in \cref{secPrelim} and \cref{secProofKeyThm}.
\section{Instances}
\label{secinstances}
\subsection{Graph family}

The graphs $H$ in the statement of \Cref{thmEfStableSet} will arise from a family of sparse graphs $G = (V,E)$ with an \emph{odd} number of nodes and a certain connectivity property to be defined later.

To define $H = H(G)$, let us fix a set of colors $\CX = \bits^3$.
Each node $v \in V$ is lifted to several copies, so that each copy corresponds to a different coloring of the edges incident to $v$, i.e., the nodes of $H$ are the pairs $(v,x^v)$ where $v \in V$ and $x^v$ denotes a coloring of the incident edges of $v$ with colors in $\CX$.

In $H$, all copies of a single node $v$ form a clique $C_v$.
Moreover, if $v,w \in V$ are connected by an edge $e \in E$, then we also draw an edge between copies of $v$ and $w$ in $H$ if they label $e$ with a different color, see \Cref{figGraphOutline}.

\begin{figure}
    \centering
    \includegraphics[width=\textwidth]{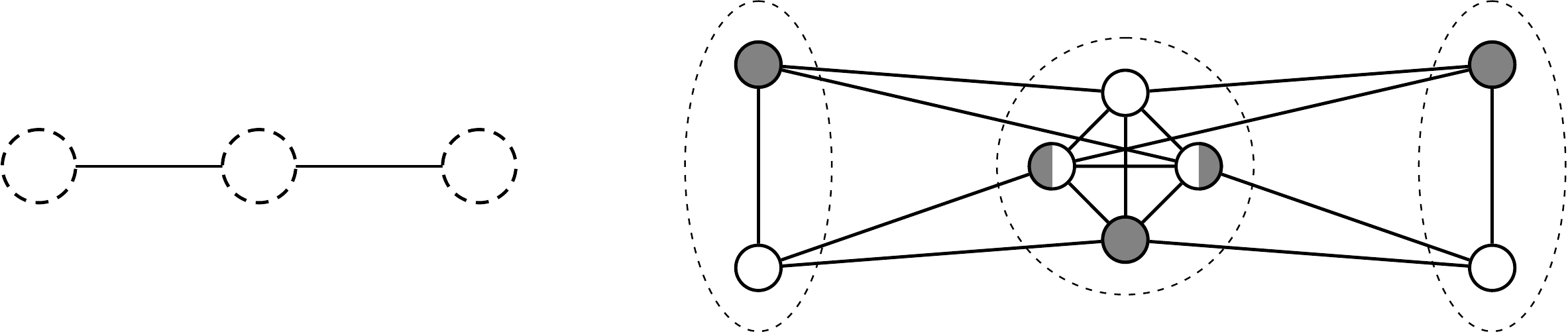}
    \caption{Construction of the graph $H$ (right), where $G$ is a path on three nodes (left) and $\CX$ consists of only two colors (white and gray).}
    \label{figGraphOutline}
\end{figure}

Note that stable sets in $H$ can be obtained in a simple way:
Pick any (global) coloring $x \in \CX^E$ of the edges of $G$.
Now, for each node $v \in V$, select the (unique) node in $C_v$ that colors the edges incident to $v$ according to $x$.
This yields a maximal stable set in $H$.
In fact, every maximal stable set in $H$ arises in this way.

Moreover, observe that if $G$ has constant degree, then the number of nodes and the number of edges of $H$ are both linear in $|V|$ (since $\CX$ has constant size).

\subsection{Nonnegative rank and partial slack matrices}

Let $P$ denote the stable set polytope of the graph $H$ and consider any polytope $P'$ with $P \subseteq P' \subseteq (1+\frac{\eps}{n}) P$.
A common approach to obtain a lower bound on the number of facets of $P'$ is based on considering partial slack matrices of $P'$.
A \emph{partial slack matrix} of $P'$ is a (nonnegative) matrix $S$, where each row $i$ corresponds to some point $x^i \in P'$ and each column $j$ corresponds to a linear inequality $a_j^\intercal x \le b_j$ that is satisfied by all $x \in P'$, such that each entry $S_{ij}$ is equal to the \emph{slack} of $x^i$ with respect to $a_j^\intercal x \le b_j$, i.e., $S_{ij} = b_j - a_j^\intercal x^i$.
From the seminal paper~\cite{yannakakis1988expressing} of Yannakakis it follows that the size of any extended formulation of $P$ is at least the \emph{nonnegative rank} of $S$, which is the smallest number $r_+$ such that $S$ can be written as the sum of $r_+$ nonnegative rank-$1$ matrices.

\subsection{Gadget and a particular matrix}

Setting $\CY = \bits^3$, we will consider a particular matrix $S = S(G, \eps)$ whose rows and columns are indexed by vectors $x \in \CX^E$ and $y \in \CY^E$, respectively.
The entries of $S$ are not only based on $G,\eps$ but also on a \emph{gadget function} $g: \CX \times \CY \to \bits$ defined via
\begin{equation}
    \label{eqGadget}
    g((x_1,x_2,x_3),(y_1,y_2,y_3)) = x_1 + y_1 + x_2y_2 + x_3y_3 \pmod 2.
\end{equation}
Given a $x \in \CX^E,y \in \CY^E$, one can apply this gadget to obtain a bit string in $\bits^E$ that labels each edge in $E$.
Summing up the parity of each edge incident on $v$ induces a \emph{parity on a node} defined as
\[ \parity(v) = \sum \nolimits_{e \in \delta(v)} g(x_e,y_e) \mod 2.\]
We set 
\[
    \zeros(x,y) = \left\{v \in V : \parity(v) = 0 \right\}
\]
to be the set of nodes with parity zero.
With this, the entries of $S \in \R^{\CX^E \times \CY^E}$ are given by
\begin{equation}
    \label{eqWednesday}
    S_{xy} = |\zeros(x,y)| - 1 + \eps.
\end{equation}
Note that $\sum_{v \in V} \parity(v)$ is always even since the bit label of each edge is summed twice.
This implies that $|V \setminus \zeros(x,y)|$ is even for each $x \in \CX^E,y \in \CY^E$, and since the number of nodes $|V|$ is odd, it follows that $|\zeros(x,y)|$ is always odd and hence $|\zeros(x,y)| \ge 1$.
In particular, we see that every entry of $S$ is positive.

\begin{lemma}
    \label{lemK8sgduw}
    $S$ is a partial slack matrix of every polytope $P'$ with $P \subseteq P' \subseteq (1+\nicefrac{\eps}{|V(H)|}) P$, where $P$ is the stable set polytope of $H$.
\end{lemma}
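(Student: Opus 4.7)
The plan is to exhibit, for each row $x \in \CX^E$, a point $p_x \in P'$, and for each column $y \in \CY^E$, an inequality valid on $P'$ whose slack at $p_x$ equals exactly $S_{xy}$. For the row $x$, I take $p_x := \chi(T_x) \in P \subseteq P'$, where $T_x$ is the maximal stable set consisting of the copy $(v, x|_v)$ for each $v \in V$, as described above.

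For the column $y \in \CY^E$, I propose the inequality that sums the indicator variables $z_{(v, x^v)}$ over all pairs $(v, x^v)$ such that the local parity $\sum_{e \in \delta(v)} g(x^v_e, y_e)$ is odd, and bounds this sum by $|V| - 1 + \eps$. Evaluated at $p_x$, the left-hand side counts the number of $v \in V$ with $\parity(v) = 1$, which equals $|V| - |\zeros(x, y)|$, so the slack is
\begin{equation*}
    (|V| - 1 + \eps) - (|V| - |\zeros(x,y)|) \;=\; |\zeros(x,y)| - 1 + \eps \;=\; S_{xy},
\end{equation*}
as required.

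The key step is to verify validity on $P'$. I first argue the tighter inequality $\sum \le |V| - 1$ on $P$ itself: for any stable set $T \subseteq V(H)$, each $v$ contributes at most $1$ to the sum since $C_v$ is a clique in $H$. If $T$ misses some clique $C_v$, the total is trivially at most $|V| - 1$. Otherwise $T$ picks exactly one copy per $v$, and since $T$ is stable the $H$-edges between copies of adjacent $G$-nodes force the local colorings to agree on every shared $G$-edge, so $T = T_{x'}$ for a global $x' \in \CX^E$. In that case the sum equals $|V| - |\zeros(x', y)|$, which is \emph{even} because $\sum_v \parity(v) \equiv 0 \pmod 2$; since $|V|$ is odd, this value cannot equal $|V|$ and is therefore at most $|V| - 1$. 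Given validity on $P$, the containment $P' \subseteq (1 + \nicefrac{\eps}{|V(H)|}) P$ upgrades the bound to $(|V| - 1)(1 + \nicefrac{\eps}{|V(H)|}) \le |V| - 1 + \eps$ on $P'$, where the last inequality uses $|V| - 1 \le |V(H)|$. The only subtle point in the argument is this parity/oddness step, which is precisely why the construction insists on an odd $|V|$.
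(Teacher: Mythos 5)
Your proposal is correct and follows essentially the same approach as the paper: you choose the same points $p_x = \chi(T_x)$ (the paper calls them $z^x = \chi(S_x)$), the same inequality $\sum_{(v,x^v)\in U_y} z_{(v,x^v)} \le |V|-1+\eps$, the same slack computation, and the same two-step validity argument (first the exact bound $|V|-1$ on $P$, then the relaxed bound on $P' \subseteq (1+\nicefrac{\eps}{|V(H)|})P$). The only cosmetic difference is that you re-derive the parity/oddness fact $|V|-|\zeros(x',y)| \le |V|-1$ inline, whereas the paper establishes $|\zeros(x,y)| \ge 1$ just before the lemma and invokes it; both rest on the same observation that $\sum_v \parity(v)$ is even while $|V|$ is odd.
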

\begin{proof}
To show that $S$ is a partial slack matrix of $P'$, we have to define a point $z^x \in P'$ for every $x \in \CX^E$ and associate a linear inequality to every $y \in \CY^E$ that is valid for $P'$ and such that the slack of $z^x$ with respect to the inequality associated to $y$ is equal to $S_{xy}$.
For $x \in \CX^E$ we define the set
\[
    S_x = \{ (v,x|_{\delta(v)}) : v \in V \}
\]
and denote by $z^x \in \{0,1\}^{V(H)}$ the characteristic vector of $S_x$.
Note that $S_x$ is a stable set in $H$ and hence $z^x \in P \subseteq P'$.
For $y \in \CY^E$ we define the set
\[
    U_y = \left\{ (v,x) \in V(H) : \sum \nolimits_{e \in \delta(v)} g(x_e,y_e) \text{ is odd}\right\}
\]
and consider the linear inequality
\begin{equation}
    \label{eq43z3g8} \sum \nolimits_{u \in U_y} z_u \le |V| - 1 + \eps.
\end{equation}
Note that the slack of $z^x$ with respect to this inequality is equal to
\begin{align*}
    |V| - 1 + \eps - \sum \nolimits_{u \in U_y} z_u
    & = |V| - 1 + \eps - |S_x \cap U_y| \\
    & = |V| - 1 + \eps - (|V| - |\zeros(x,y)|) = S_{xy}.
\end{align*}
Thus it remains to show that \eqref{eq43z3g8} is satisfied by every $z \in P'$.
To see this, we first claim that every stable set $S$ in $H$ satisfies $|S \cap U_y| \le |V| - 1$.
Indeed, if $ |S| \le |V| - 1 $, then the claim is trivial.
Otherwise, it is easy to see that $S = S_x$ for some $x \in \CX^E$ and hence $|S_x \cap U_y| = |V| - |\zeros(x,y)| \leq |V| - 1$.
This means that the linear inequality $ \sum \nolimits_{u \in U_y} z_u \le |V| - 1 $ is satisfied by every point $z \in P$.
Since $P' \subseteq (1+\nicefrac{\eps}{|V(H)|}) P$, every point $z \in P'$ must satisfy
\[ \sum \nolimits_{u \in U_y} z_u \le \left(1 + \tfrac{\eps}{|V(H)|}\right)(|V| - 1) = |V| - 1 + \frac{\eps}{|V(H)|} \cdot (|V| - 1) \leq |V| - 1 + \eps, \]
which yields~\eqref{eq43z3g8}. \qed
\end{proof}

\subsection{Choice of $G$}

We will see that, for particular choices of $G$, the nonnegative rank of $S$ is large.
To this end, we say that a graph $G$ is called \emph{$(2k+3)$-routable} if there exists a subset of $2k+3$ nodes, called \emph{terminals}, such that for every partition of the terminals into a single terminal and $k+1$ pairs $(v_0,w_0),\dots,(v_k,w_k)$ of terminals there exist edge-disjoint paths $P_0,\dots,P_k$ such that $P_i$ connects $v_i$ and $w_i$.
Note that each $P_i$ depends on the entire partition and not only on the pair $(v_i,w_i)$.

Our main result is the following.
\begin{theorem}
    \label{thmNonnegRank}
    There is some $\eps > 0$ such that if $G$ is $(2k+3)$-routable, then the nonnegative rank of $S = S(G,\eps)$ is $2^{\Omega(k)}$.
\end{theorem}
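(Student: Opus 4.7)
The plan is to set up the common information framework for nonnegative rank lower bounds and combine it with a randomized direct-sum reduction enabled by routability. Suppose for contradiction that $S = \sum_{\ell=1}^{r} f_\ell g_\ell^\intercal$ with each $f_\ell, g_\ell \ge 0$ and $r \le 2^{\delta k}$ for a small absolute constant $\delta > 0$. Writing $W = \sum_{x,y} S_{xy}$, I introduce joint random variables $(T, X, Y)$ where $T = \ell$ with probability $p_\ell = \|f_\ell g_\ell^\intercal\|_1/W$ and $(X, Y) \mid T = \ell$ follows the product distribution proportional to $f_\ell g_\ell^\intercal$. Then $\log r \ge H(T) \ge I(T ; XY)$ and the marginal of $(X, Y)$ under this process is exactly $\mu(x,y) = S_{xy}/W$, so it suffices to prove $I(T; XY) = \Omega(k)$.

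To lower bound $I(T; XY)$, I would use $(2k+3)$-routability to plant $k+1$ essentially independent ``parity instances'' inside $\mu$. Fix a uniformly random partition $\pi$ of $2k+2$ terminals into pairs $(v_0^\pi, w_0^\pi), \dots, (v_k^\pi, w_k^\pi)$ with one leftover terminal, and let $P_0^\pi, \dots, P_k^\pi$ be the edge-disjoint paths guaranteed by routability. The gadget \eqref{eqGadget} is the standard inner-product-plus-XOR gadget: for fixed $x_{e,1}, y_{e,1}$ and uniform $x_{e,2}, x_{e,3}, y_{e,2}, y_{e,3}$ the bit $g(x_e, y_e)$ equals a uniform XOR of $x_{e,1}$ and $y_{e,1}$, yet any targeted parity pattern along a path can be installed by choosing the remaining coordinates. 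Routability then lets me vary which parity pattern is planted by varying $\pi$, so that each pair $(v_i^\pi, w_i^\pi)$ contributes a ``hidden bit'' $Z_i$ — a low-complexity function of $(X, Y)$ determined by the edge-bits $g(X_e, Y_e)$ for $e \in P_i^\pi$ — whose recovery forces $T$ to carry information.

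A chain-rule / direct-sum argument in the style of \cite{BM13,BP16,BP16a,sinha2018lower} then yields
\[
    I(T ; XY) \;\ge\; \sum_{i=0}^{k} I(T \,;\, Z_i \mid Z_0, \dots, Z_{i-1}),
\]
and averaging over the uniformly random partition $\pi$ I would show that each summand is $\Omega(1)$ on average, by arguing that each path carries a randomized copy of a unique-disjointness / inner-product-type hardcore for which such a lower bound is standard. Summing over $i$ produces $I(T; XY) = \Omega(k)$ and hence a contradiction with $\log r \le \delta k$.

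The main obstacle is extending the exact-case reasoning of \cite{Goos} — which crucially uses that every rank-$1$ rectangle is \emph{exactly zero} on inputs with $|\zeros| = 1$ — to the approximate setting, where every entry of $S$ is at least $\eps$ and rectangles may have full support. One must replace the combinatorial rectangle-cover / fooling-set arguments of \cite{Goos} by a quantitative soft-corruption bound: rather than asserting that rectangles avoid a target set of inputs, one must show that any rectangle contributing non-negligibly to the $\mu$-mass on structured inputs already essentially fixes the hidden bits $Z_i$. Verifying this soft corruption bound for the specific gadget $g$ and checking that it composes cleanly across the $k+1$ edge-disjoint paths given by routability is the technical heart of the argument and is the main source of additional difficulty over both \cite{Goos} and the analogous approximate lower bounds for matching and cut polytopes.
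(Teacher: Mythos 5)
Your opening framework matches the paper's: you invoke the common-information setup (the paper's \Cref{prop:nrank}), observe that it suffices to lower bound the mutual information between the ``seed'' variable and $(X,Y)$, and use routability to obtain $k+1$ edge-disjoint paths over which to apply a direct-sum argument. You also correctly identify the central obstacle in going from the exact case of G\"o\"os, Jain \& Watson to the approximate case: in the exact case rank-one rectangles are literally zero on inputs with $|\zeros(x,y)|=1$, whereas here every entry is at least $\eps$, so a purely combinatorial corruption/fooling-set argument cannot work.

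However, there is a genuine gap in the technical heart of the argument, and the mechanism you sketch is not the one that carries the paper's proof. First, the paper does not apply a bare chain rule $I(T;XY)\ge\sum_i I(T;Z_i\mid Z_{<i})$ over hidden path-parity bits $Z_i$; instead it introduces a crucial extra layer of conditioning on a random partition $T$ of the terminals, a random \emph{window} $W_e$ for every edge, and the event $\CD$ forcing $\zeros(X,Y)$ to equal the three distinguished terminals. Only after this conditioning do the per-path marginals $X^iY^i$ become mutually independent (\Cref{clmweee}), which is what makes superadditivity \eqref{eqhd9s02k3} give a clean sum of per-path terms. Your sketch has no analogue of the window variables and the event $\CD$, and it is precisely the ``stretched AND/NAND'' structure of windows (Claims \ref{clmwindowa}, \ref{clmwindowb}) that turns each path into a two-bit AND instance whose output value $\{3,5\}$ for $|\zeros(X,Y)|$ drives the final contradiction. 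Without that mechanism it is unclear how to make the per-term quantities well defined, independent, and analyzable.

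Second, you assert that each per-path term is $\Omega(1)$ because ``each path carries a randomized copy of a unique-disjointness / inner-product-type hardcore for which such a lower bound is standard.'' The paper explicitly does \emph{not} give such a reduction; it states that a unified randomized reduction to unique disjointness for stable set, matching and knapsack is an interesting \emph{open question}. What the paper actually proves is \Cref{thmLowerBoundDi}: it defines $\gamma^4 = \Infc[p]{R}{X^iY^i}{TW,\CD_i}$, shows via Pinsker and a random-walk argument on the gadget graph (\Cref{lem:graphtheory}, \Cref{lemshouldntbeclaim}) that almost all mass is on ``good'' triples where $A,B$ look uniform, partitions the good event into $\CG_1\cup\CG_2$, bounds $\CG_2$ using the fact that a $2$-intersecting family in $\binom{[5]}{3}$ has size at most $4$ (Erd\H{o}s--Ko--Rado), bounds $\CG_1$ by a careful constant-factor comparison to the probability of the $|\zeros(X,Y)|=1$ event (which has weight $O(\eps)$), and concludes that $\tfrac{2+\eps}{10+4\eps}\le \tfrac{2}{5}\cdot\tfrac{4+\eps}{10+4\eps}+O(\eps+\gamma)$, forcing $\gamma=\Omega(1)$ once $\eps$ is small. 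None of this machinery --- the good-triple decomposition, the intersecting-family bound, or the quantitative $O(\eps)$ relation for $\CG_1$ --- appears in your proposal, and your ``soft corruption bound'' is named but not constructed. As it stands, the proposal correctly frames the problem but leaves the $\Omega(1)$ per-path bound, which is where all the work and all the novelty lie, unsubstantiated.
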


To obtain \Cref{thmEfStableSet} we choose $G = (V,E)$ to be any \emph{constant-degree} $(2k+3)$-routable graph with an odd number of nodes, and $k = \Theta(|V| / \log |V|)$. An infinite family of such graphs can be obtained by taking any sufficiently strong constant-degree expander (e.g. a Ramanujan graph) (\cite{frieze2001edge,frieze2000optimal}, see also \cite[\S15]{nordstrom2009new}). Recall that $H$ has $\Theta(|V|)$ nodes (since $G$ has constant degree).
\section{Proof strategy}
\label{seclowbound}
\subsection{Nonnegative rank and mutual information}

In order to prove \Cref{thmNonnegRank}, we follow an information-theoretic approach 
introduced by Braverman and Moitra~\cite{BM13} that has already been used in previous works on extended formulations~(c.f.~\cite{braun2014matching,BP16,sinha2018lower}).
To this end, suppose that the nonnegative rank of $S$ is $r_+$, in which case we can write $S = \sum_{r=1}^{r_+} R^{(r)}$ for some nonnegative rank-$1$ matrices $R^{(1)},\dots,R^{(r_+)}$.
Consider the discrete probability space with random variables $X,Y,R$ and distribution $q(X=x, Y=y, R=r) = \nicefrac{R^{(r)}_{xy}}{\|S\|_1}$, where $x$ and $y$ range over the rows and columns of $S$, respectively, and $r \in [r_+]$.
Notice that the marginal distribution of $X,Y$ is given by the (normalized) matrix $S$, i.e., $q(X=x, Y=y) = \nicefrac{S_{xy}}{\|S\|_1}$. Moreover, since each $R^{(r)}$ is a rank-$1$ matrix, we see that $X,Y$ are independent when conditioned on $R$. Thus, we obtain the following proposition.

\begin{proposition}\label{prop:nrank}
    If the nonnegative rank of $S$ is $r_+$, then there is a random variable $R$ with $|\supp(R)|=r_+$, such that $X$ and $Y$ are independent given $R$ and the marginal distribution of $X$ and $Y$ is given by the normalized matrix $S$.
\end{proposition}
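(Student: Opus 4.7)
The plan is to unwind the definition of the joint distribution $q$ introduced just before the proposition and verify each of the three claimed properties in turn: that $q$ really is a probability distribution, that it has the right $(X,Y)$-marginal, that $X \perp Y \mid R$, and finally that one can arrange $|\supp(R)| = r_+$.

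First I would check the trivial part. Nonnegativity of $q$ is immediate because each $R^{(r)}$ is a nonnegative matrix. Summing over $x,y,r$ gives $\sum_{r}\sum_{x,y} R^{(r)}_{xy}/\|S\|_1 = \|S\|_1/\|S\|_1 = 1$, so $q$ is a probability distribution. Marginalizing $R$ out, $q(X=x,Y=y) = \sum_r R^{(r)}_{xy}/\|S\|_1 = S_{xy}/\|S\|_1$, which is the desired marginal.

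The heart of the argument is conditional independence, and this is where the rank-$1$ hypothesis is used. Writing each summand as $R^{(r)} = u^{(r)} (v^{(r)})^\intercal$ with $u^{(r)}, v^{(r)} \ge 0$, the conditional distribution factors:
\[
    q(X=x, Y=y \mid R=r) \;=\; \frac{u^{(r)}_x v^{(r)}_y}{\bigl(\sum_{x'} u^{(r)}_{x'}\bigr)\bigl(\sum_{y'} v^{(r)}_{y'}\bigr)} \;=\; q(X=x \mid R=r)\cdot q(Y=y \mid R=r),
\]
whenever $q(R=r)>0$. So $X$ and $Y$ are independent conditioned on $R$.

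Finally, for the support size, note that any $r$ with $R^{(r)}$ the zero matrix contributes nothing to the sum and can be discarded, so we may harmlessly assume every $R^{(r)}$ is nonzero; then $q(R=r) > 0$ for each $r$ and $|\supp(R)| = r_+$. Conversely, if one could achieve $|\supp(R)| < r_+$ in the above construction, the corresponding decomposition would witness a smaller nonnegative rank, contradicting the definition of $r_+$. I do not anticipate any real obstacle here: the proposition is essentially a direct restatement of the rank-$1$ factorization in probabilistic language, and the only subtlety is the minor bookkeeping step of discarding vanishing summands to match the support size exactly.
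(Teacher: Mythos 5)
Your proof is correct and takes the same route as the paper's one-paragraph argument preceding the proposition: define $q(x,y,r)=R^{(r)}_{xy}/\|S\|_1$ from the nonnegative rank-$1$ decomposition, verify the marginal, and observe that the rank-$1$ factorization $R^{(r)}=u^{(r)}(v^{(r)})^\intercal$ gives the conditional product form. The only cosmetic difference is that you spell out the support-size bookkeeping (which is automatic, since a rank-$1$ matrix is nonzero by definition and a zero summand would contradict minimality of $r_+$), while the paper leaves this implicit.
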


Note that the above also implies that $R$ breaks the dependencies between $X$ and $Y$ even if we further condition on any event in the probability space that is a \emph{rectangle}, i.e. any event where $(X,Y) \in \CS \times \CT$ where $\CS \subseteq \CX^E$ and $\CT \subseteq \CY^E$ are subsets of rows and columns respectively.

To prove a lower bound on the nonnegative rank, we shall show that if the support of $R$ was too small, then the probability of certain events in the probability space would be quite different from that given by the distribution $q(X,Y)$ above. In order to do this, instead of directly bounding the size of the support of $R$, it will be more convenient for us to work with certain information-theoretic quantities that give a lower bound on the (logarithm of) the size of the support of $R$:

For a random variable $A$ over a probability space with distribution $p$, we denote the binary \emph{entropy} of $A$ (with respect to $p$) by $\BH_p(A)$.
For any two random variables $A$ and $B$, the entropy of $A$ conditioned on $B$ is defined as $\BH_p(A|B) = \BE_{\cp(b)}[\BH_p(A|b)]$, where $\Ex{p(b)}{f(b)}$ denotes the expected value of a function $f(b)$ under the distribution $p(B)$.
Given an event $\CW$, we write $\BH_p(A|B,\CW) = \BH_q(A|B)$ where $q = p(\cdot \mid \CW)$ arises from $p$ by conditioning on the event $\CW$.
The \emph{mutual information} between $A,B$ is defined as
$\Infp{A}{B}{p} = \BH_p(A) - \BH_p(A|B)$.
Further, the \emph{conditional mutual information} is defined as $\Infc[p]{A}{B}{C} = \BH_p(A|C) - \BH_p(A|BC)$.
Finally, given an event $\CW$, we define $\Infc[p]{A}{B}{C,\CW} = \Infc[q]{A}{B}{C}$, where $q = p(\cdot \mid \CW)$.
We will elaborate on the notation as well as the essential properties of the above quantities in \cref{secPrelim}.

Given $q,X,Y,R$ as in \Cref{prop:nrank}, in what follows we will introduce some further specific random variables $T,W$, and an event $\CD$ that satisfy the following.

\begin{theorem}
    \label{thm99dhsuao}
    There is some $\eps > 0$ such that $\Infc[q]{R}{XY}{TW,\CD} = \Omega(k)$.
\end{theorem}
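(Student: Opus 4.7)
The plan is a direct-sum argument that exploits the $(2k+3)$-routability of $G$. First, let $T$ be a uniformly random partition of the $2k+3$ terminals into one distinguished terminal $v^\star$ and $k+1$ unordered pairs $(v_0,w_0),\dots,(v_k,w_k)$; by routability, fix edge-disjoint paths $P_0,\dots,P_k$ with $P_i$ joining $v_i$ and $w_i$ as a deterministic function of $T$. Let $F = E \setminus \bigcup_i E(P_i)$ be the edges lying outside every path, set $W = (X_F, Y_F)$, and let $\CD$ be the event that $\parity(u) = 0$ for every $u \in V \setminus \{v^\star, v_0, w_0,\dots,v_k, w_k\}$ together with $\parity(v_i) = \parity(w_i) = 0$ for $i \ge 1$, so that the only node allowed to carry odd parity lies in $\{v^\star, v_0, w_0\}$. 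This is the natural analogue of the parity-indicator events underlying the Tseitin-based argument of~\cite{Goos}.

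The core structural property is factorization: conditional on $T$, $W$, and $\CD$, the blocks $(X_{E(P_i)}, Y_{E(P_i)})$ for different $i$ are mutually independent, because the paths are edge-disjoint, $W$ fixes the off-path labels, and every parity constraint in $\CD$ involves a node lying on at most one path. The marginal on path $i$ is the distribution induced by $|\zeros|-1+\eps$ conditioned on prescribed parities at the interior nodes of $P_i$, so the problem factorizes into $k+1$ essentially independent XOR-gadget chain problems, one per path.

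Next I would combine this with \Cref{prop:nrank}: conditional independence of $X,Y$ given $R$ is preserved under conditioning on any rectangle, and in particular under $(T=t,W=w,\CD)$. The chain rule of mutual information then gives
\begin{equation*}
    \Infc[q]{R}{XY}{TW,\CD}
    \;\ge\; \sum_{i=0}^{k} \Infc[q]{R}{X_{E(P_i)} Y_{E(P_i)}}{T\,W\,X_{<i}\,Y_{<i},\CD},
\end{equation*}
where $X_{<i}, Y_{<i}$ collect the variables on previous paths. Each summand is the information cost of a single-path XOR-gadget chain; I would lower-bound it by $\Omega(1)$ via an information lemma tailored to $g$, using that $g$ contains an inner-product block on the coordinates $(x_2,x_3,y_2,y_3)$ so that even after the endpoint parity is revealed, no rank-$1$ factorization $q(X_{E(P_i)}\mid r)\,q(Y_{E(P_i)}\mid r)$ can reproduce the true conditional without paying $\Omega(1)$ bits. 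Summing over $i$ yields the claimed $\Omega(k)$ bound.

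The main obstacle is the approximate case $\eps > 0$. Unlike the exact setting of \cite{Goos}, the distribution $q(X,Y) \propto |\zeros(X,Y)|-1+\eps$ places nontrivial mass on columns with $|\zeros|$ small, so $\CD$ is no longer a hard constraint and the per-path conditional distribution is only an $O(\eps)$-perturbation of the exact Tseitin model. The delicate work is therefore twofold: first, to tune $\eps$ so that $\Pr_q(\CD) \ge 2^{-O(k)}$ while still giving a valid $(1 + \nicefrac{\eps}{|V(H)|})$-approximation via \Cref{lemK8sgduw}; second, to prove a robustness statement showing that the per-path information lower bound persists under the $\eps$-perturbation, e.g. by comparing to the $\eps = 0$ distribution in statistical distance and invoking the Pinsker-style continuity bounds that underlie the common-information framework of~\cite{BM13,BP16,sinha2018lower}. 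This robustness step, absent from the exact argument, is the main hurdle and is where the explicit structure of $g$ in~\eqref{eqGadget} becomes crucial.
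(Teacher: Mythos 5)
The decisive departure is your definition of $W$, and it breaks the argument. You set $W = (X_F, Y_F)$, the off-path edge labels, whereas the paper draws $W_e$ as an \emph{independent uniformly random window} (a $2\times 1$ or $1\times 2$ rectangle of $\CX\times\CY$ on which $g$ is constant) for \emph{every} edge of $G$, including path edges, with $W$ independent of $(X,Y,R,T)$. The window construction is not cosmetic: it is what reconciles two otherwise conflicting needs. It fixes each gadget bit $g(X_e,Y_e)$ (so the parity constraints in $\CD$ are enforced by $W$ itself), while simultaneously ensuring that, given $T=t$ and $W=w$, the event $\CD$ collapses to the rectangle $(X,Y)\in w$ (Claim~\ref{clmweee} hinges on exactly this). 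With your $W$, the residual constraint at an interior node $u$ of a path with incident path edges $e,e'$ reads $g(X_e,Y_e)+g(X_{e'},Y_{e'})=c_u$ for a fixed bit $c_u$; since $g$ contains the inner-product block $x_2y_2+x_3y_3$, this is \emph{not} a rectangle in $(X,Y)$. Consequently the asserted factorization of $(X_{E(P_i)},Y_{E(P_i)})$ across paths fails (edge-disjoint paths can share interior nodes, coupling the constraints), and --- more fatally --- the Markov chain $X - R - Y$ inherited from Proposition~\ref{prop:nrank} does not survive the conditioning on $\CD$. Everything in the per-path argument (Theorem~\ref{thmLowerBoundDi}, Claims~\ref{clmuoetzwh}, \ref{clmhkfdsud}, \ref{clmhtoewph}) leans on that surviving Markov chain and on the window algebra $w \mapsto \prescript{}{[ab]}{w}$ that lets the paper embed an AND gate and swap between $\CD$ and $\CD_i$ via a bijection on windows (Claims~\ref{clmwindowa}, \ref{clmwindowb}, \ref{clmblub}). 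Nothing in your proposal replaces that machinery, and your appeal to ``an information lemma tailored to $g$'' is exactly the part that requires the rectangle structure you have discarded.

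Two secondary issues. First, your event $\CD$ forces parity $0$ everywhere except possibly at $\{v^\star, v_0, w_0\}$, i.e.\ $|\zeros(X,Y)|\in\{|V|,|V|-2\}$; the paper's $\CD$ forces $\zeros(X,Y)=T_u\cup T_0$, i.e.\ $|\zeros|=3$. The paper's choice is what makes the slack under $\CD$ equal to $2+\eps$, and the ``triples containing small entries'' step (Proposition~\ref{bigprop3}) depends on the contrast with slack $\eps$ when $|\zeros|=1$; your reversed convention would need an entirely different quantitative analysis. Second, your decomposition uses the chain rule conditioning on $X_{<i}Y_{<i}$, whereas the paper applies superadditivity (Proposition~\ref{prop:chainrule}) after establishing mutual independence of the $X_eY_e$ under $q(\cdot\mid tw,\CD)$, then reindexes via the window bijection to $\CD_i$; conditioning on prior paths makes the per-path term genuinely harder to control, and the paper deliberately avoids it.
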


A basic property of (conditional) mutual information~\cite[Thm.~2.6.4]{CT06}
states that if $\Infc[q]{R}{XY}{TW,\CD} \ge \ell$, then $|\supp(R)| \ge 2^\ell$, and hence this directly yields \Cref{thmNonnegRank}.
In the remainder of this section, we will define $T,W,\CD$ and describe the proof outline of \Cref{thm99dhsuao}.

\subsection{Random pairings and windows}

The random variable $T$ will denote a uniformly random partition of the terminals of $G$ into a single terminal $T_u$ and an ordered list of $k+1$ unordered pairs $T_0, \cdots, T_k$.

Following \cite{Goos}, we say that a $2\times 1$ (horizontal) or $1\times 2$ (vertical) rectangle $w$ of $\CX \times \CY$ is a \emph{$b$-window} if the value of $g$ on the two inputs in $w$ is equal to $b$.
The random variable $W = (W_e)_{e \in E}$ will denote a uniformly random window for every edge of $G$.

An important property of the gadget $g$ is the following.
Given a (horizontal) $b$-window $w = \{(x,y), \, (x, y')\}$, there exist unique $\tilde x \in \CX$, $\tilde y, \tilde y' \in \CY$ such that 
$\ps{w}{00} := \{(\tilde{x},\tilde{y}), \, (\tilde{x},\tilde{y}')\}$,
$\ps{w}{10} := \{(x,\tilde{y}), \, (x,\tilde{y}')\}$, and
$\ps{w}{01} := \{(\tilde{x},y), \, (\tilde{x},y')\}$ are $\comp{b}$-windows.
Thus, we may view $\ps{w}{00}$, $\ps{w}{10}$, $\ps{w}{01}$, $\ps{w}{11} := w$ as a ``stretched'' \textsf{AND} if $w$ is a $1$-window, or a ``stretched'' \textsf{NAND} if $w$ is a $0$-window, see Figure~\ref{figWindows} for an illustration.
Since $g$ is symmetric with respect to its inputs, we may define $\ps{w}{00}$, $\ps{w}{10}$, $\ps{w}{01}$, and $\ps{w}{11}$ analogously for vertical windows.

\begin{figure}
    \centering
    \includegraphics[width=0.8\textwidth]{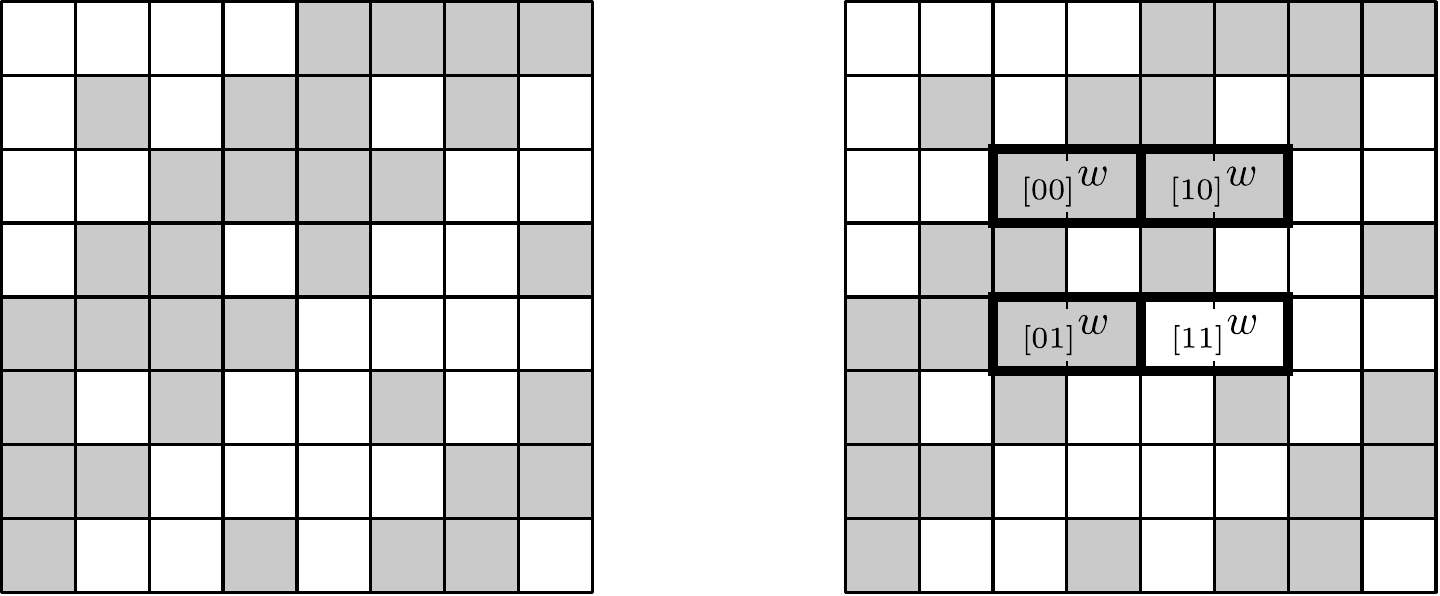}
    \caption{Both pictures show the values of the gadget $g$ in white ($0$) and gray ($1$), where the columns and rows corresponds to the inputs in the order $(0,0,0), \, (0,0,1), \, (0,1,0)$, etc. The right picture shows a $0$-window $w = \ps{w}{11}$ and the corresponding $1$-windows $\ps{w}{00}$, $\ps{w}{01}$, and $\ps{w}{10}$.}
    \label{figWindows}
\end{figure}

In what follows, we will consider the event 
\[
    \CD : \zeros(X, Y) = T_u \cup T_0 \text{ and } (X, Y) \in W,
\]
which, in particular, enforces that the parity of the nodes is zero only on the three terminals in $T_u \cup T_0$.


\subsection{Main argument}

First, recall that for every partition $t$ of the $(2k + 3)$ terminals of $G$ into a single terminal $t_u$ and pairs of terminals $t_0, \dots, t_k$, there exist edge-disjoint paths $P_0,\dots,P_k$ such that $P_i$ connects the terminals $t_i$, which we denote by $\Path_i(t) = P_i$.
We denote by $X^i$ the restriction of $X$ to the indices that correspond to edges in $\Path_i(T)$, i.e., $X^i = (X_e)_{e \in \Path_i(T)}$.
Similarly, we define $X^{-i} = (X_e)_{e \notin \Path_i(T)}$. We use the analogous notation for other random variables as well, e.g. $Y^i, W^i$.

The event $\CD$ is quite useful because it turns out that within this event, for any fixed partition $T=t$ and window $W=w$, the random variables $X^1Y^1, X^2Y^2, \ldots, X^k Y^k$ are mutually independent.
This allows one to use the powerful superadditivity property of mutual information (see \Cref{secSuperadditivity}):
\begin{equation}
	\label{equhsd8g3g}
    \Infc[q]{R}{XY}{TW,\CD} \ge \sum_{i=1}^{k} \Infc[q]{R}{X^iY^i}{TW,\CD}.
\end{equation}
In view of \Cref{thm99dhsuao}, it suffices to focus on the marginal distribution of $X^iY^i$ and show that each of the mutual information terms on the right is $\Omega(1)$.
This can be generated from a much smaller rectangle (submatrix) of the slack matrix $S$: the event $\CD$ fixes a window $W$ such that for all $X,Y \in W$, the parity of all the terminals in $T_1,\cdots, T_k$ is one and $\zeros(X,Y) = T_u \cup T_0$. One could consider the event $\CD$ as part of a larger rectangle where one is allowed to have parity zero on any of the other pair of terminals $T_1, \cdots, T_k$ as well. In contrast, in the marginal probability space for $i \in [k]$, it will suffice to consider the event $\CD$ as part of a smaller rectangle where $\zeros(X,Y) = T_u \cup T_0 \cup T_i$ or $\zeros(X,Y) = T_u \cup T_0$.
In order to do this, note that if we flip the edge labels along any path in the graph, i.e. replace the label $b_e:= g(x_e,y_e)$ for each edge on the path with $\overline{b_e}$, then the parity of the end points of the path flips, while the parity of all other nodes remains unchanged. Since we are conditioning on windows as well, we will switch to a different window in order to flip the edge labels.

For this we rely on the fact that $w \mapsto \ps{w}{00}$ (as defined previously) is a bijection, which allows us to obtain various equivalent but correlated ways of generating the same event.
In particular, consider the following event for each $i \in [k]$, 
\[
    \CD_i: \zeros(X, Y) = T_u \cup T_0 \text{ and } (X, Y) \in \ps{W^i}{00}W^{-i},
\]
where $\ps{W^i}{ab}W^{-i}$ arises from $W$ by replacing each entry $w$ that corresponds to an edge in $\Path_i(T)$ by $\ps{w}{ab}$.

Note that if we have $\zeros(\ps{W^i}{00}W^{-i}) = T_u \cup T_0$, then $\zeros(\ps{W^i}{01}W^{-i}) = \zeros(\ps{W^i}{10}W^{-i}) = T_u \cup T_0$ as well as $\zeros(W) = T_u \cup T_0 \cup T_i$.
Although the event $\CD_i$ can be considered a part of this smaller rectangle, note that $\CD_i$ is a not a subset of $\CD$.
However, since $W \mapsto \ps{W^i}{00}W^{-i}$ is a bijection over the space of windows, the mutual information quantities remain the same.
Thus, by \eqref{equhsd8g3g} we obtain $\Infc[q]{R}{XY}{TW,\CD} \ge \sum_{i=1}^{k} \Infc[q]{R}{X^iY^i}{TW,\CD_i}$ (see again \Cref{secSuperadditivity}), and hence it suffices to prove:

\begin{theorem}
    \label{thmLowerBoundDi}
    For $\eps > 0$ small enough, there is a constant $c > 0$ such that $ \Infc[q]{R}{X^iY^i}{TW,\CD_i} \geq c$ holds for every $i \in [k]$.
\end{theorem}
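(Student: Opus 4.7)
The plan is to introduce three events $\CD_i^{01}, \CD_i^{10}, \CD_i^{11}$ alongside $\CD_i^{00} := \CD_i$, where each $\CD_i^{ab}$ uses the window $\ps{W^i}{ab}W^{-i}$ along path $i$ and requires $\zeros(X, Y) = T_u \cup T_0$ for $ab \ne 11$, and $\zeros(X, Y) = T_u \cup T_0 \cup T_i$ for $ab = 11$. Because flipping the edge labels along path $i$ flips precisely the parities at $v_i$ and $w_i$ (interior nodes see two flips, hence no change), these four events share a common compatibility condition on $(T, W)$ via the window bijections $w \mapsto \ps{w}{ab}$.

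The slack formula \eqref{eqWednesday} then gives, conditional on any compatible $(T, W)$, that $q(\CD_i^{ab} \mid TW)$ is proportional to $2+\eps$ for $ab \ne 11$ and to $4+\eps$ for $ab = 11$; in particular
\[
    \frac{q(\CD_i^{00} \mid TW) \cdot q(\CD_i^{11} \mid TW)}{q(\CD_i^{01} \mid TW) \cdot q(\CD_i^{10} \mid TW)} \;=\; \frac{4+\eps}{2+\eps},
\]
strictly greater than $1$ for small $\eps$. On the other hand, each window $\ps{w^i}{ab}$ decomposes as a Cartesian product $A^a \times B^b$ in the $(X^i, Y^i)$ coordinates, so the rectangle property $X \perp Y \mid R$ yields the \emph{pointwise} identity $q(\CD_i^{00} \mid R, TW) \, q(\CD_i^{11} \mid R, TW) = q(\CD_i^{01} \mid R, TW) \, q(\CD_i^{10} \mid R, TW)$ for every $r$. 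The discrepancy between the pointwise equality and the marginal cross-ratio $(4+\eps)/(2+\eps) > 1$ forces the conditional distribution of $R$ to vary across the four events, and thus also within a single $\CD_i$.

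To quantify this variation, the plan is to exploit the nonnegative rank-$1$ decomposition $R^{(r)}_{xy} = f_r(x) g_r(y)$: the four slack constraints above collapse into the covariance-type identity
\[
    \sum_r \bigl( f_r(A^1) - f_r(A^0) \bigr) \cdot \bigl( g_r(B^1) - g_r(B^0) \bigr) \;=\; 2 N,
\]
where $N$ is the common window size and $f_r, g_r$ are evaluated at the off-path projections fixed by $W^{-i}$; note that the $\eps$ contributions cancel in the difference. Feeding this covariance bound into the common-information framework \cite{BM13,BP16,sinha2018lower} (via Cauchy-Schwarz and Hellinger/KL-distance arguments), one shows that the conditional law $q(R \mid TW, \CD_i, X^iY^i = \cdot)$ is bounded away from the marginal $q(R \mid TW, \CD_i)$ in KL divergence on average, which yields $\Infc[q]{R}{X^iY^i}{TW, \CD_i} \geq c$ for some constant $c > 0$ depending only on $\eps$.

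The principal obstacle will be the last technical step: turning the covariance identity into an $\Omega(1)$ mutual-information bound that holds conditionally on the single event $\CD_i$, rather than merely on the union $\bigcup_{ab} \CD_i^{ab}$. This requires careful bookkeeping within the common-information machinery, using the specific product structure of the stretched AND gadget together with the fact that the bijections $w \mapsto \ps{w}{ab}$ are measure-preserving on windows and interact cleanly with the parity-condition shift between $\CD_i^{11}$ and the other three events. For $\eps > 0$ sufficiently small, the right-hand side of the covariance identity is independent of $\eps$, so the resulting constant $c > 0$ only requires $\eps$ to be small enough that the slack entries remain strictly positive.
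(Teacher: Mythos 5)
Your opening observations are correct and are part of what the paper exploits: conditional on $(T,W)$, the four events $\CD_i^{ab}$ are rectangles with $q$-probabilities proportional to $2+\eps,\,2+\eps,\,2+\eps,\,4+\eps$, and the rank-$1$ decomposition of $S$ forces the pointwise cross-ratio given $R$ to equal $1$, whence your covariance identity. But the ``principal obstacle'' you name at the end is not a bookkeeping step --- it is essentially the whole theorem, and your proposal does not supply the ideas needed to cross it. The quantity $\Infc[q]{R}{X^iY^i}{TW,\CD_i}$ is conditioned only on the single rectangle $\ps{W^i}{00}W^{-i}$, whereas your identity constrains aggregates $F_r^0,F_r^1,G_r^0,G_r^1$ that involve the mass of $f_r,g_r$ on all four rectangles; small information under $\CD_i$ alone gives no handle on the $11$-side quantities, so there is no contradiction to be extracted from Cauchy--Schwarz or Hellinger manipulations in the standard common-information toolbox.

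What the paper actually does, after passing to $p(\cdot)=q(\cdot\mid zt_{-i},\CE)$ and applying Pinsker to show that small $\gamma$ forces most of $p(A=0,B=0)$ to come from ``good'' triples (\Cref{bigprop1}), is genuinely combinatorial and specific to these instances. It splits the good triples into $\CG_2$, where the family of good pairings of the five terminals in $T_u\cup T_0\cup T_i$ is $2$-intersecting and hence of size at most $4<\binom{5}{3}$ by an Erd\H{o}s--Ko--Rado-type bound (\Cref{bigprop2}), and $\CG_1$, where one finds a second good pairing meeting the first in a single vertex whose crossed rectangle contains slack-$\eps$ entries, yielding the $O(\eps)$ bound of \Cref{bigprop3}. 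Only both together produce the numerical contradiction $\tfrac{2+\eps}{10+4\eps}\le O(\eps)+\tfrac{4+\eps}{25+10\eps}+O(\gamma)$. These two steps are the new content of this proof relative to prior common-information lower bounds, and they are absent from your sketch; relatedly, your claim that $\eps$ need only be small enough to keep the slacks positive is too weak --- the argument needs the $O(\eps)$ contribution from $\CG_1$ to fall strictly below the constant gap $\tfrac{2+3\eps}{50+20\eps}$.
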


We elaborate on the main steps of proving the above theorem.
A detailed proof is given in \cref{secProofKeyThm}.
The argument will be the same for each $i \in [k]$, so let us fix an $i \in [k]$.
To obtain a bound on $\Infc[q]{R}{X^iY^i}{TW,\CD_i}$, we will consider a smaller probability space $p$.
To this end, we will treat the windows corresponding to the edges of path between the terminals in $T_i$, denoted $W^i$, differently from the windows for the rest of the edges.

Recall that any window can be seen as a ``00'', ``01'', ``10'' or ``11'' input of (a unique ``stretched AND'' or ``stretched NAND'' of) the gadget $g$.
In particular, we may view $W^i = \ps{W^i}{11}$ as being embedded as a ``11'' input, which also yields three further correlated disjoint random windows that we call $\ps{W^i}{00}$, $\ps{W^i}{01}$, and $\ps{W^i}{10}$.

The probability space $p$ arises as follows.
First, we define random variables $A,B \in \bits$ and restrict to the event that $X^iY^i \in \ps{W^i}{AB}$ and $X_eY_e \in W_e$ for all remaining edges $e$.
Moreover, we fix parts of $T$ and the set of nodes which have parity zero under the labeling given by $W$.
We will see that for any $x,y,a,b$ in the above probability space, we have $|\Zeros(x,y)| = 3$ if $a=0$ or $b=0$ and $|\Zeros(x,y)| = 5$ otherwise. The probability space $p$ will be obtained by restricting to this submatrix and normalizing the entries. In view of \eqref{eqWednesday}, this yields
\begin{equation}\label{eqn83gdhdls}
    p(A=0,B=0) = \frac{2+\eps}{10+4\eps} \approx \frac{1}{5} \text{ and } p(A=1,B=1) = \frac{4+\eps}{10+4\eps} \approx \frac{2}{5}.
\end{equation}
Moreover, it turns out that $\gamma^4 := \Infc[p]{R}{X^iY^i}{TW,A=0,B=0}$ is equal to $\Infc[q]{R}{X^iY^i}{TW,\CD_i}$. We view this as an embedding of the AND function: if $\mathrm{AND}(A,B)=0$, then $\zeros(X,Y) = T_u \cup T_0$ while if $\mathrm{AND}(A,B)=1$, then because of the choice of a different window, the edge labels of the path get flipped  along the path connecting the terminals in $T_i$ and $\zeros(X,Y) = T_u \cup T_0 \cup T_i$.

The crux of the proof is to show that if $\gamma \ll 1$, then the probability of the two events in \eqref{eqn83gdhdls} must be quite different. To see this, call a triple $(r,t,w)$ good if the distributions $p(A|R=r,T=t,W=w)$ and $p(B|R=r,T=t,W=w)$ are both close to the uniform distribution on a bit and denote by $\CG$ the event that the triple is good.
Using Pinsker's inequality and properties of the gadget, we can show that most of the contribution comes from good triples (see \Cref{bigprop1}), i.e., $p(A = 0, B = 0) \leq p(A = 0, B = 0, \CG) + O(\gamma)$.

We will see that, for any good $(r,t,w)$, the conditional distribution $p(A,B|R=r,T=t,W=w)$ is very close to the uniform distribution on two bits. This by itself does not give us a contradiction to \eqref{eqWednesday} as the above inequality does not imply that the probability $p(\CG)$ is large (it only shows that the probability $p(\CG\mid A=0, B=0)$ is large).
So, we further partition the good triples $\CG$ into $\CG_1$ and $\CG_2$.

\paragraph{Two-Intersecting Family.} The triples in $\CG_2$ correspond to an event where all the good pairings (after fixing all but the pairing of the 5 nodes in $T_u \cup T_0 \cup T_i$) form a $2$-intersecting family of $\binom{[5]}{3}$. We use bounds on the size of intersecting families to show that $p(A=0,B=0,\CG_2)$ is roughly $\frac{4}{\binom{5}{3}} = \frac25$ times $p(A=1,B=1)$ (see \Cref{bigprop2}).

\paragraph{Triples Containing Small Entries.} To deal with the remaining triples in $\CG_1$, we first note that any pairing $t$ (along with $w$) chooses a rectangular submatrix $\CA_t \times \CB_t$ of the slack matrix $S$ where $\CA_t$ and $\CB_t$ denote the set of rows and columns respectively. We show that for any good triple $t$ (along with $r,w$) one can find another good $t'$ (along with $r,w$) such that the two rectangles intersect and moreover, the rectangle $\CA_{t'} \times \CB_{t}$ contains entries where $|\zeros(x,y)|=1$. In light of \eqref{eqWednesday}, the probability of the event $|\zeros(X,Y)|=1$ in the original unconditioned probability space, denoted $\alpha$, is $O(\eps)$. We are able to show that the probability contribution $p(A=0,B=0,\CG_1)$ can be bounded by a constant factor of $\alpha$ (see \Cref{bigprop3}). We remark that if $\eps=0$, this fact is much simpler to prove as $\alpha = 0$ in this case, and one could even afford a loose bound of $2^n \alpha$ for instance. We, however, need a very precise quantitative bound here which increases the complexity of the arguments. 

Overall, the above implies
\[ p(A = 0, B = 0) \leq \frac{2}{5} \cdot  p(A = 1, B = 1) + O(\eps + \gamma) = \frac{4}{25} + O(\eps + \gamma).\]
If $\gamma = O(\eps)$, taking $\eps$ to be a small enough constant, the right hand side above is strictly smaller than $1/5$, which contradicts the true probability given by \eqref{eqn83gdhdls}.

We note that the proof for the matching polytope proceeds along very similar lines (see \cite[Ch. 12]{RY20}), but is somewhat simpler compared to the present proof. The difficulty in the present proof is primarily due to the fact that the edge-disjoint paths above depend on the pairing of all the terminals. 
\section{Preliminaries}
\label{secPrelim}

\subsection{Notations for probability spaces}
All random variables considered here are discrete and are denoted by capital letters (e.g.\ $A$), and values they attain are denoted by lower-case letters (e.g.\ $a$). Events in a probability space will be denoted by calligraphic letters (e.g.\ $\CE$). For events $\CD$ and $\CE$, we use $\comp{\CD}$ to denote the complement and $\CD,\CE$ to denote the intersection $\CD \cap \CE$.

Given a probability space with probability measure $p$ and a random variable $A$ defined on the underlying sample space, we use the notation $p(A)$ to denote the distribution of the variable $A$ with respect to the probability measure $p$.
Given an event $\CW$ in a probability space $p$, we will denote by $p(\cdot\mid\CW)$, the probability space $p$ after conditioning on the event $\CW$.
The term $p(\CD)$ denotes the probability of the event $\CD$ according to $p$.

For conciseness, we use $p(ABC) = p(A, B, C)$ to denote the distribution of multiple random variables, and denote by $p(abc) = p(a, b, c)$ the probability of values $a, b, c$ according to the distribution $p(ABC)$.
The event $A = a$ is often simply described as $a$, i.e. $p(A = a) = p(a)$ or $p(B \mid A = a) = p(B \mid a)$.

The support of a random variable $A$ is defined to be the set $\supp(A) := \{a \mid p(a)>0\}$.
Given a fixed value $a$ of $A$, we denote the expected value of a function $f(a)$ under the distribution $p(A)$ by $\Ex{p(a)}{f(a)} := \sum_{a} p(a) \cdot f(a)$.

We say that a random variable $A$ determines another random variable $B$ if there is a function $h$ such that $h(A)=B$. We write $A-M-B$ in the probability space $p(\cdot)$ if $A$ and $B$ are independent given $M$, i.e., $p(amb) = p(m) \cdot p(a|m) \cdot p(b|m)$ holds for every $a,b,m$.
In this case, we say that $A$, $M$, and $B$ form a \emph{Markov chain}. 

\subsection{Information theory basics}

In this section, we mention the information theoretic tools that we need for our proof.
For a random variable $A$ over a probability space with distribution $p$, the \emph{entropy} of $A$ (with respect to $p$) is defined as
\[
    \BH_p(A) = \BE_{\cp(a)}\left[\log_2\frac{1}{\cp(a)}\right].
\]
Whenever $p$ is clear from the context, we simply write $\BH(A) = \BH_p(A)$.
For any two random variables $A$ and $B$, the entropy of $A$ conditioned on $B$ is defined as $\BH(A|B) = \BE_{\cp(b)}[\BH(A|b)]$.
Given an event $\CW$ and random variables $A,B$ in a probability space with distribution $p$, we write $\BH_p(A|B,\CE) = \BH_q(A|B)$ where $q = p(\cdot \mid \CW)$ arises from $p$ by conditioning on the event $\CW$.

The \emph{mutual information} between two random variables $A,B$ over a probability space with distribution $p$ is defined as
\[
    \Infp{A}{B}{p} = \BH_p(A) - \BH_p(A|B).
\]
Again, we omit the subscript if $p$ is clear from the context.
Note that since $\BH_p(A|B) = \BH_p(B|A)$ we have $\Inf{A}{B} = \Inf{B}{A}$.
The \emph{conditional mutual information} is defined as $\Infc{A}{B}{C} = \BH(A|C) - \BH(A|BC)$.
Given an event $\CW$, we define $\Infc[p]{A}{B}{C,\CW} = \Infc[q]{A}{B}{C}$, where $q = p(\cdot \mid \CW)$.

The following basic facts will be needed throughout our proof.
Some proofs can be found in the book \cite{CT06} by Cover \& Thomas.
Below $A,B,C,\ldots$ are arbitrary random variables and $\CE$ an arbitrary event in a probability space $p(\cdot)$ unless explicitly mentioned otherwise.

\begin{proposition}[{\cite[Thm.~2.6.5]{CT06}}]
    $\BH(A|B) \le \BH(A)$ where the equality holds if and only if $A$ and $B$ are independent.
\end{proposition}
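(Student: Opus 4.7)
The plan is to reduce the proposition to nonnegativity of mutual information, which in turn reduces to an application of Jensen's inequality to the strictly concave function $\log_2$. First, unfolding definitions and using $p(a,b) = p(a|b)p(b)$ gives
\[
    \BH(A) - \BH(A|B) = \Ex{p(a,b)}{\log_2 \frac{p(a|b)}{p(a)}} = \Ex{p(a,b)}{\log_2 \frac{p(a,b)}{p(a)p(b)}},
\]
so the claim $\BH(A|B) \le \BH(A)$ is equivalent to showing that the right-hand side (which equals $\Inf{A}{B}$) is nonnegative, and that it vanishes exactly when $A,B$ are independent.

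Next, I would apply Jensen's inequality to the strictly concave function $\log_2$ with the random variable $f(a,b) := p(a)p(b)/p(a,b)$ defined on $\supp(A,B)$ (note $p(a,b) > 0$ implies $p(a), p(b) > 0$, so $f$ is well-defined). This gives
\[
    \Ex{p(a,b)}{\log_2 \frac{p(a,b)}{p(a)p(b)}} = -\Ex{p(a,b)}{\log_2 f(a,b)} \ge -\log_2 \Ex{p(a,b)}{f(a,b)} = -\log_2 \!\!\!\sum_{(a,b) \in \supp(A,B)} \!\!\! p(a)p(b) \ge 0,
\]
where the last inequality uses that the sum is over a subset of all $(a,b)$ pairs and is hence at most $\sum_{a,b} p(a)p(b) = 1$.

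For the equality case, strict concavity of $\log_2$ forces the two inequalities above to be tight only when (i) $f(a,b)$ is constant on $\supp(A,B)$, say $f \equiv c$, and (ii) $\sum_{(a,b) \in \supp(A,B)} p(a)p(b) = 1$. Condition (ii) implies $\supp(A,B) = \supp(A) \times \supp(B)$, and summing $p(a,b) = c\, p(a)p(b)$ over this product then forces $c = 1$. Thus $p(a,b) = p(a)p(b)$ for all $(a,b)$, i.e., $A$ and $B$ are independent. The reverse direction is immediate: if $A,B$ are independent then $p(a,b) = p(a)p(b)$ and $\Inf{A}{B} = 0$. The only mild subtlety in carrying this out rigorously is the bookkeeping around the support of $p(A,B)$ versus the product support, which is handled by the observation in step (ii); everything else is a direct manipulation of definitions.
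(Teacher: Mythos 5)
The paper itself gives no proof of this proposition; it is stated with a citation to \cite[Thm.~2.6.5]{CT06} and used as a black-box fact. Your argument is a correct and complete rendering of the standard textbook proof: you rewrite $\BH(A) - \BH(A|B)$ as $\Inf{A}{B} = \Ex{p(a,b)}{\log_2 \tfrac{p(a,b)}{p(a)p(b)}}$, apply Jensen's inequality to the strictly concave $\log_2$ with the variable $f(a,b) = p(a)p(b)/p(a,b)$ on the support, and close with $\sum_{(a,b)\in\supp(A,B)} p(a)p(b) \le 1$. The equality analysis is handled carefully --- you correctly observe that $\supp(A,B)\subseteq\supp(A)\times\supp(B)$ always, that tightness of the second inequality forces equality of these supports, and that the constant in $p(a,b)=c\,p(a)p(b)$ must be $1$ by normalization. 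This is essentially the same route Cover \& Thomas take (they phrase the nonnegativity step via relative entropy, but the underlying mechanism is the same Jensen argument), so your proof faithfully supplies what the paper delegates to the reference, with no gaps.
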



\begin{proposition} \label{prop:chainrule}
   If the random variables $A_1,...A_n$ are mutually independent, then $\Infc{A_1,\dots, A_n}{B}{C} \ge \sum_{i=1}^n \Infc{A_i}{B}{C}$.
\end{proposition}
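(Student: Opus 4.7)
The plan is to invoke the chain rule for conditional mutual information together with the elementary fact that conditioning cannot increase entropy. Before starting, I would fix the reading of the hypothesis: for the stated inequality to hold, ``$A_1,\dots,A_n$ are mutually independent'' must be understood as conditional independence given $C$, which is how it is used in~\eqref{equhsd8g3g}, where the $X^iY^i$ are mutually independent only after conditioning on $TW$ and the event $\CD$ (so that the edge-disjointness of the paths $\mathrm{Path}_i(T)$ decouples them). Unconditional independence alone would be insufficient: with $A_1,A_2$ independent uniform bits, $C = A_1 \oplus A_2$, and $B = A_1$, one computes $\Infc{A_1A_2}{B}{C} = 1$ while $\Infc{A_1}{B}{C} + \Infc{A_2}{B}{C} = 2$.

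First, I would apply the chain rule to expand
\[
    \Infc{A_1,\dots,A_n}{B}{C} \;=\; \sum_{i=1}^n \Infc{A_i}{B}{C,A_1,\dots,A_{i-1}},
\]
and rewrite each summand as the difference
\[
    \BH(A_i \mid C,A_1,\dots,A_{i-1}) \;-\; \BH(A_i \mid B,C,A_1,\dots,A_{i-1}).
\]
By conditional independence of $A_i$ from $A_1,\dots,A_{i-1}$ given $C$, the first term collapses to $\BH(A_i \mid C)$. By the fact that conditioning cannot increase entropy, the second term is at most $\BH(A_i \mid B,C)$. Combining, each summand is at least $\BH(A_i\mid C) - \BH(A_i\mid B,C) = \Infc{A_i}{B}{C}$, and summing over $i$ yields the claim.

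This is a standard manipulation and I do not anticipate any real obstacle: once the chain rule is in hand, the proof is essentially a three-line computation. The only point requiring care is the correct reading of the independence hypothesis noted above, which is exactly what allows the termwise bound after chain-rule expansion; the same argument goes through verbatim if one further conditions on an event $\CE$, as is needed in the downstream application.
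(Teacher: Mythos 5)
Your proof is correct; the paper gives no proof of this proposition (it is stated as a standard fact with a general pointer to Cover \& Thomas), and the chain-rule expansion followed by the ``conditioning does not increase entropy'' bound is exactly the standard argument. Your observation that the hypothesis must be read as conditional independence given $C$ is also a genuine and worthwhile catch: the counterexample with $A_1,A_2$ independent uniform bits, $C = A_1\oplus A_2$, $B = A_1$ does give $\Infc{A_1A_2}{B}{C} = 1 < 2 = \Infc{A_1}{B}{C} + \Infc{A_2}{B}{C}$, so unconditional independence would not suffice. In the paper's only application of the proposition---the derivation of~\eqref{eqhd9s02k3}---the supporting Claim~\ref{clmweee} supplies mutual independence of the $X_eY_e$ in the space $q(\cdot\mid tw,\CD)$, i.e.\ conditionally on each fixed value $tw$ and on the event $\CD$, which is precisely the conditional independence that your step $\BH(A_i\mid C,A_1,\dots,A_{i-1}) = \BH(A_i\mid C)$ requires. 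So your reading of the hypothesis matches the downstream use, and the three-line argument you sketch goes through.
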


\begin{proposition} \label{prop:infoexpected}
    $\Infc{A}{B}{CD} = \Ex{p(d)}{\Infc{A}{B}{C, d}}$. In particular, if $D$ is determined by $C$, then $\Infc{A}{B}{C} = \Ex{p(d)}{\Infc{A}{B}{C, d}}$.
\end{proposition}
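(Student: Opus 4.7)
The plan is to unfold the definition of $\Infc{A}{B}{CD}$ as $\BH(A|CD) - \BH(A|BCD)$ and then apply the tower property of expectation to each conditional entropy separately. The key observation is that by the definition of conditional entropy, $\BH(A|CD) = \sum_{c,d} p(c,d)\, \BH(A \mid C=c, D=d)$, and by splitting the joint probability as $p(c,d) = p(d)\, p(c|d)$ the inner sum over $c$ matches precisely the conditional entropy $\BH(A|C,d)$ computed in the conditioned space $p(\cdot|d)$, per the paper's notational convention introduced in \cref{secPrelim}. This gives $\BH(A|CD) = \Ex{p(d)}{\BH(A|C,d)}$, and the same manipulation applied to $\BH(A|BCD)$ yields $\Ex{p(d)}{\BH(A|BC,d)}$.

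Subtracting these two identities and using linearity of expectation turns $\BH(A|CD) - \BH(A|BCD)$ into $\Ex{p(d)}{\BH(A|C,d) - \BH(A|BC,d)}$, which is exactly $\Ex{p(d)}{\Infc{A}{B}{C,d}}$, giving the main identity. For the ``in particular'' consequence, I would observe that when $D = h(C)$ for some function $h$, the pair $(C,D)$ and the singleton $C$ carry the same information, so $\BH(A|CD) = \BH(A|C)$ and $\BH(A|BCD) = \BH(A|BC)$; hence $\Infc{A}{B}{CD} = \Infc{A}{B}{C}$, and substituting into the identity already proved yields the claimed formula $\Infc{A}{B}{C} = \Ex{p(d)}{\Infc{A}{B}{C,d}}$.

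There is no substantial obstacle here: the argument is entirely definitional and reduces to a one-line application of the tower property $\sum_{c,d} p(c,d)\, f(c,d) = \sum_d p(d)\sum_c p(c|d)\, f(c,d)$. The only point requiring mild care is to correctly unpack the paper's convention that $\BH(A|C,d)$ denotes the entropy of $A$ given $C$ in the conditioned probability space $p(\cdot|d)$, rather than a simultaneous conditioning on $C$ together with a fixed value $d$ of $D$ in the original space; once this convention is made explicit the identification of the inner sum with $\BH(A|C,d)$ is immediate, and the rest of the proof is a direct rearrangement.
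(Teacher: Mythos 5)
The paper does not prove this proposition itself — it lists it among the basic information-theoretic facts and defers to Cover--Thomas. Your proof is correct and is precisely the standard argument: expand $\Infc{A}{B}{CD} = \BH(A|CD) - \BH(A|BCD)$, apply the tower property $\BE_{p(cd)}[\cdot] = \BE_{p(d)}\BE_{p(c|d)}[\cdot]$ to each term to recognize the inner expectation as the conditional entropy computed in the space $p(\cdot\mid d)$ (matching the paper's convention for $\BH(A|C,d)$), and for the final claim use that $(C,D)$ is informationally equivalent to $C$ alone when $D$ is determined by $C$.
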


The above implies that if $A$ and $B$ are independent then $\Inf{A}{B}=0$.
In the next statement, we refer to the \emph{statistical distance} between two distributions $p(A)$ and $q(A)$, which is defined as $|p(A) - q(A)| = |q(A) - p(A)| = \frac12 \sum_{a} |p(a) - q(a)|$.
Note that the statistical distance satisfies the triangle inequality.

\begin{proposition}[Pinsker's inequality] \label{proposition:pinsker}
    For any random variables $A, B, C$, we have
        $\BE_{p(bc)}|p(A|bc)- p(A|c)| \le \sqrt{\Infc{A}{B}{C}}.$
    
\end{proposition}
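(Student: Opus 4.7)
The plan is to deduce this conditional form of Pinsker's inequality by applying the classical two-distribution Pinsker inequality pointwise to the conditional distributions $p(A|bc)$ and $p(A|c)$, and then averaging via Jensen's inequality (concavity of $\sqrt{\cdot}$). The only nontrivial ingredient is the classical inequality itself, which I would cite as a black box.

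First I would recall the standard identity expressing conditional mutual information as an expected Kullback-Leibler divergence:
\[
\Infc{A}{B}{C} = \Ex{p(bc)}{D(p(A|bc) \,\|\, p(A|c))},
\]
where $D(P\|Q) := \sum_a P(a) \log_2 (P(a)/Q(a))$; this is immediate from expanding $\BH(A|C) - \BH(A|BC)$ and rearranging the logarithms. I would then invoke classical Pinsker, which in the normalization $|P - Q| = \tfrac12 \sum_a |P(a)-Q(a)|$ adopted by the paper and with $D$ measured in bits reads $|P - Q| \le \sqrt{(\ln 2)/2 \cdot D(P\|Q)} \le \sqrt{D(P\|Q)}$. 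Applying this with $P = p(A|bc)$ and $Q = p(A|c)$ for each fixed $(b,c) \in \supp(BC)$ gives the pointwise bound
\[
|p(A|bc) - p(A|c)| \le \sqrt{D(p(A|bc) \,\|\, p(A|c))}.
\]

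Taking expectations over $(b,c) \sim p(bc)$ and applying Jensen's inequality in the form $\BE[\sqrt{X}] \le \sqrt{\BE[X]}$ (valid by concavity of $\sqrt{\cdot}$) then yields
\[
\Ex{p(bc)}{|p(A|bc) - p(A|c)|} \;\le\; \sqrt{\Ex{p(bc)}{D(p(A|bc) \,\|\, p(A|c))}} \;=\; \sqrt{\Infc{A}{B}{C}},
\]
which is precisely the stated bound. The main (and essentially only) obstacle is the classical Pinsker inequality itself, whose proof reduces to the binary alphabet via the log-sum inequality and then amounts to elementary calculus on $f(x) = x \log x$; I would cite \cite{CT06} rather than reproduce this standard derivation.
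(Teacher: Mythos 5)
Your proof is correct and follows the standard derivation: express $\Infc{A}{B}{C}$ as $\Ex{p(bc)}{D(p(A|bc)\,\|\,p(A|c))}$, apply the two-distribution Pinsker inequality pointwise (noting $\sqrt{(\ln 2)/2} < 1$ absorbs the constant when $D$ is in bits), and pull the expectation inside the square root by Jensen. The paper states this proposition without proof as a standard fact (citing \cite{CT06} for background generally), so there is nothing to compare against; your reconstruction is exactly the canonical argument one would expect.
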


We will also make use of the following basic facts.

\begin{proposition} \label{prop:bitcond} If $A \in \bits$ is uniform, then  $|p(R|A=0)-p(R|A=1)| = \BE_{p(r)}|p(A=0|r)-p(A=1|r)|$.
\end{proposition}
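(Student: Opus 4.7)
The plan is a direct computation via Bayes' rule, exploiting that the uniformity of $A$ makes the conditional $p(r\mid A=a)$ and $p(A=a\mid r)$ differ only by the factor $1/(2p(r))$.

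First I would expand the left-hand side using the definition of statistical distance between $p(R\mid A=0)$ and $p(R\mid A=1)$:
\[
|p(R\mid A=0) - p(R\mid A=1)| = \tfrac{1}{2}\sum_r \bigl|p(r\mid A=0) - p(r\mid A=1)\bigr|.
\]
Next, I would apply Bayes' rule together with $p(A=a) = 1/2$ to rewrite each $p(r\mid A=a)$. Specifically, for any $r$ in the support of $R$ and any $a \in \{0,1\}$,
\[
p(r\mid A=a) = \frac{p(A=a\mid r)\, p(r)}{p(A=a)} = 2 p(r) \cdot p(A=a\mid r).
\]
Substituting this into the right-hand side of the previous display gives
\[
\tfrac{1}{2}\sum_r \bigl|2 p(r) p(A=0\mid r) - 2 p(r) p(A=1\mid r)\bigr| = \sum_r p(r)\,\bigl|p(A=0\mid r) - p(A=1\mid r)\bigr|,
\]
which is exactly $\mathbb{E}_{p(r)}|p(A=0\mid r) - p(A=1\mid r)|$, as claimed.

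There is essentially no obstacle here: the result is just Bayes' rule packaged as an identity between two natural ways of measuring how strongly $R$ witnesses the value of $A$. The only mild subtlety is handling $r$ with $p(r)=0$, which contribute nothing to either side and can be excluded from the sums without affecting the equality.
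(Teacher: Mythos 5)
Your proof is correct and follows essentially the same route as the paper: expand the statistical distance as $\tfrac12\sum_r|p(r\mid A=0)-p(r\mid A=1)|$, apply Bayes' rule with $p(A=a)=\tfrac12$ to get $p(r\mid A=a)=2p(r)p(A=a\mid r)$, and substitute. The only difference is that you add a remark about $p(r)=0$ terms, which the paper leaves implicit.
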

\begin{proof}
    Using Bayes' rule, we have $p(r|A=0) = \frac{p(r)p(A=0|r)}{p(A=0)} = 2p(r)p(A=0|r)$. Therefore, 
    \begin{multline*}
        |p(R|A=0)-p(R|A=1)| = \frac12 \sum_{r} |p(r|A=0)-p(r|A=1)| \\
        = \BE_{p(r)}|p(A=0|r)-p(A=1|r)|. \quad \qed
    \end{multline*}
\end{proof}

\begin{proposition} \label{prop:markov2}
    If $R - X - W$, then $\BE_{p(w)}|p(R|w) - p(R)| \le \BE_{p(x)}|p(R|x) - p(R)|$.
\end{proposition}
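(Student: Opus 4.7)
The plan is to apply the Markov property $R - X - W$ directly to express $p(R|w)$ as an average of the distributions $p(R|x)$ and then use the triangle inequality to push the statistical distance through this average. The key identity is that the Markov property gives $p(r|x,w) = p(r|x)$, so marginalizing yields
\[
    p(r|w) = \sum_{x} p(x|w)\, p(r|x,w) = \sum_{x} p(x|w)\, p(r|x).
\]
Since $\sum_x p(x|w) = 1$, I can insert $p(r) = \sum_x p(x|w)\, p(r)$ into the expression $p(r|w) - p(r)$ to obtain
\[
    p(r|w) - p(r) = \sum_{x} p(x|w)\bigl(p(r|x) - p(r)\bigr).
\]

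Next I would take absolute values, sum over $r$, and apply the triangle inequality to move the absolute value inside the sum over $x$. This gives
\[
    |p(R|w) - p(R)| = \tfrac12 \sum_{r} \Bigl|\sum_{x} p(x|w)\bigl(p(r|x) - p(r)\bigr)\Bigr| \le \sum_{x} p(x|w)\, |p(R|x) - p(R)|.
\]

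Finally, I would take expectation over $w$ on both sides and swap the order of summation, using that $\sum_{w} p(w)\, p(x|w) = p(x)$:
\[
    \BE_{p(w)} |p(R|w) - p(R)| \le \sum_{x} \Bigl(\sum_{w} p(w)\, p(x|w)\Bigr)\, |p(R|x) - p(R)| = \BE_{p(x)} |p(R|x) - p(R)|,
\]
which is the desired inequality. There is no real obstacle here; the only thing to be careful about is the correct invocation of the Markov chain identity $p(r|x,w) = p(r|x)$ at the first step, since the rest is just the triangle inequality combined with a change in the order of summation. Conceptually, the statement is the standard fact that passing $R$ through an extra noisy channel ($X \to W$) can only bring it closer to its marginal.
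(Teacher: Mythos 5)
Your proof is correct and matches the paper's argument in all essentials: both decompose $p(R|w)$ as the $p(x|w)$-weighted average of $p(R|x)$ (the paper writes it as $\BE_{p(x|w)}[p(R|wx)]$ before invoking the Markov property, you invoke it first), then apply the triangle inequality for statistical distance and marginalize over $w$. The only difference is the order in which the Markov identity $p(r|x,w)=p(r|x)$ is inserted, which is purely cosmetic.
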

\begin{proof}
    Since $W - X - R$ is a Markov chain, we have $\BE_{p(w)}|p(R|w) - p(R)| =\BE_{p(w)}\big|\BE_{p(x|w)}[p(R|wx)] - p(R)\big| \le\BE_{p(wx)}\big|\BE_{p(R|wx)} - p(R)\big|$
    .
    The proof follows since $ \BE_{p(wx)}\big|\BE_{p(R|wx)} - p(R)\big| = \BE_{p(x)}|p(R|x)-p(R)|$.
    \qed
\end{proof}

\begin{proposition} \label{prop:markov1}
    If $R - E - W$ and $R - X - EW$, then we have $\Infc{R}{X}{E} \le \Infc{R}{X}{W}$.
\end{proposition}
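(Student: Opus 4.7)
The plan is to expand both sides as differences of conditional entropies, $\Infc{R}{X}{E} = \BH(R|E) - \BH(R|XE)$ and $\Infc{R}{X}{W} = \BH(R|W) - \BH(R|XW)$, and then use each Markov assumption to collapse one term. The whole argument reduces to the chain of inequalities $\BH(R|E) \le \BH(R|W)$ and $\BH(R|XE) = \BH(R|XW)$.

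First I would use the assumption $R - X - EW$, which says that $R$ is independent of the pair $(E,W)$ given $X$. In particular, by marginalising one gets both $R - X - E$ and $R - X - W$, which translate to $\BH(R|XE) = \BH(R|X)$ and $\BH(R|XW) = \BH(R|X)$. Hence the ``negative'' terms in the two expansions above agree, and it suffices to establish $\BH(R|E) \le \BH(R|W)$.

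Next I would invoke the second Markov assumption $R - E - W$, which gives $R \perp W \mid E$ and therefore $\BH(R|EW) = \BH(R|E)$. Combining with the standard monotonicity ``conditioning cannot increase entropy'', $\BH(R|EW) \le \BH(R|W)$, one obtains $\BH(R|E) \le \BH(R|W)$. Plugging both facts back into the expansions yields the desired inequality.

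I do not expect any genuine obstacle here: once one notices that $R - X - EW$ forces the two quantities $\BH(R|XE)$ and $\BH(R|XW)$ to coincide (both equalling $\BH(R|X)$), the statement reduces to the elementary entropy comparison $\BH(R|E) \le \BH(R|W)$, which is immediate from the other Markov condition. The only care required is in justifying that $R - X - EW$ does indeed imply $R - X - E$ and $R - X - W$ separately, which is an easy marginalisation argument on the defining factorisation $p(r,e,w|x) = p(r|x)\, p(e,w|x)$.
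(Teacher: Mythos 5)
Your proof is correct and follows essentially the same approach as the paper: expand both conditional mutual informations as differences of conditional entropies, use $R-X-EW$ to equate the subtracted terms, and use $R-E-W$ together with monotonicity of conditioning to compare the first terms. The only cosmetic difference is that you bridge $\BH(R|XE)$ and $\BH(R|XW)$ through $\BH(R|X)$ (by marginalizing the Markov condition), whereas the paper bridges through $\BH(R|XEW)$ (by weak union); both steps are valid and of equal difficulty.
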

\begin{proof}
    As $\BH(R|E) = \BH(R|EW) \le \BH(R|W)$ and $\BH(R|XE) = \BH(R|XEW) = \BH(R|XW)$, we conclude
    $\Infc{R}{X}{E} = \BH(R|{E}) - \BH({R}|{XE}) \le \BH({R}|{W}) - \BH({R}|{XW}) = \Infc{R}{X}{W}$.
    \qed
\end{proof}

\subsection{Properties of the Gadget Function}

The gadget function $g: \CX \times \CY \to \bits$ (with alphabet $\CX=\CY=\bits^3$) has crucial properties that we will use throughout the proof and collect here.
To this end, let us recall the notion of a \emph{window}, which is a set $w = \{(x,y), \, (x',y')\}$ of two distinct elements from $\CX \times \CY$ such that $x = x'$ (horizontal) or $y = y'$ (vertical) and $g(x,y)=g(x',y')$, i.e. it is a $2 \times 1$ (horizontal) or a $1\times 2$ (vertical) rectangle where the value of the gadget is constant, see \Cref{figWindows}.
Given a (horizontal) window $w = \{(x,y), \, (x,y')\}$, we associate the four sets 
$
    \ps{w}{00}  = \{(\tilde{x},\tilde{y}), \, (\tilde{x},\tilde{y}')\},
    \ps{w}{10} = \{(x,\tilde{y}), \, (x,\tilde{y}')\},
    \ps{w}{01} = \{(\tilde{x},y), \, (\tilde{x},y')\},
    \text{ and }
    \ps{w}{11} = w,
$
where
\begin{align*}
    \tilde{x} & = (1 + x_1 + y_2y_3' + y_2'y_3, x_2 + y_3 + y_3', x_3 + y_2 + y_2') \pmod{2} \\
    \tilde{y} & = (1 + y_1 + x_2 + x_3(1 + y_3 + y_3'), y_2 + 1, y_3' + 1) \pmod{2} \\
    \tilde{y}' & = (1 + y_1' + x_2 + x_3(1 + y_3 + y_3'), y_2' + 1, y_3 + 1) \pmod{2}.
\end{align*}

Since $g$ is symmetric with respect to its inputs $x,y$, we may also define $\ps{w}{00}$, $\ps{w}{10}$, $\ps{w}{01}$, and $\ps{w}{11}$ analogously for (vertical) windows $w = \{(x,y), \, (x',y)\}$ by changing the roles of $x,y$.
We note that $\ps{w}{00}$, $\ps{w}{10}$, $\ps{w}{01}$, and $\ps{w}{11}$ are pairwise distinct.
If $w = \{(x,y), \, (x',y')\}$ is a window with $g(x,y) = g(x',y') = b$, we say that $w$ is a \emph{$b$-window}. 
Defining $\comp{b} = 1 - b$, we observe the following.

\begin{repclm}{clmwindowa}
    If $w$ is a $b$-window, then $\ps{w}{00}$, $\ps{w}{10}$, and $\ps{w}{01}$ are $\comp{b}$-windows.
\end{repclm}

The proofs of this claim and all subsequent claims only require a straightforward calculation or may be even clear to the advanced reader.
However, we give detailed proofs of all claims in \Cref{secClaims}.

In view of the above claim, we may view $\ps{w}{00}$, $\ps{w}{10}$, $\ps{w}{01}$, $\ps{w}{11}$ as a ``stretched'' \textsf{AND} if $w$ is a $1$-window, or a ``stretched'' \textsf{NAND} if $w$ is a $0$-window, see Figure~\ref{figWindows} for an illustration.
Another observation that we need is the following:

\begin{repclm}{clmwindowb}
    For each $i,j \in \{0,1\}$, the map $w \mapsto \ps{w}{ij}$ is a bijection on the set of windows.
\end{repclm}

\subsection{Superadditivity}
\label{secSuperadditivity}

Our main goal is to prove \Cref{thm99dhsuao}, i.e., there is some $\eps > 0$ such that $\Infc[q]{R}{XY}{TW,\CD} = \Omega(k)$.
To this end, we exploit the superadditivity property of mutual information.
In particular, for $i \in [k]$, we shall show that $R$ gives $\Omega(1)$ bits of information about the edge labels for each of the $k$ edge disjoint paths between the terminals in $T_1, \cdots T_k$, which when combined with the superadditivity property implies that $R$ gives $\Omega(k)$ bits of information about $X,Y$ giving us \Cref{thm99dhsuao}.
To this end, we first make the following observation.

\begin{repclm}{clmweee}
    The variables $X_eY_e$ for $e \in E$ are mutually independent in the probability space $q(\cdot \mid tw,\CD)$.
\end{repclm}

Using this claim, \Cref{prop:chainrule}, and nonnegativity of mutual information we obtain
\begin{equation}
    \label{eqhd9s02k3}
    \Infc[q]{R}{XY}{TW,\CD} \ge \sum_{i=1}^{k} \Infc[q]{R}{X^iY^i}{TW,\CD}.
\end{equation}

Next, for each $i \in [k]$, recall the event
\[
    \CD_i: \zeros(X, Y) = T_u \cup T_0 \text{ and } (X, Y) \in \ps{W^i}{00}W^{-i},
\]
where $\ps{W^i}{ab}W^{-i}$ arises from $W$ by replacing each entry $w$ that corresponds to an edge in $\Path_i(T)$ by $\ps{w}{ab}$.
For conciseness of notation, we denote $T_u \cup T_0$ by $T_*$.
Note that if $\zeros(\ps{W^i}{00}W^{-i}) = T_*$, then $\zeros(\ps{W^i}{01}W^{-i}) = \zeros(\ps{W^i}{10}W^{-i}) = T_*$ as well as $\zeros(W) = T_* \cup T_i$.
Although the event $\CD_i$ can be considered a part of this smaller rectangle, note that $\CD_i$ is a not a subset of $\CD$.
However, since $W \mapsto \ps{W^i}{00}W^{-i}$ is a bijection over the space of windows, the mutual information quantities remain the same:

\begin{repclm}{clmblub}
    $ \Infc[q]{R}{X^iY^i}{T W, \CD} = \Infc[q]{R}{X^iY^i}{TW,\CD_i}$.
\end{repclm}

Thus, by \eqref{eqhd9s02k3} we obtain $\Infc[q]{R}{XY}{TW,\CD} \ge \sum_{i=1}^{k} \Infc[q]{R}{X^iY^i}{TW,\CD_i}$, and hence it indeed suffices to prove \Cref{thmLowerBoundDi}, which states that, for $\eps > 0$ small enough, there is a constant $c > 0$ such that $ \Infc[q]{R}{X^iY^i}{TW,\CD_i} \geq c$ holds for every $i \in [k]$.
The proof of \Cref{thmLowerBoundDi} is fairly involved and presented in the next section.
\section{Proof of \Cref{thmLowerBoundDi}}
\label{secProofKeyThm}

This section is devoted to the proof of \Cref{thmLowerBoundDi}, which claims that there is a constant $c > 0$ such that $ \Infc[q]{R}{X^iY^i}{TW,\CD_i} \geq c$ holds for every $i \in [k]$, provided that $\eps > 0$ is small enough.
To this end, we consider the event $\CD_i$ as embedded in a smaller rectangle arising from $\ps{W^i}{ab}W^{-i}$ for $a,b \in \bits$ as described before. Recall that $\CD_i$ is the event that $\Zeros(\ps{W^i}{00}W^{-i})=T_*$, which also implies $\Zeros(\ps{W^i}{01}W^{-i})=\Zeros(\ps{W^i}{10}W^{-i})=T_*$ and $\Zeros(W) = T_* \cup T_i$. We now fix some part of the probability space.

Let us define $Z = \Zeros(W)$ and $T_{-i} = T_1 \dots T_{i - 1} T_{i + 1} \dots T_k$. Note that this deviates from our definition of $W^{-i}$ in the sense that $T_{-i}$ does not include $T_0$. Fix any $i \in [k]$ and $zt_{-i}$ such that $q(zt_{-i} \mid \CD_i) > 0$.
\begin{repclm}{clmztuior}
    $z = \CT \setminus t_{-i}$, where $\CT$ is the set of all terminals.
\end{repclm}




    Defining $\gamma = \sqrt[4]{\Infc[q]{R}{X^iY^i}{TW, zt_{-i} \CD_i}}$, we will show that $\gamma$ can be bounded from below in terms of $\eps$ only, and for $\eps > 0$ small enough, this lower bound will be a positive universal constant (that is independent of $G,i,z,t_{-i}$).
    Together with \Cref{prop:infoexpected}, this implies the statement of \Cref{thmLowerBoundDi}.

    To prove the above, let us consider the event
    \[
    \CE: \exists a, b \in \bits \text{ with } X^iY^i \in \ps{W^i}{ab} \text{ and } X^{-i}Y^{-i} \in W^{-i},
    \]
    which is a rectangle. Define random variables $A,B \in \bits^2$ to be the indicator random variables for the event that $X^iY^i \in \ps{W^{i}}{ab}$. We can now consider a new smaller probability space $p(\cdot) = q(\cdot \mid zt_{-i},\CE)$. Note that the event $\CD_i$ is the event that $A=0, B=0$ in this probability space. One should view the above as an embedding of the  two bit AND function: if $\mathrm{AND}(a,b)=0$, then $\zeros(X,Y) = T_*$ while if $\mathrm{AND}(a,b)=1$, then the edge labels of the path get flipped  along the path connecting the terminals in $T_i$ and $\zeros(X,Y) = T_* \cup T_i$.
    This implies that the slack entry $S(X,Y)$ for any $X,Y \in \ps{W^i}{ab} W^{-i}$ equals ${2+\eps}$ if $a=0$ or $b=0$ and equals $4+\eps$ if both $a=b=1$ and the probabilities in the space $p(\cdot)$ are obtained by normalizing these slack matrix entries. In particular, this will imply:

    \begin{proposition}
        \label{bigprop0}
        We have $p(A = 0, B = 0) = \frac{2 + \eps}{10 + 4\eps}$ and
        $p(A = 1, B = 1) = \frac{4 + \eps}{10 + 4\eps}$.
    \end{proposition}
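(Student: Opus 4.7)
The plan is to compute $p(AB) = q(AB \mid zt_{-i}, \CE)$ by a direct counting argument: decompose the event $\CE$ into the four rectangles indexed by $(a,b)\in\bits^2$, determine the slack-matrix value on each rectangle, sum, and take ratios. Recall that under $q$ the marginal of $XY$ is $S/\|S\|_1$ and that $(T,W)$ is independent of $(X,Y,R)$. By \Cref{clmwindowb}, for any window $w_e$ the four sets $\ps{w_e}{ab}$ are pairwise disjoint and each has exactly two elements, so for any fixed pair $(t,w)$ with $\Zeros(w)=z$ and $T_{-i}(t)=t_{-i}$, each rectangle $\ps{w^i}{ab}w^{-i}$ contains exactly $2^{|E|}$ points and the four rectangles are disjoint.

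The key step is to determine $|\Zeros(x,y)|$ on each of these four rectangles. When $(a,b)=(1,1)$, $\ps{w_e}{11}=w_e$, so $g(x_e,y_e)$ equals the value of $w_e$ on every edge; hence the vertex parities under $(x,y)$ agree with those induced by $w$, and $\Zeros(x,y)=\Zeros(w)=z$. By \Cref{clmztuior}, $|z|=(2k+3)-2(k-1)=5$, giving $S_{xy}=4+\eps$. When $(a,b)\ne(1,1)$, \Cref{clmwindowa} says $\ps{w_e}{ab}$ is a $\comp{b_e}$-window whenever $w_e$ is a $b_e$-window, so moving from $(1,1)$ to $(a,b)$ flips $g(x_e,y_e)$ on every edge $e\in\Path_i(t)$ while leaving all other edges untouched. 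Flipping the bit label along a path flips the parity at precisely its two endpoints, which are the terminals in $T_i$. Since $T_i\subseteq z$ by \Cref{clmztuior}, this yields $\Zeros(x,y)=z\setminus T_i = T_u(t)\cup T_0(t)$, a set of size $3$, and hence $S_{xy}=2+\eps$.

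With the slack values in hand, the final computation is routine. For any $(t,w)$ with $\Zeros(w)=z$ and $T_{-i}(t)=t_{-i}$, the contribution to $q(A=a,B=b,z,t_{-i},\CE)$ equals $q(t)q(w)\cdot 2^{|E|}\cdot s_{ab}/\|S\|_1$, where $s_{ab}=4+\eps$ if $(a,b)=(1,1)$ and $s_{ab}=2+\eps$ otherwise. Summing over all such $(t,w)$ gives $q(A=a,B=b,z,t_{-i},\CE)=K\cdot s_{ab}$ for a common constant $K$, and summing further over $(a,b)$ yields $q(z,t_{-i},\CE)=K\cdot\bigl(3(2+\eps)+(4+\eps)\bigr)=K(10+4\eps)$. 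Taking the ratio gives $p(A=a,B=b)=s_{ab}/(10+4\eps)$, which is exactly the statement of the proposition.

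The only non-trivial step is the parity calculation above: it hinges on the special structure of the gadget, namely \Cref{clmwindowa}, which is precisely what makes the window shift $w^i\mapsto\ps{w^i}{ab}$ equivalent to a bit-flip of the labels along $\Path_i(t)$. Once this is in hand, the common factor $K$ cancels between numerator and denominator and everything reduces to elementary arithmetic.
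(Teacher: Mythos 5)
Your proof is correct and takes essentially the same approach as the paper: compute the slack-matrix entries on each of the four rectangles $\ps{w^i}{ab}w^{-i}$, observe that the entry is $4+\eps$ when $(a,b)=(1,1)$ and $2+\eps$ otherwise, and take ratios. Two small points of comparison: (i) the paper obtains the slack values by invoking Claim~\ref{clmztuior} and stating the resulting $\Zeros$ sets directly, whereas you fill in the path-flipping argument (Claim~\ref{clmwindowa} implies that swapping $w^i$ for $\ps{w^i}{ab}$ with $(a,b)\ne(1,1)$ flips the edge labels along $\Path_i(t)$ and hence flips parity exactly at the two endpoints in $T_i$) — this is a useful elaboration of a step the paper leaves implicit; (ii) your presentation via the common constant $K$ that cancels in the ratio is cleaner than the paper's final display, which as written sums a $tw$-independent ratio over $tw$ without the compensating $p(tw)$ weight (a harmless notational slip, as the intent is the same).
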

Our goal is to show that $\gamma$ is polynomially related to $\eps$ in order to achieve these probabilities. To do this, we note that the marginal distribution over $A$ and $B$ in this probability space is that of a uniform random bit, as can be seen by considering the slack matrix entries. Moreover, conditioned on $R,T,W$, it turns out that the random variables $A$ and $B$ are independent in this probability space.  

Thus if $\gamma \ll 1$, then Pinsker's inequality and properties of the gadget will imply that conditioning on a typical value of $R=r, T=t, W=w$ does not change the distribution of $A$ and $B$ much, i.e. conditioned on a typical $(r,t,w)$, the random variables $A$ and $B$ are distributed close to two uniform random bits. To formalize this, we define ${\alpha}_{rtw} = |p(A=0|rtw) - p(A=1|rtw)|$ and $\beta_{rtw} = |p(B=0|rtw) - p(B=1|rtw)|$. We say that $(r,t_*,w)$ is good if for any $t$ that refines $t_* \sqcup t_i$, we have $\alpha_{rtw} \le \gamma$ and $\beta_{rtw} \le \gamma$. Let $\CG$ denote the event that the triple $(R, T_*, W)$ is good. We will show  that most of the contribution to the event $A=0, B=0$ comes from the good triples:
    \begin{proposition}
        \label{bigprop1}
        We have $p(A = 0, B = 0) \leq p(A = 0, B = 0, \CG) + O(\gamma)$.
    \end{proposition}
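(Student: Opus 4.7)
The plan is to prove the equivalent statement $p(A=0,B=0,\bar\CG) = O(\gamma)$, from which the proposition follows by decomposing $p(A=0,B=0) = p(A=0,B=0,\CG) + p(A=0,B=0,\bar\CG)$. The overall strategy consists of Pinsker's inequality, a gadget-based transfer between the four rectangles, and finally a Markov plus union bound.

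First, I would apply \Cref{proposition:pinsker} to $\gamma^4 = \Infc[p]{R}{X^iY^i}{TW, A=0, B=0}$, obtaining
\[
\BE_{p(rtwx^iy^i \mid A=0, B=0)}\bigl|p(R \mid twx^iy^i) - p(R \mid tw, A=0, B=0)\bigr| \leq \gamma^2,
\]
using that $(A,B)$ is determined by $(W^i, X^iY^i)$, so the extra conditioning on $A=0,B=0$ is redundant once $x^iy^i \in \ps{w^i}{00}$. Next, I would invoke the bijection $w \mapsto \ps{w}{ab}$ from \Cref{clmwindowb}, which shuffles the four rectangles $\ps{W^i}{ab}$ whose union is $\CE$. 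Since the slack entries differ across these rectangles only by a constant ($2+\eps$ versus $4+\eps$, cf.\ \Cref{bigprop0}), transporting the Pinsker bound through the bijection loses only a constant factor, yielding
\[
\BE \bigl|p(R\mid tw, A=a, B=b) - p(R \mid tw, A=0, B=0)\bigr| = O(\gamma^2)
\]
for each $(a,b) \in \bits^2$, after averaging over $(t,w)$ under a suitable reference distribution.

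I would then convert these distributional bounds into bounds on $\alpha_{rtw}, \beta_{rtw}$ via a variant of \Cref{prop:bitcond}. Because the marginal $p(A\mid tw)$ (and similarly $p(B \mid tw)$) is a fixed constant in $(0,1)$ bounded away from $0$ and $1$ by \Cref{bigprop0}, the non-uniformity only contributes a bounded multiplicative factor, and one obtains $\BE_{p(\cdot\mid A=0,B=0)}[\alpha_{RT W}] + \BE_{p(\cdot\mid A=0,B=0)}[\beta_{RTW}] = O(\gamma^2)$ for each fixed refinement of $T_* \sqcup T_i$. Markov's inequality then yields $p(\alpha_{RTW} > \gamma \mid A=0,B=0) \leq O(\gamma)$ and likewise for $\beta$. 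Taking a union bound over the constant-size family of refinements of $T_* \sqcup T_i$ gives $p(\bar\CG \mid A=0, B=0) = O(\gamma)$, and multiplying by $p(A=0,B=0) = \Theta(1)$ completes the argument.

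The main obstacle is the bijection-transfer step. The map $w \mapsto \ps{w}{ab}$ is defined on windows, but the quantities at hand are mutual informations involving $R, T, W, X^iY^i$ whose weights come from the slack matrix. Tracking carefully how the joint measure transforms under this bijection, and showing that the statistical-distance bound from Pinsker truly survives the transport (up to the constant-factor losses from slack ratios and from the non-uniform marginals of $A$ and $B$), is the delicate technical heart of the proof.
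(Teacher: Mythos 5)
Your proposal gets the coarse skeleton right (Pinsker on $\gamma^4$, some form of bijection/symmetry transfer, Markov, union bound over the constant-size family of refinements), and the first reduction to $p(\comp\CG \mid A=0, B=0) = O(\gamma)$ matches the paper's. But there is a genuine gap exactly where you flag ``the delicate technical heart'': the step you describe as ``transporting the Pinsker bound through the bijection'' does not, as stated, accomplish what you need, and the actual mechanism is different and considerably more involved.

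Pinsker applied to $\gamma^4 = \Infc[p]{R}{X^iY^i}{TW, A=0, B=0}$ controls how much $R$ is correlated with the \emph{entire path labeling} $X^iY^i$ inside a single rectangle of $\CE$. The quantity $\alpha_{rtw}$ involves the correlation of $R$ with the \emph{single bit} $A$, which asks how $R$ distinguishes between different rectangles. Passing from one to the other is not a matter of the slack ratio ($2+\eps$ vs.\ $4+\eps$) giving a constant-factor loss: the issue is that $\Infc{R}{X^iY^i}{\cdots}$ controls the discrepancy between $p(R \mid twx^iy^i)$ and $p(R\mid tw)$, whereas $\alpha_{rtw}$ compares $p(R\mid tw, A=0, \cdot)$ against $p(R\mid tw, A=1,\cdot)$, i.e.\ against a \emph{different} sub-rectangle, and nothing about the bijection $w\mapsto \ps{w}{ab}$ by itself bridges that. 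In the paper, the bijection (used in Claim~\ref{jjiofe}) only shows that the four information quantities $\Infc[p]{R}{X^iY^i}{TW, A=a, B=b}$ are equal; the actual bridge is a two-step pivot: (i) by Claim~\ref{zwezroi}, $p(R \mid tw^{-i}, a, B=0)$ does not depend on $a$, so by the triangle inequality one reduces to bounding $\BE|p(R\mid tw, a, B=0) - p(R\mid tw^{-i}, a, B=0)|$, the effect of conditioning on the window $W^i$; and (ii) Lemma~\ref{lem:graphtheory} relates this quantity to $\sqrt{\Infc[p]{R}{X^iY^i}{W^i, tw^{-i}a, B=0}}$, which \emph{itself} requires the random-walk construction of Lemma~\ref{lemshouldntbeclaim} (introducing the auxiliary variable $\tilde{X}^i\tilde{Y}^i$, the Markov-chain replacements \eqref{eqn:einsteadofw}, and the $192$-step random walk in the gadget graph $\FG$) because conditioning on $W^i$ can in principle carry far more information than the conditional mutual information term suggests. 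Your proposal contains no analogue of steps (i)--(ii), and the slack-ratio heuristic you offer in their place would not produce them. A secondary, smaller discrepancy: the Markov step in the paper is taken under $p(\cdot \mid B=0)$ (after establishing $p(A=0\mid B=0)=\tfrac12$ via Claim~\ref{clmwednf}), not under $p(\cdot \mid A=0,B=0)$ as you write; the distinction matters for the subsequent bookkeeping.
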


Note that for any good $(r,t,w)$, we have that for any $a,b \in \bits^2$, 
\[ p(A=a, B=b \mid rtw)  = 1/4 \pm \Theta(\gamma),\]
which is promising. This by itself does not give us any contradiction as the above proposition does not imply that the probability $p(\CG)$ is large (it only shows that the probability $p(\CG\mid A=0, B=0)$ is large). So, we further partition the good triples $\CG$ into $\CG_1$ and $\CG_2$, where $\CG_2$ corresponds to all good $(r, t_*, w)$ such that all $t_*'$ for which $(r, t_*', w)$ is good satisfy $|t_* \cap t_*' | \geq 2$, and $\CG_1 = \CG \setminus \CG_2$.

Note that there are $\binom{5}{3}$ choices of splitting $t = t_* \cup t_i$, but in $\CG_2$, if we fix an $r,w$, then the set $\{ t \mid (r,t,w) \in \CG_2\}$ forms a 2-intersecting family and Erdos-Ko-Rado theorem implies that the size of this family is at most $4$. Moreover, for any $(r,t,w) \in \CG_2$ the event where $A=1,B=1$ is equivalent to the event $\Zeros(X, Y) = T_* \cup T_i$ in the probability space $p(\cdot)$. In particular, we shall use this and the properties of the gadget to show that the contribution to the event $A=B=0$ that comes from $\CG_2$ is roughly $\frac{4}{\binom{5}{3}} = \frac25$ fraction of  the probability of the event $p(A=1,B=1)$:
    \begin{proposition}
        \label{bigprop2}
        If $\gamma$ is small enough, then $p(A = 0, B = 0, \CG_2) \leq \frac{2}{5} \cdot p(A = 1, B = 1) + O(\gamma)$.
    \end{proposition}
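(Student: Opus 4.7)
The plan is to decompose the bound into two estimates, chained together. The first, based on near-uniformity of $(A,B)$ on $\CG_2$, shows $p(A=0,B=0,\CG_2) \le p(A=1,B=1,\CG_2) + O(\gamma)$. For this, every $(r,t_*,w) \in \CG_2 \subseteq \CG$ satisfies $\alpha_{rtw}, \beta_{rtw} \le \gamma$ for every refinement $t$ of $t_* \cup t_i$, by the definition of goodness. Combined with the conditional independence of $A$ and $B$ given $R,T,W$ noted in the excerpt, this yields $p(A=a,B=b \mid r,t,w) = \tfrac14 + O(\gamma)$ for each $(a,b) \in \bits^2$, and averaging over refinements preserves the bound for $p(A=a,B=b \mid r,t_*,w)$. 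Weighting by $p(r,t_*,w)$ and summing over $\CG_2$ gives $p(A=0,B=0,\CG_2) \le (\tfrac14 + O(\gamma))\,p(\CG_2)$ and $p(A=1,B=1,\CG_2) \ge (\tfrac14 - O(\gamma))\,p(\CG_2)$; eliminating $p(\CG_2)$ yields the first estimate.

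The second estimate, $p(A=1,B=1,\CG_2) \le \tfrac{2}{5}\,p(A=1,B=1)$, is the combinatorial core. Fix $r,w$ and set $\mathcal{Z} := \CT \setminus t_{-i}$, the fixed $5$-element set of ``free'' terminals. The key observation is that the event $A=1,B=1$ is equivalent to $(X,Y) \in W$, and on this event the constraint $\CE$ is automatic (with witnesses $a=b=1$), so the conditioning on $\CE$ used to define $p$ becomes vacuous. Since in the raw space $q$ the partition $T$ is independent of $(R,X,Y,W)$, conditional on $r,w,z,t_{-i}$ and $A=1,B=1$ the triple $(T_u,T_0,T_i)$ is uniform over the $30$ partitions of $\mathcal{Z}$ of type $(1,2,2)$, hence $T_*$ is uniform over $\binom{\mathcal{Z}}{3}$. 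As the excerpt notes, $\CF_{rw}^{(2)} := \{t_* : (r,t_*,w) \in \CG_2\}$ is a $2$-intersecting family of $3$-subsets of the $5$-set $\mathcal{Z}$; by the Erd\H{o}s-Ko-Rado theorem (or a direct check, with extremal case $\binom{\mathcal{Z}'}{3}$ for a $4$-subset $\mathcal{Z}' \subset \mathcal{Z}$), $|\CF_{rw}^{(2)}| \le 4$. Hence $p(\CG_2 \mid r,w,A=1,B=1) \le 4/10 = 2/5$, and multiplying by $p(A=1,B=1 \mid r,w)$ and summing over $r,w$ yields the second estimate, which combined with the first completes the proof.

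The main obstacle is the uniformity of $T_*$ given $r,w$ and $A=1,B=1$: although $T$ is independent of everything in the raw space $q$, the conditioning on $\CE$ in the definition of $p$ a priori couples $T$ with $W$ through the dependence of $W^i$ on the path $i$ (which is determined by $t$). The point is that on the event $(X,Y) \in W$ the constraint $\CE$ is trivially satisfied for every partition $T$ with $T_{-i} = t_{-i}$, so the coupling disappears precisely on the event we care about, enabling the clean counting argument over the $\binom{5}{3} = 10$ possible splits.
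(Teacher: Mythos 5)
Your proposal is correct and follows essentially the same route as the paper's proof: Step 1 uses goodness plus the conditional independence $A - RTW - B$ (Claim~\ref{clmuoetzwh}) to get $p(A=0,B=0,\CG_2) \le p(A=1,B=1,\CG_2) + O(\gamma)$, and Step 2 uses the observation that $\{A=1,B=1\} = \{(X,Y)\in W\}$ makes the conditioning on $\CE$ vacuous, so $T_*$ is uniform over $\binom{[5]}{3}$ given $r,w$, together with the Erd\H{o}s--Ko--Rado bound of $4$ on $2$-intersecting families to conclude $p(A=1,B=1,\CG_2)\le\tfrac25\,p(A=1,B=1)$. The only cosmetic difference is that the paper derives the Step~1 inequality termwise via $p(A{=}0,B{=}0,rtw)\le p(A{=}1,B{=}1,rtw)+p(rtw)(2\gamma+\gamma^2)$ rather than eliminating $p(\CG_2)$ from two $\tfrac14\pm O(\gamma)$ bounds, but these are equivalent.
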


Now consider the set $\CG_1$ of remaining good triples. For any $(r,t,w)\in \CG_1$, there is another good $(r,t',w) \in \CG$ such that $|t\cap t'|=1$. In this case, we shall see that the two rectangles given by the partitions $t$ and $t'$ intersect and their intersection gives rise to slack matrix entries with value $\eps$. The probability of this event in the unconditioned probability space is $O(\eps)$. We are able to relate the contribution of the event that $p(A=0,B=0,\CG_1)$ to within constant factor of the above event. 

    \begin{proposition}
        \label{bigprop3}
        If $\gamma$ is small enough, then
        $p(A = 0, B = 0, \CG_1) = O(\eps)$.
    \end{proposition}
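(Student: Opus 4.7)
The plan is to exploit the rank-$1$ decomposition $R^{(r)} = u^{(r)} (v^{(r)})^\intercal$ together with the goodness condition to relate the ``diagonal'' quantity $u^{(r)}_{t_*, 0} v^{(r)}_{t_*, 0}$ (which controls $p(A=0, B=0, r, t_*, w)$) to an ``off-diagonal'' quantity supported on a cross rectangle whose entries all have slack exactly $\eps$. For each $(r, t_*, w)$ and $a, b \in \bits$, let $\CA_{t_*, a}$ and $\CB_{t_*, b}$ denote the $X$- and $Y$-projections of the subrectangle $\ps{w^{i}}{ab} w^{-i}$ (where $i$ is the index corresponding to $t_*$), and $u^{(r)}_{t_*, a}, v^{(r)}_{t_*, b}$ the corresponding restricted sums of $u^{(r)}$ and $v^{(r)}$.

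The crucial geometric observation is that the ``unflipped'' projections $\CA_{t_*, 1}$ and $\CB_{t_*, 1}$ coincide with the $X$- and $Y$-projections of $w$ itself and therefore do not depend on $t_*$. Consequently $u^{(r)}_{t_*, 1}$ and $v^{(r)}_{t_*, 1}$ depend only on $(r, w)$. Via the rank-$1$ structure the goodness of $(r, t_*, w)$ translates into $u^{(r)}_{t_*, 0} = (1 \pm O(\gamma)) u^{(r)}_{t_*, 1}$ and analogously for $v$. Since each $(r, t_*, w) \in \CG_1$ admits a partner $(r, t_*', w) \in \CG_1$ with $|t_* \cap t_*'| \le 1$, and this partner is itself good (by the symmetry of the defining condition), combining these identities gives
\[
u^{(r)}_{t_*, 0} = (1 \pm O(\gamma)) u^{(r)}_{t_*', 0}, \qquad v^{(r)}_{t_*, 0} = (1 \pm O(\gamma)) v^{(r)}_{t_*', 0}.
\]

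I would then argue that every entry of the cross rectangle $\CA_{t_*', 0} \times \CB_{t_*, 0}$ has $|\Zeros(x, y)| = 1$, and hence slack $\eps$. Indeed, in this rectangle the induced edge labels are flipped (relative to those of $W$) precisely on the two edge-disjoint paths connecting the pairs $T_i$ and $T_i'$ (as determined by $t$ and $t'$ respectively), which flips the parities of the four nodes $T_i \cup T_i'$; starting from $\Zeros(W) = T_* \cup T_i$, the resulting zero set is the single terminal $t_* \cap t_*'$. Summing the corresponding slack entries yields $\sum_r u^{(r)}_{t_*', 0} v^{(r)}_{t_*, 0} = \eps \cdot |\CA_{t_*', 0}| \cdot |\CB_{t_*, 0}|$, and by the symmetry of the window structure $|\CA_{t_*', 0}| = |\CA_{t_*, 0}|$.

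To finish, I would chain the estimates
\[
\sum_{(r, t_*, w) \in \CG_1} u^{(r)}_{t_*, 0} v^{(r)}_{t_*, 0} \le (1 + O(\gamma)) \cdot 3 \eps \sum_{t_*, w} |\CA_{t_*, 0}| \cdot |\CB_{t_*, 0}|,
\]
where the factor $3$ accounts for the at most three possible partners $t_*'$ per $t_*$ (the $3$-subsets of the five remaining terminals whose intersection with $t_*$ has size exactly $1$). Comparing this with the normalization $q(z t_{-i}, \CE)$, which is proportional to $(10 + 4\eps) \sum_{t_*, w} |\CA_{t_*, 0}| \cdot |\CB_{t_*, 0}|$ (summing the four sub-rectangles with slacks $2 + \eps, 2 + \eps, 2 + \eps, 4 + \eps$), gives $p(A=0, B=0, \CG_1) = O(\eps)$. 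The main obstacle is carefully establishing the goodness-based approximation $u^{(r)}_{t_*, 0} = (1 \pm O(\gamma)) u^{(r)}_{t_*', 0}$ with the correct quantitative control on the error, and verifying the identities $\CA_{t_*, 1} = $ ``$X$-projection of $w$'' and $|\CA_{t_*', 0}| = |\CA_{t_*, 0}|$ uniformly in the orientations (horizontal or vertical) of the individual windows $w_e$, where the roles of $A$ and $B$ differ per edge.
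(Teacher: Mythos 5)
Your proposal is essentially the paper's proof translated into the language of the rank-$1$ factors $u^{(r)}, v^{(r)}$ rather than the conditional probability spaces. The geometric observation you call ``crucial'' --- that $\CA_{t_*,1}$ is the $X$-projection of $w$ and hence $u^{(r)}_{t_*,1}$ depends only on $(r,w)$ --- is exactly what the paper uses implicitly in the proof of \Cref{lemasg78gfs}, where the ratio $\nicefrac{q(XY\in w, rw)q(t')}{q(XY\in w,rw)q(t)}$ is inserted to relate $t$ and $t'$. And your cross rectangle $\CA_{t'_*,0}\times\CB_{t_*,0}$ with all entries of slack $\eps$ is precisely the $(A,B,C,D)=(1,0,0,1)$ sub-rectangle of the $\CF$-decomposition that the paper builds, leading to the same ``flip two disjoint paths, land on a single-node zero set'' conclusion used in \Cref{jlbopf}.

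There is, however, a genuine gap: you assert without justification that the paths connecting $T_i$ (from $t$) and $T'_i$ (from $t'$) are edge-disjoint. Routability only gives edge-disjoint paths within a \emph{single} partition, so $\Path_i(t)$ and $\Path_i(t')$ are not edge-disjoint for generic refinements $t, t'$. They are edge-disjoint only when $t$ and $t'$ are chosen to induce the same unordered partition of the terminals with indices $0$ and $i$ swapped, which forces $t_u = t'_u = t_*\cap t'_*$, $t_0 = t'_i$ and $t_i = t'_0$; this is exactly the choice $\sigma(\cdot)$ the paper makes explicit (``the pairing of all the terminals is the same apart from the indices $i$ and $0$ being swapped''), and it then yields $\Path_i(t') = \Path_0(t)$. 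Without this restriction the two paths may share edges; their union would then no longer flip the parities of exactly the four terminals $t_i \cup t'_i$, so the claim $|\Zeros(x,y)|=1$ on the cross rectangle can fail, and the whole argument collapses. You should also note that the chain $u^{(r)}_{t_*,0} \approx u^{(r)}_{t_*,1} = u^{(r)}_{t'_*,1} \approx u^{(r)}_{t'_*,0}$ requires $(r,t'_*,w)$ to itself lie in $\CG$ (goodness of the partner), and the partner $t'_*$ therefore depends on $r$, so the final sum over $r$ cannot be pulled out directly as written --- one must first bound by a sum over all candidate partners (contributing an extra factor, roughly $3 \times 3$ rather than $3$, once the refinements of $t_*$ are also accounted for), which is harmless for the $O(\eps)$ conclusion but should be made explicit.
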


Note that in the exact case where $\eps = 0$, the slack matrix entry is zero, so one can be pretty loose with relating the probability of the above event under $p$ to the the probability in the unconditioned probability space which is also zero. Thus, the above is a lot easier to prove for the exact case. However, here since we want to relate the terms within a constant factor and $\eps$ is also a constant, this requires a more careful computation and the proof becomes more involved.

Given the above, if $\gamma$ is small enough, then using \Cref{bigprop0}, \Cref{bigprop1}, \Cref{bigprop2}, and \Cref{bigprop3} we obtain
    \begin{align*}
         \tfrac{2+\eps}{10+4\eps} &= p(A = 0, B = 0) \leq p(A = 0, B = 0, \CG) + O(\gamma) \\
        &= p(A = 0, B = 0, \CG_1) + p(A = 0, B = 0, \CG_2) + O(\gamma) \\
        &\leq p(A = 0, B = 0, \CG_1) + \tfrac25 p(A = 1, B = 1) + O(\gamma) \\
        &= O(\eps) + \tfrac{4 + \eps}{25 + 10\eps} + O(\gamma).
    \end{align*}
    By rearranging, we obtain $\frac{2 + 3\eps}{50+20\eps} \le O(\eps) + O(\gamma)$.
    Therefore, if $\eps > 0$ is small enough, we obtain a constant positive lower bound on $\gamma$.
    Thus, it remains to prove Propositions~\ref{bigprop0}--\ref{bigprop3}.



\subsection{Proof of \Cref{bigprop0}}
The proof follows from the fact that the probability space is obtained by normalizing the slack matrix entries, which are $2+\eps$ if $a=0$ or $b=0$, and $4+\eps$ otherwise.
To this end, we first observe that for every $t,w,a,b$ we have
\begin{align*}
    p(A = a, B = b, tw) &= q(zt_{-i}, \CE)^{-1} \cdot q(A = a, B = b, tw, zt_{-i}, \CE) \\
    &= q(zt_{-i}, \CE)^{-1} \cdot q\left(tw, XY \in \ps{w^i}{ab}w^{-i}\right) \\
    &= q(zt_{-i}, \CE)^{-1} \cdot q(tw) \cdot q \left(XY \in \ps{w^i}{ab}w^{-i}\right) \\
    &= \frac{q(tw) \cdot 2^{|E|} }{q(zt_{-i}, \CE) \cdot ||S||_1} \cdot \left( \left| \Zeros(\ps{w^i}{ab}w^{-i}) \right| - 1 + \eps \right).
\end{align*}
Furthermore, if $t_u, t_0, t_i, w$ satisfy $p(tw) > 0$, then $\Zeros(w) = z = t_u \cup t_0 \cup t_i$ by Claim \ref{clmztuior},
and hence we have $\Zeros(\ps{w^i}{00}w^{-i}) = \Zeros(\ps{w^i}{01}w^{-i}) = \Zeros(\ps{w^i}{10}w^{-i}) = t_u \cup t_0$ as well as $\Zeros(\ps{w^i}{11}w^{-i}) = t_u \cup t_0 \cup t_i$.
Thus, we obtain
\begin{align*}
    p(A=0,B=0)
    &= \sum_{tw : p(tw) > 0} \frac{p(A=0,B=0,tw)}{\sum_{a,b \in \bits} p(A=a,B=b,tw)} \\
    &= \frac{2+\eps}{3(2+\eps)+(4+\eps)}
    = \frac{2+\eps}{10+4\eps}
\end{align*}
and
\begin{align*}
    p(A=1,B=1)
    &= \sum_{tw : p(tw) > 0} \frac{p(A=1,B=1,tw)}{\sum_{a,b \in \bits} p(A=a,B=b,tw)} \\
    &= \frac{4+\eps}{3(2+\eps)+(4+\eps)}
    = \frac{4+\eps}{10+4\eps}.
\end{align*}

\subsection{Proof of \Cref{bigprop1}}

Since
  \begin{align*}
      p(A = 0, B = 0) &= p(A = 0, B = 0, \CG) + p(A = 0, B = 0, \comp{\CG}) \\
      &= p(A = 0, B = 0, \CG) + p(\comp{\CG} \mid A = 0, B = 0) \cdot p(A = 0, B = 0) \\
      &\leq p(A = 0, B = 0, \CG) + p(\comp{\CG} \mid A = 0, B = 0),
  \end{align*}
    it suffices to show that
    $p(\comp{\CG} \mid A = 0, B = 0) = O(\gamma)$ holds.
    
    To this end, let us consider three (correlated) ways of generating all refinements $T, T', T''$ that agree with the split $T_* \sqcup T_i$. We identify the nodes in $T_*$ with $\{1,2,3\}$ and let $U$ be a uniformly random element of $\{1,2,3\}$. Let the unpaired nodes in the pairings $T, T', T''$ be $T_u := U$, $T'_u = U + 1 \mod 3$, and $T''_u = U + 2 \mod 3$, respectively. Note that the marginal distribution of $T_u$ is the same as that of $T_u'$ and $T_u''$.
    Next, let $\CB_\alpha = \{ rtw \mid \alpha_{rtw} > \gamma\}$, define $\CB_\beta$ analogously, and set $\CB = \CB_\alpha \cup \CB_\beta$.
    Note that if $rt_*w \in \comp{\CG}$, then either $rtw \in \CB$ or $rt'w \in \CB$ or $rt''w \in \CB$. As they have the same marginal distributions, by a union bound, we obtain $p(\comp{\CG} \mid A=0,B=0) \le 3 \cdot p(rtw \in \CB \mid A=0,B=0)$. Again by a union bound, we have $p(\CB \mid A = 0, B = 0) \leq p(\CB_\alpha \mid A = 0, B = 0) + p(\CB_\beta \mid A = 0, B = 0)$. In what follows, we will show that $p(rtw \in \CB_\alpha \mid A=0,B=0) = O(\gamma)$ holds. $p(rtw \in \CB_\beta \mid A=0,B=0) = O(\gamma)$ follows analogously.
    To this end, we make use of the following claim which again follows from the fact that the probability space (without conditioning on $r$) is obtained from normalizing the slack matrix entries :

    \begin{repclm}{clmwednf}
        $p(A = 0, B = 0, tw) = p(A = 1, B = 0, tw)$.
    \end{repclm}
    In particular, this implies $p(A=0 \mid B=0) = p(A=1 \mid B=0) = \frac12$.
    Thus, by Markov's inequality, we have
    \begin{multline*}
        p(\CB_\alpha \mid A = 0, B = 0)
        = p(\CB_\alpha,A=0 \mid B = 0) \cdot p(A=0 \mid B=0)^{-1} \\
        = 2 \cdot p(\CB_\alpha,A=0 \mid B = 0) 
        \leq 2 \cdot p(\CB_\alpha \mid B = 0)
        \leq \frac{2}{\gamma} \BE_{p(rtw|B=0)}[\alpha_{rtw}].
    \end{multline*}
    Hence, it remains to show that $\BE_{p(rtw | B = 0)}[\alpha_{rtw}] = O(\gamma^2)$ holds.
    To this end, note that:
    \begin{repclm}{clmuoetzwh}
        $A - R - B$ in $p(\cdot \mid tw)$.
    \end{repclm}
    Using Claim \ref{clmuoetzwh} and \Cref{prop:bitcond}, which we can apply since Claim \ref{clmwednf} implies that $p(A \mid tw, B = 0)$ is uniform, we obtain
    \begin{align*}
        &\quad \BE_{p(rtw|B=0)}[\alpha_{rtw}] \\
        &= \BE_{p(rtw \mid B = 0)} |p(A = 0 \mid rtw) - p(A = 1 \mid rtw)|\\
        &= \BE_{p(rtw \mid B = 0)} |p(A = 0 \mid rtw, B = 0) - p(A = 1 \mid rtw, B = 0)| \\
        &= \BE_{p(tw \mid B = 0)} \BE_{p(r \mid tw, B = 0)} |p(A = 0 \mid rtw, B = 0) - p(A = 1 \mid rtw, B = 0) | \\
        &= \BE_{p(tw \mid B = 0)} |p(R \mid tw, A = 0, B = 0) - p(R \mid tw, A = 1, B = 0)|.
    \end{align*}
    Thus, it remains to show that $\BE_{p(tw \mid B = 0)} |p(R \mid tw, A = 0, B = 0) - p(R \mid tw, A = 1, B = 0)| = O(\gamma^2)$ holds.
    For this, we observe:
    \begin{repclm}{zwezroi}
        For any $a \in \bits$, we have $p(TW \mid A = a, B = 0) = p(TW \mid B = 0)$ and $p(R \mid tw^{-i}, A = a, B = 0) = p(R \mid tw^{-i}, B = 0)$.
    \end{repclm}
    
    Thus, we have
    \begin{align*}
        &\quad \BE_{p(tw \mid B = 0)} |p(R \mid tw, A=0, B=0) - p(R \mid tw, A = 1, B = 0)| \\
        &\leq \BE_{p(tw \mid B = 0)} \bigl[ |p(R \mid tw, A=0, B=0) - p(R \mid tw^{-i}, B=0)|\\
        & \quad+ |p(R \mid tw, A = 1, B=0) - p(R \mid tw^{-i}, B = 0)| \bigr] \\
        &= \BE_{p(tw \mid A = 0, B = 0)} |p(R \mid tw, A = 0, B = 0) - p(R \mid tw^{-i}, A = 0, B = 0)| \\
        &\quad + \BE_{p(tw \mid A =1, B=0)} |p(R \mid tw, A = 1, B = 0) - p(R \mid tw^{-i}, A = 1, B = 0)|,
    \end{align*}
    where the inequality follows from the triangle inequality for statistical distance.
    Hence, it suffices to show that $\BE_{p(tw \mid a, B = 0)} |p(R \mid twa, B = 0) - p(R \mid tw^{-i}a, B = 0)| = O(\gamma^2)$ holds for any $a \in \bits$.
    In order to do this we make use of the following lemma which bounds the total variation distance to the mutual information. This lemma relies on Pinsker's inequality and properties of the gadget and will be proven later.
    \begin{lemma}
        \label{lem:graphtheory}
        We have
        \begin{align*}
            &\BE_{p(w^i \mid tw^{-i}a, B = 0)} |p(R \mid twa, B = 0) - p(R \mid tw^{-i}a, B = 0)| \\
            = &O \left(\sqrt{\Infc[p]{R}{X^iY^i}{W^i, tw^{-i}a, B = 0}} \right).
        \end{align*}
    \end{lemma}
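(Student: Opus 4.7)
The statement bounds a total variation distance by a function of a conditional mutual information, so the natural tool is Pinsker's inequality (\Cref{proposition:pinsker}). The bridge between the LHS, which averages over $W^i$, and the RHS, which conditions on $W^i$, is the Markov chain $R - X^iY^i - W^i$, which holds in the conditional probability space $p(\cdot\,|\,tw^{-i}a, B=0)$.

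\textbf{Key steps.} First, I would verify the Markov chain: given $X^iY^i$, the window $W^i$ is determined by an independent uniform choice, per edge on $\Path_i(T)$, among the constantly many windows $w$ satisfying $(X_e, Y_e) \in \ps{w}{a0}$ together with the parity constraint $\Zeros(W) = z$ inherited from the overall conditioning; this auxiliary randomness is independent of $R$. Second, writing $C_0' := tw^{-i}a, B=0$, the Markov chain yields the identity $p(R\,|\,w^i, C_0') = \BE_{p(x^iy^i\,|\,w^i, C_0')}[p(R\,|\,x^iy^i, C_0')]$, and combined with Jensen's inequality and the analogous identity $p(R\,|\,C_0') = \BE_{p(x^iy^i\,|\,C_0')}[p(R\,|\,x^iy^i, C_0')]$, one obtains
\[
    \text{LHS} \leq \BE_{p(x^iy^i\,|\,C_0')}\bigl| p(R\,|\,x^iy^i, C_0') - p(R\,|\,C_0') \bigr| \leq \sqrt{\Infc[p]{R}{X^iY^i}{C_0'}},
\]
where the last inequality is Pinsker. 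Third, I would convert $\Infc[p]{R}{X^iY^i}{C_0'}$ into a constant multiple of $\Infc[p]{R}{X^iY^i}{W^i, C_0'}$ by exploiting the gadget structure.

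\textbf{Main obstacle.} The third step is the substantive obstacle. By the chain rule together with the Markov identity $\Infc[p]{R}{W^i}{X^iY^i, C_0'} = 0$, we have
\[
    \Infc[p]{R}{X^iY^i}{C_0'} = \Infc[p]{R}{W^i}{C_0'} + \Infc[p]{R}{X^iY^i}{W^i, C_0'},
\]
so the desired comparison is equivalent to showing $\Infc[p]{R}{W^i}{C_0'} = O(\Infc[p]{R}{X^iY^i}{W^i, C_0'})$. This bound does not follow from general information-theoretic principles: for an arbitrary coupling in which $W^i$ reveals a coarse function of $X^iY^i$ that is well-correlated with $R$, the comparison would fail. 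Therefore the gadget $g$ must enter substantively. My plan for this step is to use the four bijections $\ps{\cdot}{ab}$ from Claim~\ref{clmwindowb}, together with the observation that each distribution $p(X^iY^i\,|\,w^i, C_0')$ is supported on the $2$-element set $\ps{w^i}{a0}$, to compare $p(X^iY^i\,|\,w^i, C_0')$ and $p(X^iY^i\,|\,C_0')$ at a finer scale than generic Pinsker allows. Concretely, the bijections give a symmetry between the four shifted windows $\ps{W^i}{ab}$ that should let one re-sample $W^i$ within its orbit in a way that depends only on $X^iY^i$ (and auxiliary randomness independent of $R$), so that the information $W^i$ conveys about $R$ beyond what $X^iY^i$ conveys given $W^i$ must come from a bounded-alphabet, gadget-governed channel whose contribution is at most a constant multiple of $\Infc[p]{R}{X^iY^i}{W^i, C_0'}$. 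This gadget-specific comparison, rather than the generic information-theoretic manipulations, is where I expect the main technical work to lie.
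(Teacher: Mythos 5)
You have correctly isolated where the substance lies: a direct application of \Cref{prop:markov2} and Pinsker only yields $\sqrt{\mathbf{I}_s(R{:}X^iY^i)}$, where $s = p(\cdot\mid tw^{-i}a, B=0)$, and one then needs to pass from this unconditional mutual information to the \emph{conditional} quantity $\mathbf{I}_s(R{:}X^iY^i\mid W^i)$. But the paper never performs this conversion, and your proposed plan for it — decompose via the chain rule into $\mathbf{I}_s(R{:}W^i) + \mathbf{I}_s(R{:}X^iY^i\mid W^i)$ and then argue that the first term is $O$ of the second — is not the argument, and it is not at all clear that the inequality $\mathbf{I}_s(R{:}W^i) = O(\mathbf{I}_s(R{:}X^iY^i\mid W^i))$ holds. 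The Markov chain $R - X^iY^i - W^i$ only bounds $\mathbf{I}_s(R{:}W^i)$ from above by $\mathbf{I}_s(R{:}X^iY^i)$, which is circular; and the gadget symmetries you invoke (the bijections $w\mapsto\ps{w}{ab}$) by themselves do not preclude a scenario in which $R$ correlates heavily with a coarse, window-level feature of $X^iY^i$ while being nearly independent of $X^iY^i$ within each window.

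The paper avoids the unconditional mutual information entirely. It introduces a fresh variable $\tilde{X}^i\tilde{Y}^i$ — a uniformly random element of the same window $\ps{W^i}{a0}$, drawn independently of $X^iY^iR$ given $W^i$ — and proves two Markov chains (Claims \ref{clmhkfdsud} and \ref{clmhtoewph}) that together with \Cref{prop:markov1} yield $\mathbf{I}_s(R{:}X^iY^i \mid \{X^iY^i,\tilde{X}^i\tilde{Y}^i\}) \le \mathbf{I}_s(R{:}X^iY^i \mid W^i)$. The Pinsker step is then applied to \emph{this} quantity, namely to the total-variation distance between $s(R\mid V=v)$ and $s(R\mid \{u,v\})$ for a single step $(v,u)\sim s(X^iY^i,\tilde{X}^i\tilde{Y}^i)$. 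The bridge from there back to $\BE_{s(x^iy^i)}|s(R\mid x^iy^i)-s(R)|$ is the random-walk construction of \Cref{lemshouldntbeclaim}: a length-$192$ walk on the $6$-regular gadget graph $\FG$ whose endpoints $V_0,V_{192}$ are independent, each marginal equal to $s(X^iY^i)$, and each consecutive pair distributed as $s(X^iY^i,\tilde X^i\tilde Y^i)$. Telescoping the triangle inequality over the $192$ steps and using the independence of the endpoints gives the needed bound with the conditional MI appearing directly. This is where the connectivity and regularity of $\FG$, rather than the window-shift bijections, do the real work; your sketch does not anticipate the auxiliary coupling $\tilde X^i\tilde Y^i$ or the random-walk telescoping, which are the essential ideas.
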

    
    Using this lemma, we have
    \begin{align*}
        &\BE_{p(tw \mid a, B = 0)} |p(R \mid twa, B = 0) - p(R \mid tw^{-i}a, B = 0)| \\
        &= \BE_{p(tw^{-i} \mid a, B = 0)} \left[ \BE_{p(w^i \mid tw^{-i}a, B = 0)} |p(R \mid twa, B = 0) - p(R \mid tw^{-i}a, B = 0)|\right] \\
        &= \BE_{p(tw^{-i} \mid a, B = 0)} \left[ O \left( \sqrt{\Infc[p]{R}{X^iY^i}{W^i, tw^{-i}a, B = 0}} \right) \right]\\
        &= O \left( \sqrt{\BE_{p(tw^{-i} \mid a, B = 0)} \Infc[p]{R}{X^iY^i}{W^i, tw^{-i}a, B = 0}} \right) \\
        &= O \left( \sqrt{\Infc[p]{R}{X^iY^i}{TW, a, B = 0}} \right),
    \end{align*}
    where the second to last equality follows from Jensen's inequality and the concavity of $\sqrt{\cdot}$, and the last equality is implied by \Cref{prop:infoexpected}. Thus, it is left to show that $\Infc[p]{R}{X^iY^i}{TW, a, B = 0} = \gamma^4$ holds.
    To show this, consider the following observation, which follows from the fact that the marginal distribution on windows $\ps{W^i}{01}W^{-i}$ is the same as that of $\ps{W^i}{00}W^{-i}$.
    \begin{repclm}{jjiofe}
        $\Infc[p]{R}{X^iY^i}{TW, A = 1, B = 0} = \Infc[p]{R}{X^iY^i}{TW, A = 0, B = 0}$.
    \end{repclm}
    This implies
    \begin{align*}
        \Infc[p]{R}{X^iY^i}{TW,a, B = 0} &= \Infc[p]{R}{X^iY^i}{TW, A = 0, B = 0} \\
        &= \Infc[q]{R}{X^iY^i}{TW, A = 0, B = 0, zt_{-i}, \CE}.
    \end{align*}
    Recall that for $A = B = 0$, the event $\CE$ is equivalent to $XY \in \ps{W^i}{00} W^{-i}$ which further implies $\Zeros(X, Y) = T_u \cup T_0$ since $z = T_u \cup T_0 \cup T_i$ by Claim \ref{clmztuior}. This shows 
    \[
        \Infc[q]{R}{X^iY^i}{TW, A = 0, B = 0, zt_{-i}, \CE}
        = \Infc[q]{R}{X^iY^i}{TW, zt_{-i}, \CD_i} = \gamma^4.
    \]
    Thus, it remains to prove \Cref{lem:graphtheory}.

\begin{proof}[Proof of \Cref{lem:graphtheory}]
Let us fix any $a \in \bits$ and $tw^{-i}$. For the sake of notation, we denote $p(\cdot \mid tw^{-i}a, B = 0)$ by $s(\cdot)$.
Let $\tilde{X}^i\tilde{Y}^i$ be a uniformly random input in $\ps{W^i}{a 0}$ such that $\tilde{X}^i\tilde{Y}^i$ and $X^iY^iR$ are independent in $s( \cdot \mid w^i)$ for all windows $w^i$.
We start with the following observations.

\begin{repclm}{hkfdsua}
    There exists a $0/1$-vector $b^i$ such that every window $w^i$ with $s(w^i) > 0$ is a $\comp{b^i}$-window.
\end{repclm}

\begin{repclm}{clmhkfdsud}
    In $s(\cdot)$, we have $R - X^iY^i - \{X^iY^i,\tilde{X}^i\tilde{Y}^i\}W^i$.
\end{repclm}

\begin{repclm}{clmhtoewph}
    In $s(\cdot)$, we have $R - \{X^iY^i,\tilde{X}^i\tilde{Y}^i\} - W^i$.
\end{repclm}

By Claim \ref{clmhkfdsud}, Claim \ref{clmhtoewph} and \Cref{prop:markov1}, we have
\begin{equation}
    \label{eqn:einsteadofw}
    \Infc[s]{R}{X^iY^i}{\{X^iY^i,\tilde{X}^i\tilde{Y}^i\}} \le \Infc[s]{R}{X^iY^i}{W^i}.
\end{equation} 
We will show that
\begin{equation}
    \label{eqn:graphthings}
    \BE_{s(x^iy^i)} |s(R \mid x^iy^i) - s(R)| = O \left( \sqrt{\Infc[s]{R}{X^iY^i}{\{X^iY^i,\tilde{X}^i\tilde{Y}^i\}}} \right)
\end{equation}
holds.
Together with \eqref{eqn:einsteadofw}, this yields
\begin{align*}
    \BE_{s(w^i)} |s(R \mid w^i) - s(R)| 
    &\leq \BE_{s(x^iy^i)} |s(R \mid x^iy^i) - s(R)| \\
    &= O \left( \sqrt{\Infc[s]{R}{X^iY^i}{\{X^iY^i,\tilde{X}^i\tilde{Y}^i\}}} \right) \\
    &= O \left( \sqrt{\Infc[s]{R}{X^iY^i}{W^i}} \right),
\end{align*}
where the first inequality follows from \Cref{prop:markov2}.
Thus, it remains to prove \eqref{eqn:graphthings}.
To this end, we consider the undirected graph $\FG$ defined on the node set $g^{-1}(b^i)$, where there is an edge between two nodes iff they are contained in a common window. Here, we explicitly allow self-loops. The following is shown in~\cite{Goos}. For the sake of completeness, we also provide a proof in \Cref{secProofsOfLemmas}.

\begin{replemma}{lemshouldntbeclaim}
    There is a distribution $s'$ on the set of random walks $V_0, \dots, V_{192}$ of length 192 in $\FG$ such that $V_0$ and $V_{192}$ are independent in $s'(\cdot)$, $s'(V_k) = s(X^iY^i)$ for $k \in \{0, \dots, 192 \}$, and $s'(V_{k - 1}, V_k) = s(X^iY^i, \tilde{X}^i \tilde{Y}^i)$ for any $k \in \{1, \dots, 192\}$.
\end{replemma}

Using the random walk provided by this Lemma and setting $V = X^iY^i, U = \tilde{X}^i \tilde{Y}^i$ for conciseness of notation, we get
\begin{align*}
    \BE_{s(v)} |s(R \mid v) - s(R)|    
    &= \BE_{s'(v_0)} |s(R \mid V = v_0) - s(R)| \\
    &= \BE_{s'(v_0)} |\BE_{s'(v_{192})} s(R \mid V = v_0) - \BE_{s'(v_{192})} s(R \mid V = v_{192})| \\
    &\leq \BE_{s'(v_0)} \BE_{s'(v_{192})} |s(R \mid V = v_0) - s(R \mid V = v_{192})| \\
    &= \BE_{s'(v_0, v_{192})} |s(R \mid V = v_0) - s(R \mid V = v_{192})| \\
    &= \BE_{s'(v_0, \dots, v_{192})} |s(R \mid V = v_0) - s(R \mid V = v_{192})| \\
    &\leq \BE_{s'(v_0, \dots, v_{192})} \sum \nolimits_{k = 1}^{192} |s(R \mid V = v_{k - 1}) - s(R \mid V = v_k)| \\
    &= \sum \nolimits_{k = 1}^{192} \BE_{s'(v_{k - 1}, v_k)} |s(R \mid V = v_{k - 1}) - s(R \mid V = v_k)| \\
    &= 192 \cdot \BE_{s(v, u)} |s(R \mid V = v) - s(R \mid V = u)|,
\end{align*}
where both inequalities follow from the triangle inequality for statistical distance, and the third equality holds since $V_0$ and $V_{192}$ are independent. Note that we have 
\begin{align*}
    s(R \mid \{u, v\})
    &= s(R, V = v, U = u \mid \{u, v\}) + s(R, V = u, U = v \mid \{u, v\}) \\
    &= s(R \mid V = v, U = u) \cdot s(V = v, U = u \mid \{u, v\}) \\
    &\quad + s(R \mid V = u, U = v) \cdot s(V = u, U = v \mid \{u, v\}) \\
    &= \tfrac12 \cdot s(R \mid V = v, U = u) + \tfrac12 \cdot s(R \mid V = u, U = v) \\
    &= \tfrac12 \cdot s(R \mid V = v) + \tfrac12 \cdot s(R \mid V = u),
\end{align*}
where the last equality follows since $R - V - U$, which is a consequence of Claim \ref{clmhtoewph}.
Hence, we get
\begin{align*}
    \BE_{s(v, u)} |s(R \mid V = v) - s(R &\mid V = u)| = 2 \cdot \BE_{s(v, u)} |s(R \mid V = v) - s(R \mid \{u, v\})| \\
    &= 2 \cdot \BE_{s(u, \{u, v\})} |s(R \mid V = v, \{u, v\}) - s(R \mid \{u, v\})| \\
    &\leq 2 \cdot \sqrt{\Infc[s]{R}{V}{\{U, V\}}},
\end{align*}
where the second equality follows from $R - V - \{U, V\}$, which is again a consequence of Claim  \ref{clmhtoewph}, and the inequality follows from Pinsker's inequality (\Cref{proposition:pinsker}).
\qed
\end{proof}

\subsection{Proof of \Cref{bigprop2}}
The basic insight behind this proof is that when $(r,t,w) \in \CG_2$, then $p(A,B\mid rtw)$ is close to the uniform distribution on two bits and hence one can show that $p(A = 0, B = 0, \CG_2)$ is close to $p(A = 1, B = 1, \CG_2)$. However, the event that $A=1,B=1$ implies $\Zeros(X,Y) = T_* \cup T_i$, regardless of the particular $\binom{5}{3}$ ways of refining $T_* \cup T_i$. This means that each such possible refinement contributes $1/\binom{5}{3}$ to the probability of the event. Since all partitions considered in $\CG_2$ intersect in two elements, we can bound the total contribution by using bounds on the maximum size of 2-intersecting families of $\binom{[5]}{3}$. We now proceed with the rigorous arguments to formalize the above intuition.
    
    Let $rt_*w \in \CG_2$ be arbitrary and let $t$ be a refinement of $t_*$.
    Recall that $rt_*w \in \CG_2 \subseteq \CG$ implies $\alpha_{rt_*w} \leq \gamma$ and $\beta_{rt_*w} \leq \gamma$, which yields $p(A = 0 \mid rtw) \leq p(A = 1 \mid rtw) + \gamma$ and $p(B = 0 \mid rtw) \leq p(B = 1 \mid rtw) + \gamma$. Thus we get
    \begin{align*}
        &\quad ~ p(A = 0, \, B = 0, rtw) = p(A = 0, B = 0 \mid rtw) \cdot p(rtw) \\
        &= p(A = 0 \mid rtw) \cdot p(B = 0 \mid rtw) \cdot p(rtw) \\
        &\leq (p(A = 1 \mid rtw) + \gamma) \cdot (p(B = 1 \mid rtw) + \gamma) \cdot p(rtw) \\
        &= p(A = 1, B = 1, rtw) + p(rtw) \cdot ((p(A = 1 \mid rtw) + p(B = 1 \mid rtw)) \cdot \gamma + \gamma^2) \\
        &\leq p(A = 1, B = 1, rtw) + p(rtw) \cdot (2\gamma + \gamma^2),
    \end{align*}
    where the second and third equality follow from Claim \ref{clmuoetzwh}.
    With the above, we get
    \begin{align*}
        p(A = 0, B &= 0, \CG_2) = \sum_{rt_*w \in \CG_2} \sum_{t \text{ refining } t_*} p(A = 0, B = 0, rtw) \\
        &\leq \sum_{rt_*w \in \CG_2} \sum_{t \text{ refining } t_*} \left[ p(A = 1, B = 1, rtw) + p(rtw) \cdot (2 \gamma + \gamma^2) \right] \\
        &= p(A = 1, B = 1, \CG_2) + (2 \gamma + \gamma^2) \cdot \sum_{rt_*w \in \CG_2} \sum_{t \text{ refining } t_*} p(rtw) \\
        &\leq p(A = 1, B = 1, \CG_2) + 2 \gamma + \gamma^2 \\
        &= p(A = 1, B = 1, \CG_2) + O(\gamma),
    \end{align*}
    where the last equality holds since $\gamma$ is assumed to be small.
    It remains to show that $p(A = 1, B = 1, \CG_2) \leq \frac25 p(A = 1, B = 1)$ holds.
    To this end, we observe that every $t_*$ consistent with $t_{-i}$ and every $rw$ satisfy
        \begin{align*}
            p(t_* \mid A = 1, B = 1, rw)
            &= q(t_* \mid A = 1, B = 1, rwzt_{-i}, \CE) \\
            &= q(t_* \mid rwt_{-i}, XY \in w)
            = q(t_* \mid t_{-i})
            = \nicefrac{1}{\binom53} = \nicefrac{1}{10},
        \end{align*}
    where the third equality follows from the independence of $XRY$, $T$ and $W$ in $q(\cdot)$ and the fourth equality holds since $q(T)$ is uniform.
    Thus, we have
    \begin{align*}
        p(A = 1, B = 1, \CG_2) &= \sum_{rt_*w \in \CG_2} p(t_* \mid A = 1, B = 1, rw) \cdot p(A = 1, B = 1, rw) \\
        &= \frac{1}{10} \cdot \sum_{rt_*w \in \CG_2} p(A = 1, B = 1, rw) \\
        &= \frac{1}{10} \cdot \sum_{rw} |\{t_* : rt_*w \in \CG_2\}| \cdot p(A = 1, B = 1, rw).
    \end{align*}
    Finally, we claim that $|\{t_* : rt_*w \in \CG_2\}| \le 4$ holds for every $rw$, which yields the desired inequality $p(A = 1, B = 1, \CG_2) \le \frac25 p(A = 1, B = 1)$.
    To see this, fix any $rw$ and recall that since $t_{-i}$ is already fixed, there are five nodes left to choose $t_*$ from.
    By the definition of $\CG_2$, all $t_*^1, t_*^2$ with $rt_*^1w,rt_*^2w \in \CG$ satisfy $|t_*^1 \cap t_*^2| \geq 2$.
    Therefore, by Erdos-Ko-Rado type bounds~\cite{W84} on the size of intersecting families\footnote{Note that the bound in~\cite{W84} says that for any $n \ge 6$, the size of any $2$-intersecting family of $\binom{[n]}{3}$ is at most $n-2$. The bound here follows from the case of $n=6$. One can also consult ~\cite{sinha2018lower} for a self-contained proof of this fact.}, we obtain $|\{t_* : rt_*w \in \CG_2\}| \leq 4$.
    
\subsection{Proof of \Cref{bigprop3}}
    In what follows, all $(r,t,w)$ considered will lie in $\CG$ and the reader should tacitly assume this in all the statements below unless specified otherwise. The following two propositions follow from the fact that conditioned on $(r,t,w) \in \CG$, the distribution  $p(A,B \mid r,t,w)$ is close to uniform. We defer the proofs to the end of this section. 
    \begin{lemma}
        \label{lemasfhuafg}
        For $\gamma > 0$ small enough and any $a,b,a',b' \in \bits$, we have $\frac{p(A = a, B = b, rtw)}{p(A = a', B = b', rtw)} \in \left[\tfrac12,2\right]$.
    \end{lemma}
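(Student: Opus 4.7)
The plan is to exploit the conditional independence $A - R - B$ from Claim \ref{clmuoetzwh} together with the fact that goodness forces the conditional marginals $p(A\mid rtw)$ and $p(B\mid rtw)$ to be nearly uniform. First, I would use Claim \ref{clmuoetzwh} to factor
\[
    p(A = a, B = b, rtw) = p(A = a \mid rtw) \cdot p(B = b \mid rtw) \cdot p(rtw),
\]
so that the $p(rtw)$ factor cancels in the ratio of interest, and the problem reduces to bounding ratios of products of two conditional marginals over a single pair $(a,b)$.

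Next, since the preamble of the subsection tacitly restricts to $(r,t,w)$ whose associated triple $(r, t_*, w)$ lies in $\CG$, the definition of goodness yields $\alpha_{rtw} \le \gamma$ and $\beta_{rtw} \le \gamma$. Combined with the identity $p(A = 0 \mid rtw) + p(A = 1 \mid rtw) = 1$, this translates into $p(A = a \mid rtw) \in \bigl[\tfrac{1-\gamma}{2}, \tfrac{1+\gamma}{2}\bigr]$, and analogously for $B$. Multiplying the extreme bounds then immediately yields
\[
    \frac{p(A = a, B = b, rtw)}{p(A = a', B = b', rtw)} \in \left[\left(\frac{1-\gamma}{1+\gamma}\right)^{\!2}, \, \left(\frac{1+\gamma}{1-\gamma}\right)^{\!2}\right].
\]

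Finally, for $\gamma$ small enough --- concretely, any $\gamma \le \sqrt{2} - 1$ (a universal constant, independent of all problem parameters) --- both bounds above lie in $[\tfrac12, 2]$, which is precisely the required conclusion. I do not foresee significant obstacles: the lemma reduces to a short computation once Claim \ref{clmuoetzwh} and the definition of $\CG$ are in hand. The only mild care needed is to ensure that the threshold on $\gamma$ is uniform over all good triples and over all choices of $a, b, a', b' \in \bits$, and the bound above handles this automatically since it depends only on $\gamma$.
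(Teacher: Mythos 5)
Your proof is correct in structure and takes essentially the same approach as the paper: both use the Markov chain $A - R - B$ from Claim~\ref{clmuoetzwh} together with the goodness bounds $\alpha_{rtw},\beta_{rtw} \le \gamma$ to control the conditional joint distribution of $A,B$. Your route is marginally cleaner — you bound each conditional marginal $p(A = a \mid rtw) \in \bigl[\tfrac{1-\gamma}{2},\tfrac{1+\gamma}{2}\bigr]$ directly and multiply, whereas the paper sums the goodness inequality over all four outcomes to get $4\,p(A=a,B=b\mid rtw) \le 1 + 2\gamma + \gamma^2$ and then infers a (looser) lower bound $\ge \tfrac14(1 - 6\gamma - 3\gamma^2)$ by subtraction; your lower bound $\tfrac14(1-\gamma)^2$ is tighter. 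The one slip is the stated threshold: requiring $\bigl(\tfrac{1+\gamma}{1-\gamma}\bigr)^2 \le 2$ gives $\gamma \le \tfrac{\sqrt 2 - 1}{\sqrt 2 + 1} = (\sqrt 2 - 1)^2 = 3 - 2\sqrt 2 \approx 0.17$, not $\gamma \le \sqrt 2 - 1 \approx 0.41$ (e.g.\ $\gamma = 0.3$ already fails). This does not affect the lemma, which only asserts the conclusion for $\gamma$ small enough.
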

    \begin{lemma}
        \label{lemasg78gfs}
        For $\gamma > 0$ small enough, $a,b,a',b' \in \bits$, and $t,t'$ refining $t_{-i}$ we have $\frac{p(A = a, B = b, rtw)}{p(A = a', B = b', rt'w)} \in \left[\tfrac14,4\right]$.
    \end{lemma}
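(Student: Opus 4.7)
The strategy is to decompose the ratio as
\[
    \frac{p(A=a,B=b,rtw)}{p(A=a',B=b',rt'w)}
    = \frac{p(A=a,B=b\mid rtw)}{p(A=a',B=b'\mid rt'w)} \cdot \frac{p(rtw)}{p(rt'w)},
\]
and bound each of the two factors by a constant in $[\tfrac12, 2]$. The first factor lies in this range because, under the goodness assumption that both $(r,t_*,w)$ and $(r,t_*',w)$ belong to $\CG$, the bounds $\alpha_{rtw},\beta_{rtw},\alpha_{rt'w},\beta_{rt'w} \leq \gamma$ combined with Claim~\ref{clmuoetzwh} (the Markov property $A-R-B$ in $p(\cdot\mid tw)$) imply that both $p(A,B\mid rtw)$ and $p(A,B\mid rt'w)$ factor as products of marginals that are each $O(\gamma)$-close to uniform on $\bits$. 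Taking $\gamma$ small enough thus puts the first factor into $[1-O(\gamma), 1+O(\gamma)] \subseteq [\tfrac12, 2]$.

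For the second factor, I would write $p(rtw) = p(tw) \cdot p(r\mid tw)$ and invoke the marginal invariance $p(tw) = p(t'w)$ established in the proof of \Cref{bigprop0}: indeed, the closed-form expression there only depends on $t$ through $q(t)$, which is uniform over partitions, and through the sum $\sum_{a,b}(|\Zeros(\ps{w^i}{ab}w^{-i})|-1+\eps) = 10+4\eps$, which is independent of the refinement since each $|\Zeros(\ps{w^i}{ab}w^{-i})|$ equals $3$ for $(a,b)\neq(1,1)$ and $5$ otherwise. The second factor thus reduces to $p(r\mid tw)/p(r\mid t'w)$. Applying Bayes' rule together with the independence $T \perp R$ in the space $q$ and using again that $q(\CE\mid tw) = 2^{|E|}(10+4\eps)/\|S\|_1$ is $t$-independent, one obtains the closed form
\[
    p(r\mid tw) = \frac{\sum_{(x,y) \in \bigcup_{ab}\ps{w^i}{ab} w^{-i}} R^{(r)}_{xy}}{2^{|E|}(10+4\eps)}.
\]

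The main obstacle is then to show that the two sums $\sum_{(x,y) \in \bigcup_{ab}\ps{w^i}{ab} w^{-i}} R^{(r)}_{xy}$ and $\sum_{(x,y) \in \bigcup_{ab}\ps{w'^i}{ab} w'^{-i}} R^{(r)}_{xy}$ lie within a factor of $2$ of each other. The two sets differ only on the edges in the symmetric difference $\Path_i(t) \triangle \Path_i(t')$, where one set includes the $8$-element ``meta-window'' $\bigcup_{ab}\ps{w_e}{ab}$ and the other uses only the $2$-element window $w_e$ itself. I would expand each union as a sum over the four choices of $(a,b)$ and apply \Cref{lemasfhuafg} to conclude that, within each of the two refinements, every inner sum $\sum_{(x,y) \in \ps{w^i}{ab} w^{-i}} R^{(r)}_{xy}$ agrees with $\sum_{(x,y) \in \ps{w^i}{00} w^{-i}} R^{(r)}_{xy}$ up to a factor in $[\tfrac12, 2]$; under the goodness assumption this bound actually sharpens to $1 \pm O(\gamma)$. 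Combined with an edge-by-edge matching that uses the bijection $w \mapsto \ps{w}{ab}$ of \Cref{clmwindowb} to pair up contributions on $\Path_i(t) \triangle \Path_i(t')$, this yields the desired bound on the ratio and completes the proof for $\gamma$ small enough.
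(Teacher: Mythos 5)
Your decomposition $\frac{p(A=a,B=b,rtw)}{p(A=a',B=b',rt'w)} = \frac{p(A=a,B=b\mid rtw)}{p(A=a',B=b'\mid rt'w)} \cdot \frac{p(rtw)}{p(rt'w)}$ is a valid starting point, and your handling of the first factor via goodness and Claim~\ref{clmuoetzwh} is sound. The closed form $p(r \mid tw) = \frac{\sum_{(x,y) \in w_\CE} R^{(r)}_{xy}}{2^{|E|}(10+4\eps)}$ is also correct. However, your final step has a genuine gap: the ``edge-by-edge matching'' via the bijection $w \mapsto \ps{w}{ab}$ of Claim~\ref{clmwindowb} does not pair up contributions as you describe. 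That bijection acts on individual windows of $\CX \times \CY$, not on the full sets $\bigcup_{ab}\ps{w^i}{ab} w^{-i}$, and since $R^{(r)}_{xy}$ is an arbitrary nonnegative rank-one matrix it does not decompose coordinate-wise over edges, so there is no sense in which contributions on $\Path_i(t) \triangle \Path_i(t')$ can be ``matched.'' Moreover, your description of the two sets as differing by an 8-element meta-window versus a 2-element window per path edge conflates per-edge projections with the set structure: both sets are unions of four $2^{|E|}$-element rectangles (with the $(a,b)$ tied across all path edges), not products.

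The observation your argument is missing --- and the one the paper's proof hinges on --- is that $\ps{w^i}{11} w^{-i} = w$ regardless of which refinement $t$ is used, so the $(1,1)$ block $\sum_{(x,y) \in w} R^{(r)}_{xy}$ is identical for $t$ and $t'$. That makes $p(A=1,B=1,rtw)$ independent of the refinement, which is the anchor you should compare against. With it, the paper dispenses with your decomposition entirely: it multiplies the ratio by $1 = \frac{p(A=1,B=1,rt'w)}{p(A=1,B=1,rtw)}$ (justified since this equals $\frac{q(XY \in w, rw)\, q(t')}{q(XY \in w, rw)\, q(t)} = 1$), rearranges into $\frac{p(A=a,B=b,rtw)}{p(A=1,B=1,rtw)} \cdot \frac{p(A=1,B=1,rt'w)}{p(A=a',B=b',rt'w)}$, and applies Lemma~\ref{lemasfhuafg} to each factor, immediately giving $[\tfrac14, 4]$. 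If you wish to keep your two-factor decomposition, you could salvage the second factor by writing $\sum_{a,b} R^{(t)}_{ab}$ with $R^{(t)}_{ab} := \sum_{(x,y) \in \ps{w^i}{ab}w^{-i}} R^{(r)}_{xy}$, noting $R^{(t)}_{11} = R^{(t')}_{11}$, and using goodness to bound each $R^{(t)}_{ab}$ within $1 \pm O(\gamma)$ of $R^{(t)}_{11}$ --- but no bijection or edge-by-edge argument is needed.
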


    Consider any $rt_*w \in \CG_1$. Note that there exists some $t'_* \neq t_*$ with $rt'_*w \in \CG$ (otherwise, $rt_*w \in \CG_2$). Note also that $rt'_*w \in \CG_1$ (otherwise, $rt_*w \notin \CG$).
     Among those $t'_*$, by the definition of $\CG_1$, we can choose one that further satisfies $|t_* \cap t'_*| = 1$. Note that we can make this choice in a symmetric way, meaning that if $t'_*$ is chosen for $t_*$, then $t_*$ is chosen for $t'_*$. Let $\sigma(t_*)$ denote the unique refinement of $t_*$ whose unpaired node is the unique node in $t_* \cap t'_*$.   
    Let us denote by $\CG_1^\sigma$ the event that $R= r, T = \sigma(t_*), W = w$ for some $rt_*w \in \CG_1$. Then, we have
    \begin{align*}
        p(A=0,B=0,\CG_1)
        & = \sum_{rt_*w \in \CG_1} \sum_{t \text{ refining } t_*} p(A = 0, B = 0, rtw) \\
        & \leq \sum_{rt_*w \in \CG_1} \sum_{t \text{ refining } t_*} 4 \cdot p(A = 0, B = 0, rw, T = \sigma(t_*)) \\
        & = \sum_{rt_*w \in \CG_1} 12 \cdot p(A = 0, B = 0, rw, T = \sigma(t_*)) \\
        & = 12 \cdot p(A = 0, B = 0, \CG_1^\sigma),
    \end{align*}
    where the inequality follows from \Cref{lemasg78gfs}.
    It remains to show that $p(A = 0, B = 0, \CG_1^\sigma) = O(\eps)$.
    For this,  let us define $\CF \supset \CE$ to be the event that $X^iY^i \in \ps{W^i}{ab}$ and $X^0Y^0 \in \ps{W^0}{cd}$ for some $a,b,c,d,\in \bits$, and let $ABCD$ denote the corresponding random variables. Since $\ps{W^0}{11} = W^0$, it follows that $\CE$ is equivalent to the event that there exist $a,b \in \bits$ such that $X^iY^i \in \ps{W^i}{ab}$ and $X^0Y^0 \in \ps{W^0}{11}$, i.e., $A,B \in \bits$ and $C=D=1$. Thus, we have
    \begin{equation}
    \label{eqasdag8s8as}
        \begin{split}
        &\quad ~ p(A=0,B=0,\CG^\sigma_1) \\
        &= q(A = 0, B = 0, \CG^\sigma_1 \mid zt_{-i}, \CE) \\
        &= q(\CE \mid zt_{-i}, \CF)^{-1} \cdot q(A = 0, B = 0, \CG^\sigma_1, \CE \mid zt_{-i}, \CF) \\
        &= q(\CE \mid zt_{-i}, \CF)^{-1} \cdot q(A = 0, B = 0, C = 1, D = 1, \CG^\sigma_1 \mid zt_{-i}, \CF) \\
        &= q(\CE \mid zt_{-i}, \CF)^{-1} \cdot \sum_{rtw \in \CG^\sigma_1} q(A = 0, B = 0, C = 1, D = 1, rtw \mid zt_{-i}, \CF).
        \end{split}
    \end{equation}
    Below, we will show that if $rtw \in \CG^\sigma_1$, then
    \begin{equation}
        \label{eqn:zeroevent}
        \begin{split}
        &\quad ~ q(A = 0, B = 0, C = 1, D = 1 \mid rtw, \CF) \\
        &\leq 16 \cdot q(A = 0, B = 1, C = 1, D = 0 \mid rtw, \CF).
        \end{split}
    \end{equation}
    With this, for every $rtw \in \CG^\sigma_1$ we get
    \begin{align*}
        q(A = 0, \, & B = 0, C = 1, D = 1, rtw \mid zt_{-i}, \CF) \\
        & = q(A = 0, B = 0, C = 1, D = 1 \mid rtw, \CF) \cdot q(rtw \mid zt_{-i}, \CF) \\
        & \le 16 \cdot q(A = 0, B = 1, C = 1, D = 0 \mid rtw, \CF) \cdot q(rtw \mid zt_{-i}, \CF) \\
        & = 16 \cdot q(A = 0, B = 1, C = 1, D = 0, rtw \mid zt_{-i}, \CF),
    \end{align*}
    and hence, by~\eqref{eqasdag8s8as} we obtain
    \begin{align*}
        &\quad ~ p(A=0,B=0,\CG^\sigma_1) \\
        &\le q(\CE \mid zt_{-i}, \CF)^{-1} \cdot 16 \cdot q(A = 0, B = 1, C = 1, D = 0, \CG^\sigma_1 \mid zt_{-i}, \CF) \\
        &\leq q(\CE \mid zt_{-i}, \CF)^{-1} \cdot 16 \cdot q(|\Zeros(X, Y)| = 1, \CG^\sigma_1 \mid zt_{-i}, \CF) \\
        &\leq q(\CE \mid zt_{-i}, \CF)^{-1} \cdot 16 \cdot q(|\Zeros(X, Y)| = 1 \mid zt_{-i}, \CF).
    \end{align*}
The second line follows since the event that $A=0,B=1,C=1,D=0$ is a subset of the event that $|\zeros(X,Y)|=1$. To see this, recall that we have $\Zeros(W) = z = T_u \cup T_0 \cup T_i$ by Claim \ref{clmztuior}. Since replacing $W^i$ and $W^0$ by $\ps{W^i}{01}$ and $\ps{W^0}{10}$ flips the value of the gadget along the paths corresponding to $T_i$ and $T_0$ (by Claim \ref{clmwindowa}), this shows $\Zeros(X, Y) = T_u$.
    
Notice that we have related the contribution coming from $\CG_1^\sigma$ to the event $|\zeros(X,Y)|=1$. Since whenever $|\zeros(x,y)|=1$, the corresponding slack matrix entry is $S_{xy} = \eps$, this gives us the following claim which then finishes the proof. 
    \begin{repclm}{jlbopf}
        $\frac{q(|\Zeros(X, Y)| = 1 \ \mid \ zt_{-i}, \CF)}{q(\CE \mid zt_{-i}, \CF)} = O(\eps)$.
    \end{repclm}
    It remains to prove \eqref{eqn:zeroevent}. Let $rtw \in \CG_1^\sigma$. From the definition of $\CG_1^\sigma$, it follows that there exists a partition $t'$ such that $t'_* \cap t_* = t_u = t'_u$ and $rt'w \in \CG_1^\sigma$. Observe that in this case, $t'_0 = t_i$ and $t'_i = t_0$, so the pairing of all the terminals is the same apart from the indices $i$ and $0$ being swapped. Moreover, we observe the following:
    \begin{repclm}{thewiufd}
        $AC - R - BD$ in $q( \cdot \mid tw, \CF)$.
    \end{repclm}
    
    From Claim \ref{thewiufd} and \Cref{lemasg78gfs}, we get
    \begin{align*}
        &\quad ~ \frac{q(A = 0, B = 0, C = 1, D = 1 \mid rtw, \CF)}
                {q(A = 0, B = 1, C = 1, D = 0 \mid rtw, \CF)} \\
        &= \frac{q(A = 0, C = 1 \mid rtw, \CF)}{q(A = 0, C = 1 \mid rtw, \CF)}
        \cdot \frac{q(B = 0, D = 1 \mid rtw, \CF)}{q(B = 1, D = 0 \mid rtw, \CF)} \\
        &= \frac{q(A = 1, C = 1 \mid rtw, \CF)}{q(A = 1, C = 1 \mid rtw, \CF)}
        \cdot \frac{q(B = 0, D = 1 \mid rtw, \CF)}{q(B = 1, D = 0 \mid rtw, \CF)} \\
        &= \frac{q(A = 1, B = 0, C = 1, D = 1 \mid rtw, \CF)}
                {q(A = 1, B = 1, C = 1, D = 0 \mid rtw, \CF)} \\
        &= \frac{q(A = 1, B = 0, C = 1, D = 1 \mid rtw, \CF)}
                {q(A = 1, B = 0, C = 1, D = 1 \mid rt'w, \CF)} \\
        &\quad \cdot \frac{q(A = 1, B = 1, C = 1, D = 1 \mid rt'w, \CF)}
                {q(A = 1, B = 1, C = 1, D = 1 \mid rtw, \CF)} \\
        &= \frac{p(A = 1, B = 0 \mid rtw)}{p(A = 1, B = 0 \mid rt'w)} 
        \cdot \frac{p(A = 1, B = 1 \mid rt'w)}{p(A = 1, B = 1 \mid rtw)} \\
        &= \frac{p(A = 1, B = 0, rtw)}{p(A = 1, B = 1, rtw)}
        \cdot \frac{p(A = 1, B = 1, rt'w)}{p(A = 1, B = 0, rt'w)} \\
        & \leq 4 \cdot 4 = 16,
    \end{align*}
    where the fourth equality uses $q(A = B = C = D = 1 \mid rtw, \CF) = q(XY \in w \mid rtw, \CF) = q(XY \in w \mid rt'w, \CF) = q(A = B = C = D = 1 \mid rt'w, \CF)$.
    Thus, it remains to prove \Cref{lemasfhuafg} and \Cref{lemasg78gfs}.

\begin{proof}[Proof of \Cref{lemasfhuafg}]
    Let us consider any $rt_*w \in \CG$ and any refinement $t$ of $t_*$.
    Recall that $A - R - B$ in $p(\cdot \mid tw)$ due to Claim \ref{clmuoetzwh} and thus, for any $a, b \in \bits$, we have that
    \begin{align*}
        &\quad ~ 4 \cdot p(A = a, B = b \mid rtw) \\
        &\leq (p(A = \comp{a} \mid rtw) + \gamma) \cdot (p(B = \comp{b} \mid rtw) + \gamma) + p(A = a \mid rtw) \cdot p(B = b \mid rtw) \\
        &+ p(A = a \mid rtw) \cdot (p(B = \comp{b} \mid rtw) + \gamma) + (p(A = \comp{a} \mid rtw) + \gamma) \cdot p(B = b \mid rtw)
         \\
        &\leq p(A \in \{a, \comp{a} \}, B \in \{b, \comp{b} \} \mid rtw) + 2  \gamma + \gamma^2 \\
        &= 1 + 2 \gamma + \gamma^2.
    \end{align*}
    This also implies
    \[
        p(A = a, B = b \mid rtw) = 1 - \sum_{(a', b') \neq (a, b)} p(A = a', B = b' \mid rtw) \geq 1 - \frac34 \cdot (1 + 2 \gamma + \gamma^2).
    \]
    Thus, choosing $\gamma > 0$ such that $ 0 < \frac{1 + 2 \gamma + \gamma^2}{1 - 6 \gamma - 3 \gamma^2} \le 2$, we obtain
    \[
        \frac{p(A = a, B = b, rtw)}{p(A = a', B = b', rtw)}
        = \frac{p(A = a, B = b \mid rtw)}{p(A = a', B = b' \mid rtw)}
        \in \left[\tfrac12,2\right]. \quad \qed
    \]
\end{proof}

\begin{proof}[Proof of \Cref{lemasg78gfs}]
    We have
    \begin{align*}
        \frac{p(A = a, B = b, rtw)}{p(A = a', B = b', rt'w)}
        &= \frac{p(A = a, B = b, rtw)}{p(A = a', B = b', rt'w)} \cdot \frac{q(XY \in w, rw) \cdot q(t')}{q(XY \in w, rw) \cdot q(t)} \\
        & = \frac{p(A = a, B = b, rtw)}{p(A = a', B = b', rt'w)}
            \cdot \frac{p(A = 1, B = 1, rt'w)}{p(A = 1, B = 1, rtw)} \\
        & = \frac{p(A = a, B = b, rtw)}{p(A = 1, B = 1, rtw)}
            \cdot \frac{p(A = 1, B = 1, rt'w)}{p(A = a', B = b', rt'w)} \\
            &\in \left[ \tfrac14, 4 \right],
    \end{align*}
    where the latter follows from \Cref{lemasfhuafg}.
    \qed
\end{proof}

\subsubsection{Acknowledgements} We would like to thank Mel Zürcher for initial discussions on this topic.

\bibliographystyle{splncs04}
\bibliography{references}

\appendix
\section{Lower Bounds for MIP formulations for Knapsack}
\label{secKnapsack}

In this section, we provide a proof of \Cref{thmMain} for the case of the knapsack problem, assuming that it holds already for the stable set problem.
To this end, we will give a polyhedral reduction between the two problems making use of an alternative definition of MIP formulations, which is more geometric and coincides with the notion used in \cite{Cevallos}.

\begin{lemma}
    \label{lemMIPdef}
    Let $\CV$ be a finite set and $\CF$ be a family of subsets of $\CV$.
    If $(Q,\sigma)$ is a MIP formulation of $(\CV,\CF)$, then there exists an affine subspace $L$ such that $P(\CV,\CF)$ is an affine projection of $\conv (\Gamma(Q \cap L,\sigma))$.
    Conversely, if $Q \subseteq \R^d$ is a polyhedron and $\sigma : \R^d \to \R^k$ is an affine map such that $P(\CV,\CF)$ is an affine projection of $\conv (\Gamma(Q,\sigma))$, then $(Q,\sigma)$ is a MIP formulation of $(\CV,\CF)$.
\end{lemma}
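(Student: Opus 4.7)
The plan is to prove the two directions separately.

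For the forward direction, given a MIP formulation $(Q,\sigma)$, I would set $L := \mathrm{aff}\{x_S : S \in \CF\}$ and construct an affine map $\pi : \R^d \to \R^\CV$ with $\pi(x_S) = \chi(S)$ as the witnessing projection. The first hurdle is showing this assignment extends to a well-defined affine map on $L$: if $\sum_i \lambda_i x_{S_i} = \sum_j \mu_j x_{S_j}$ are two affine representations of the same point, then applying each $c_w$ and using $c_w(x_S) = w(S) = \langle w, \chi(S)\rangle$ forces $\langle w, \sum_i \lambda_i \chi(S_i) - \sum_j \mu_j \chi(S_j)\rangle = 0$ for every $w \in \R^\CV$, hence the images agree. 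The map $\pi$ then extends affinely to all of $\R^d$.

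Once $\pi$ is in place, the identity $c_w(x) = \langle w, \pi(x)\rangle$ holds on $L$ (both sides are affine in $x$ and agree on the spanning points $x_S$). I would use this to check the nontrivial inclusion $\pi(\conv(\Gamma(Q \cap L, \sigma))) \subseteq P(\CV,\CF)$: for any $y \in \Gamma(Q \cap L, \sigma) \subseteq \Gamma$,
\[
\langle w, \pi(y)\rangle = c_w(y) \le \max_{x \in \Gamma} c_w(x) = \max_{S \in \CF} w(S) = \max_{p \in P(\CV,\CF)} \langle w, p \rangle
\]
for every $w$, and a standard separating-hyperplane argument yields $\pi(y) \in P(\CV,\CF)$. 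The opposite inclusion is immediate from the points $x_S$ themselves.

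For the reverse direction, starting from an affine $\pi$ realizing $P(\CV,\CF)$ as a projection of $\conv(\Gamma(Q,\sigma))$, I would note that each $\chi(S)$ is a vertex of the $0/1$-polytope $P(\CV,\CF)$ and hence an extreme point of $\conv(\pi(\Gamma))$; a Carathéodory argument then gives $\chi(S) \in \pi(\Gamma)$, providing a preimage $x_S \in \Gamma$. Defining $c_w(x) := \langle w, \pi(x)\rangle$ yields an affine objective with $c_w(x_S) = w(S)$, and the optimization requirement $\max_{x \in \Gamma} c_w(x) = \max_{S \in \CF} w(S)$ reduces to $\max_{p \in \conv(\pi(\Gamma))} \langle w, p\rangle = \max_{p \in P(\CV,\CF)} \langle w, p\rangle$, which holds by assumption.

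The only delicate point is the well-definedness of $\pi$ in the forward direction, which is where the collective richness of the affine functionals $c_w$ is used to transport affine dependences from $\{x_S\}$ to $\{\chi(S)\}$; everything else is a direct consequence of the MIP-formulation axioms combined with basic convex geometry, so I do not anticipate any genuine obstacle.
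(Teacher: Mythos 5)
Your argument is correct, and the reverse direction is essentially the paper's. The forward direction, however, takes a genuinely different route. The paper fixes a linear description $P(\CV,\CF)=\{y:Wy\le h\}$, builds from each row $W_{i,*}$ an affine ``slack map'' $\tau_i(x)=h_i-c_i(x)$, observes that $\tau$ mirrors the slack map $\gamma(y)=h-Wy$ of $P$ (which is injective), and then sets $L=\tau^{-1}(\gamma(\R^n))$ and $\pi=\gamma^{-1}\circ\tau$. The hard inclusion $\pi(\Gamma\cap L)\subseteq P$ comes from checking $\tau(x)\ge 0$ on $\Gamma$. You instead take $L=\mathrm{aff}\{x_S\}$, define $\pi$ coordinate-free by transporting affine dependences of $\{x_S\}$ to $\{\chi(S)\}$ via the family $\{c_w\}$, and prove the hard inclusion by a separating-hyperplane argument using the optimization axiom $\max_\Gamma c_w=\max_{S}w(S)$. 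Both are correct; your version avoids the non-canonical choice of inequality description of $P(\CV,\CF)$ and makes the role of the ``consistent objectives'' axiom more transparent, at the cost of the slightly delicate well-definedness check, whereas the paper's construction is more explicit (concrete formulas for $L$ and $\pi$) once a description of $P$ is fixed. One small point worth making explicit in your writeup: the well-definedness argument should also note that $x_S=x_{S'}$ forces $S=S'$ (take $w=\chi(\{v\})$), and that the affine map defined on $L$ extends (non-uniquely) to all of $\R^d$, both of which are routine but should be recorded.
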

\begin{proof}
    To prove the first claim, let $(Q,\sigma)$ be a MIP formulation of $(\CV,\CF)$.
    Let $P(\CV,\CF) = \{y \in \R^n : Wy \le h\}$, where $W \in \R^{p \times n}$, $h \in \R^p$.
    Moreover, let $\gamma : \R^n \to \R^p$ be the affine map that is given by $\gamma(y) = h - Wy$.
    Note that $\gamma$ is injective, and hence we may define the affine map $\nu : \gamma(\R^n) \to \R^n$ as the inverse of $\gamma$.

    For every $S \in \CF$, let $x_S$ denote the corresponding vector in $\Gamma = \Gamma(Q,\sigma)$.
    For $i \in [p]$, think of the $i$-th row $W_{i,*}$ of $W$ as weights on $\CV$.
    By the definition of a MIP formulation, there is an affine map $c_i : \R^d \to \R$ with $\max\{c_i(x) : x \in \Gamma\} \le h_i$ such that $W_{i,*} \chi(S) = c_i(x_S)$ holds for all $S \in \CF$.
    Here, $h_i$ denotes the $i$-th entry of $h$.
    Thus, defining the affine map $\tau : \R^d \to \R^p$ via $\tau_i(x) = h_i - c_i(x)$, we see that $\tau(x) \ge \mathbf{0}$ for all $x \in \Gamma$ and $\tau(x_S) = h - W\chi(S) = \gamma(\chi(S))$ for all $S \in \CF$.

    Letting $L = \{x \in \R^d: \tau(x) \in \gamma(\R^n)\}$ and $\pi := \nu \circ \tau: L \to \R^n$, it remains to show that $P(\CV,\CF) = \conv(\pi(\Gamma \cap L))$ holds.
    To this end, we denote the right-hand side by $P'$.
    Consider any set $S \in \CF$.
    Since $\tau(x_S) = \gamma(\chi(S))$ and $x_S \in \Gamma$, we have $\pi(x_S) \in P'$.
    Moreover, we have $\pi(x_S) = \nu(\tau(x_S)) = \nu(\gamma(\chi(S))) = \chi(S)$.
    This shows $P(\CV,\CF) \subseteq P'$.
    To see that $P' \subseteq P(\CV,\CF)$ holds, consider any $x \in \Gamma \cap L$.
    Since $x \in L$, there is some $y \in \R^n$ with $\gamma(y) = \tau(x)$.
    Note that we have $y \in P(\CV,\CF)$ since $\tau(x) \ge \mathbf{0}$.
    We conclude $\pi(x) = \nu(\tau(x)) = \nu(\gamma(y)) = y \in P(\CV,\CF)$.

    To prove the second claim, suppose that $P(\CV,\CF) = \pi(\conv(\Gamma))$ for some affine map $\pi : \R^d \to \R^n$ and $\Gamma = \Gamma(Q,\sigma)$, where $Q \subseteq \R^d$ and $\sigma : \R^d \to \R^k$.
    Clearly, for every $S \in \CF$, the point $\chi(S)$ is contained in $\conv(\pi(\Gamma))$.
    Since $\chi(S)$ is a vertex of $\conv(\pi(\Gamma))$, we even have $\chi(S) \in \pi(\Gamma)$.
    Thus, for every $S \in \CF$ there is a point $x_S \in \Gamma$ with $\pi(x_S) = \chi(S)$.
    For every node weights $w : \CV \to \R$ we also consider $w$ as a vector in $\R^n$ and define the corresponding affine map $c_w : \R^d \to \R$ via $c_w(x) = w^\intercal \pi(x)$.
    Note that we have $c_w(x_S) = w^\intercal \pi(x_S) = w^\intercal \chi(S) = w(S)$ for every $S \in \CF$ as well as
    \begin{align*}
        \max\{c_w(x) : x \in \Gamma\}
        = \max\{w^\intercal y : y \in \pi(\Gamma)\}
        & = \max\{w^\intercal y : y \in \pi(\conv(\Gamma))\} \\
        & = \max\{w^\intercal y : y \in P(\CV,\CF)\} \\
        & = \max\{w(S) : S \in \CF\}.
    \end{align*}
    This shows that $(Q,\sigma)$ is indeed a MIP formulation of $(\CV,\CF)$. \qed
\end{proof}

The idea for proving \Cref{thmMain} for the knapsack problem is as follows.
We consider any graph $G$ as in the statement of \Cref{thmMain} and its stable set polytope $P$.
Using a polyhedral reduction from Pokutta \& Van Vyve~\cite{pokutta2013note}, we construct a knapsack instance such that a face $F$ of the corresponding knapsack polytope can be affinely projected onto $P$.
Now, let $(Q,\sigma)$ be any MIP formulation for this knapsack problem, and suppose it has size $m$ and $k$ integer variables.
By \Cref{lemMIPdef}, there exists an affine subspace $L$ such that the knapsack polytope is an affine projection of $X := \conv (\Gamma(Q \cap L,\sigma))$.
Since $F$ is a face of the knapsack polytope, this means that there is a face $X'$ of $X$ that affinely projects onto $F$, and so $X'$ also affinely projects onto $P$.
Again by \Cref{lemMIPdef}, this yields a MIP formulation for the stable set problem over $G$ with $k$ integer variables and size at most $m$.
Since we assume that \Cref{thmMain} holds for the stable set problem, this will yield the desired bounds on $m$ and $k$.

Let us now formalize the above idea.
We assume that \Cref{thmMain} holds for the stable set problem, that is, there is a constant $c > 0$ and a family of graphs such that every MIP formulation for the stable set problem of an $n$-node graph in this family requires $\Omega(\nicefrac{n}{\log^2 n})$ integer variables, unless its size is at least $2^{cn/\log n}$.
In the proof of \Cref{thmMain} we will give a family of such graphs for which their number of edges is linear in the number of nodes.
Let $G = (V,E)$ be a graph in this family with $|V| = n$.

Consider the following knapsack instance on $N = n + |E| = \Theta(n)$ items that we identify with the elements in $V \cup E$.
We label the edges in $E$ by $e_1,\dots,e_{|E|}$ and define the sizes of the items $a : V \cup E \to \R_{\ge 0}$ via
\[
    a(e_j) = 4^j \text{ for all } j \in [|E|], \qquad a(v) = \sum \nolimits_{e : v \in e} a(e) \text{ for all } v \in V,
\]
and the capacity is given by
\[
    B = \sum_{e \in E} a(e).
\]
Thus, the feasible subsets of the knapsack problem correspond to the points in
\[
    X = \left\{x \in \{0,1\}^{V \cup E} : \sum \nolimits_{v \in V} a(v)x_v + \sum \nolimits_{e \in E} a(e)x_e \le B \right\}.
\]
Suppose now that $(Q,\sigma)$ is a MIP formulation for this knapsack instance, where $Q \subseteq \R^d$ is a polyhedron with $m$ facets and $\sigma : \R^d \to \R^k$.
By \Cref{lemMIPdef}, there exists an affine subspace $L \subseteq \R^d$ and an affine map $\pi : \R^d \to \R^{V \cup E}$ such that $\pi(\conv (\Gamma'))$ is equal to the knapsack polytope $P = \conv(X)$, where $\Gamma' = Q \cap L \cap \sigma^{-1}(\Z^k)$.
Consider the hyperplane $H = \{ x \in \R^{V \cup E} : \sum \nolimits_{v \in V} a(v)x_v + \sum \nolimits_{e \in E} a(e)x_e = B\}$ and the set $F = P \cap H$.
Since $F$ is a face of $P$ and $P = \pi(\conv(\Gamma'))$, we see that $\conv(\Gamma') \cap \pi^{-1}(H)$ is a face of $\conv(\Gamma')$.
In particular, we obtain $\conv(\Gamma') \cap \pi^{-1}(H) = \conv(\Gamma' \cap \pi^{-1}(H))$.
Thus, we obtain
\begin{multline*}
    F
    = P \cap H
    = \pi(\conv(\Gamma')) \cap H 
    = \pi(\conv(\Gamma') \cap \pi^{-1}(H)) \\
    = \pi(\conv(\Gamma' \cap \pi^{-1}(H)))
    = \pi(\conv(Q' \cap \sigma^{-1}(\Z^k))),
\end{multline*}
where $Q' = Q \cap L \cap \pi^{-1}(H)$.
Note that $Q'$ is a polyhedron with at most $m$ facets.
Moreover, we claim that the stable set polytope of $G$ is a linear projection of $F$.
By \Cref{lemMIPdef}, this implies that $(Q',\sigma)$ is a MIP formulation for the stable set problem over $G$ and hence $k = \Omega(\nicefrac{n}{\log^2 n}) = \Omega(\nicefrac{N}{\log^2 N})$, unless $m \ge 2^{cn/\log n} = 2^{\Omega(N/\log N)}$, and we are done.

To see that the stable set polytope of $G$ is indeed a linear projection of $F$, consider the projection $\tau : \R^{V \cup E} \to \R^V$ onto the variables indexed by $V$.
First, let $S \subseteq V$ be any stable set of $G$ and denote by $E' \subseteq E$ the set of edges that are disjoint from $S$.
Let $x \in \{0,1\}^{V \cup E}$ denote the characteristic vector of $S \cup E'$ and note that
\begin{multline*}
    \sum_{v \in V} a(v)x_v + \sum_{e \in E} a(e)x_e
    = \sum_{v \in S} a(v) + \sum_{e \in E'} a(e) \\
    = \sum_{v \in S} \sum_{e : v \in e} a(e) + \sum_{e \in E'} a(e)
    = \sum_{e \in E} a(e) = B,
\end{multline*}
where the second equality uses the fact that $S$ is a stable set.
Thus, $x \in F$ and hence $\chi(S) = \tau(x) \in \tau(F)$.
This shows that the stable set polytope of $G$ is contained in $\tau(F)$.
For the reverse inclusion, consider any $x \in F \cap \{0,1\}^{V \cup E}$ and note that it suffices to show that $\tau(x)$ is the characteristic vector of a stable set in $G$.
To this end, let $S = \{v \in V : x_v = 1\}$.
Since $\tau(x) = \chi(S)$, we have to show that $S$ is a stable set in $G$.
For $j \in [|E|]$ let $e_j = \{v_j,w_j\}$ and notice that
\[
    \sum_{j=1}^{|E|} 4^j
    = B
    = \sum_{v \in V} a(v)x_v + \sum_{e \in E} a(e)x_e
    = \sum_{j=1}^{|E|} 4^j (x_{e_j} + x_{v_j} + x_{w_j}).
\]
In order for this equality to hold, for every $j \in [|E|]$ we must have $x_{e_j} + x_{v_j} + x_{w_j} = 1$ and hence $x_{v_j} + x_{w_j} \le 1$.
This means that $S$ contains at most one node from every edge of $E$, and hence $S$ is indeed a stable set.
\section{Proof of \Cref{lemshouldntbeclaim}}
\label{secProofsOfLemmas}

    For $b \in \bits$, consider the undirected graph $\FG^b = (\CV^b, \CE^b)$ defined on the node set $g^{-1}(b)$, where there is an edge between two nodes iff they are contained in a common window. We explicitly allow self-loops in $\FG^b$, and denote the version without self-loops by $\tilde{\FG}^b$. Let $b_1, \dots, b_\ell$ denote the entries of $b^i$ corresponding to the edges of the path joining the terminals in $t_i$. Then, we can write $\FG = (\CV, \CE)$ as a tensor product of graphs via $\FG = \FG^{b_1} \otimes \dots \otimes \FG^{b_\ell}$, meaning that $\CV = \CV^{b_1} \times \dots \times \CV^{b_\ell}$ and $\{u_1 \times \dots \times u_\ell, v_1 \times \dots \times v_\ell \} \in \CE$ iff $\{u_1, v_1\} \in \CE^{b_1}, \dots, \{u_\ell, v_\ell\} \in \CE^{b_\ell}$.

    We will later prove that for any $b \in \bits$, there exists a distribution $s'_b$ on the set of random walks $V_0, \dots, V_{192}$ of length 192 in $\FG^b$ such that $V_0$ and $V_{192}$ are independent in $s'_b(\cdot)$, $s'_b(V_k)$ is uniform in $\CV^b$ for all $k \in \{0, \dots, 192\}$, $s'_b(V_k = V_{k - 1} \mid v_{k - 1}) = \frac12$ and $s'_b(V_k \mid V_k \neq V_{k - 1}, v_{k - 1})$ is uniform over the neighbors of $v_{k - 1}$ (except for $v_{k - 1}$ itself) for all $k \in \{1, \dots, 192\}$ and $v_{k - 1} \in \CV^b$.
    In particular, this implies that for any two neighbors $v_{k - 1}, v_k \in \CV^b$ with $v_{k - 1} \neq v_k$, we have $s'_b(v_{k - 1}) = \frac{1}{32}$ and $s'_b(v_k \mid V_k \neq V_{k - 1}, v_{k - 1}) = \frac16$, since each node in $\FG^b$ has degree 6.

    First, let us show that \Cref{lemshouldntbeclaim} is implied by the existence of such distributions of random walks. To this end, let us independently sample $\ell$ random walks $(V_0^1, \dots, V_{192}^1), \dots, (V_0^\ell, \dots, V_{192}^\ell)$ on $\FG^{b_1}, \dots, \FG^{b_\ell}$ according to the distributions $s'_{b_1}, \dots, s'_{b_\ell}$ and let $s' = s'_{b_1} \cdot \dots \cdot s'_{b_\ell}$ be the product distribution. For $k \in \{0, \dots, 192\}$, we set $V_k = (V_k^1, \dots, V_k^\ell)$ to create a random walk $(V_0, \dots, V_{192})$ in $\FG$. For each $k \in \{1, \dots, n\}$, $V_0^k$ and $V_{192}^k$ are independent in $s'_{b_k}(\cdot)$. Thus, $V_0$ and $V_{192}$ are independent in the product distribution $s'(\cdot)$.
    We observe the following.
    \begin{repclm}{clmhkfdsub}
        In $s(\cdot)$, $X^iY^i$ is uniformly distributed in $g^{-1}(b^i)$.
    \end{repclm}
    \begin{repclm}{clmhkfdsuc}
        In $s(\cdot)$, $W^i$ is a uniformly random $\comp{b^i}$-window such that $\ps{W^i}{a 0}$ contains $X^iY^i$.
    \end{repclm}
    
    To see that $s'(V_{k - 1}, V_k) = s(X^iY^i, \tilde{X}^i \tilde{Y}^i)$ holds true, let $v_{k - 1}^1, \dots, v_{k - 1}^\ell$ and $v_k^1, \dots, v_k^\ell$ be fixed, and set $x^iy^i = v_{k - 1}$ and $\tilde{x}^i \tilde{y}^i = v_k$. Let $\FW^i$ denote the set of windows $w^i$ that contain both $v_{k - 1}$ and $v_k$. Note that $|\FW^i| = 6^{|\{j \in \{1, \dots, \ell\}: v_{k - 1}^j = v_k^j|}$ since each $xy \in \CX \CY$ is contained in 6 different windows.   
    Then, on one hand, we have
    \begin{align*}
        s(x^iy^i, \tilde{x}^i \tilde{y}^i) &= \sum_{w^i \in \FW^i} s(x^iy^i, \tilde{x}^i \tilde{y}^i, w^i) \\
        &= \sum_{w^i \in \FW^i} s(x^iy^i) \cdot s(w^i \mid x^iy^i) \cdot s(\tilde{x}^i\tilde{y}^i \mid w^i, x^iy^i) \\
        &= \sum_{w^i \in \FW^i} s(x^iy^i) \cdot s(w^i \mid x^iy^i) \cdot s(\tilde{x}^i\tilde{y}^i \mid w^i) \\
        &= \sum_{w^i \in \FW^i} 32^{-\ell} \cdot 6^{-\ell} \cdot 2^{-\ell} 
        = |\FW^i| \cdot 384^{-\ell},
    \end{align*}
    where the third equality follows from the independence of $X^iY^i$ and $\tilde{X}^i\tilde{Y}^i$ in $s(\cdot \mid w^i)$ and the fourth equality holds since the involved distributions are uniform (by Claim \ref{clmhkfdsub} and Claim \ref{clmhkfdsuc}). On the other hand, we have
    \begin{align*}
        s'(v_{k - 1}&, v_k)
        = \prod_{j \in \{1, \dots, \ell\}} \left[ s'_{b_j}(v_{k - 1}^j, v_k^j, V_{k - 1}^j = V_k^j) + s'_{b_j}(v_{k - 1}^j, v_k^j, V_{k - 1}^j \neq V_k^j) \right] \\
        &= \prod_{j: v_{k - 1}^j = v_k^j} s'_{b_j}(v_{k - 1}^j, v_k^j, V_{k - 1}^j = V_k^j)
        \cdot \prod_{j: v_{k - 1}^j \neq v_k^j} s'_{b_j}(v_{k - 1}^j, v_k^j, V_{k - 1}^j \neq V_k^j).
    \end{align*}
    We further calculate
    \begin{align*}
        s'_{b_j}(v_{k - 1}^j, v_k^j, V_{k - 1}^j = V_k ^j)
        &= s'_{b_j}(v_{k - 1}^j, V_{k - 1}^j = V_k ^j) \\
        &= s'_{b_j}(v_{k - 1}^j) \cdot s'_{b_j}(V_{k - 1}^j = V_k^j \mid v_{k - 1}^j) 
        = 32^{-1} \cdot 2^{-1}
    \end{align*}
    for $v_{k - 1}^j = v_k^j$ and
    \begin{align*}
        &\quad ~ s'_{b_j}(v_{k - 1}^j, v_k^j, V_{k - 1}^j \neq V_k^j) \\
        & = s'_{b_j}(v_k^j \mid v_{k - 1}^j, V_{k - 1}^j \neq V_k^j) \cdot s_{b_j}'(V_{k - 1}^j \neq V_k^j \mid v_{k - 1}^j) \cdot s_{b_j}'(v_{k - 1}^j) \\
        & = 6^{-1} \cdot 2^{-1} \cdot 32^{-1}
    \end{align*}
    for $v_{k - 1}^j \neq v_k^j$, which shows that we have
    \begin{align*}
        s'(v_{k - 1}, v_k)
        &= \prod_{j: v_{k - 1}^j = v_k^j} 32^{-1} \cdot 2^{-1}
        \cdot \prod_{j: v_{k - 1}^j \neq v_k^j} 6^{-1} \cdot 2^{-1} \cdot 32^{-1} \\
        &= 6^{|\{j: v_{k - 1}^j = v_k^j\}|} \cdot 384^{-\ell}
        = |\FW^i| \cdot 384^{-\ell}
        = s(x^iy^i, \tilde{x}^i\tilde{y}^i).
    \end{align*}
    
    It remains to prove that the distributions $s'_0$ and $s'_1$ with the stated properties exist. To this end, let us fix $b \in \bits$.
    
    With Figure~\ref{figGraph} we see that $\tilde{\FG}^b$ is connected and each node of $\tilde{\FG}^b$ has degree $6$.
    Thus, there exists a Eulerian cycle $(u_0, \dots, u_{191})$ of length $6 \cdot 32 = 192$ in $\tilde{\FG}^b$ that visits each node exactly six times. We differentiate between all the nodes $u_0, \dots, u_{191}$, even if they originated from the same node in $\tilde{\FG}^b$.

    \begin{figure}
        \centering
        \includegraphics[width=4cm]{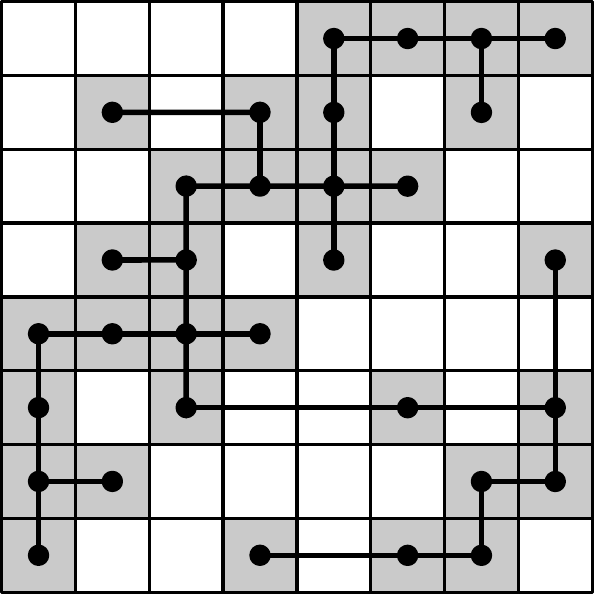}
        \caption{Illustration of a spanning subgraph of $\tilde{\FG}^1$, which shows that $\tilde{\FG}^1$ is connected. Note that every column and row has exactly $4$ gray entries, which means that every node in $\tilde{\FG}^1$ has degree $6$.
        Moreover, for every column, there is a column with inverted entries, which shows that $\tilde{\FG}^0$ and $\tilde{\FG}^1$ are isomorphic.}
        \label{figGraph}
    \end{figure}
    
    In order to choose $V_0$ and $V_{192}$, we first independently and uniformly sample two nodes $u$ and $u'$ in $\{u_0, \dots, u_{191}\}$. Without loss of generality, let us assume $u = u_0$ and let $u' = u_k$. Next, we choose to walk along the Eulerian cycle with either ascending indices or descending indices, both with probability $\tfrac12$. This results in a random walk of length $k$ or $192 - k$. Finally, depending on the previous choice, we add $192 - k$ or $k$ self-loop edges, uniformly distributed along the random walk. It is easy to see that the resulting distribution of random walks has the desired properties.
\section{Proofs of Claims}
\label{secClaims}

\repeatclm{clmwindowa}
\begin{proof}
    All calculations in this proof are done modulo 2.
    Due to symmetry reasons, it suffices to consider a horizontal window $w = \{(x,y), \, (x,y')\}$.
    We first notice that we must have $y_2 \ne y_2'$ or $y_3 \ne y_3'$ since otherwise $g(x,y) = g(x,y')$ implies $y_1 = y_1'$ and hence $y = y'$.
    This means that $\tilde{y} \ne \tilde{y}'$.
    Moreover, we have
    \begin{align*}
        g(x, \tilde{y}) - g(x, \tilde{y}')
        &= \tilde{y}_1 - \tilde{y}_1' + x_2 (\tilde{y}_2 - \tilde{y}_2') + x_3 (\tilde{y}_3 - \tilde{y}_3') \\
        &= y_1 - y_1' + x_2 (y_2 - y_2') + x_3 (y_3' - y_3) 
        = g(x, y) - g(x, y')
    \end{align*}
    and, by similar calculations, $g(\tilde{x}, y) - g(\tilde{x}, y') = g(x, y) - g(x, y')$ as well as $g(\tilde{x}, \tilde{y}) - g(\tilde{x}, \tilde{y}') = g(\tilde{x}, y) - g(\tilde{x}, y')$. Since $g$ is constant on $w$, this implies that $g$ is constant on $\ps{w}{ab}$ for any $a, b \in \bits$. Furthermore, direct calculations show $g(x, \tilde{y}) = g(\tilde{x}, y) = g(x, y) + 1$ and $g(\tilde{x}, \tilde{y}) = g(x, y) + 1 + (1 + y_2 + y_2')(1 + y_3 + y_3')$. Recall that we have $y_2 \neq y_2'$ or $y_3 \neq y_3'$, therefore the product $(1 + y_2 + y_2')(1 + y_3 + y_3')$ equals zero, which implies $g(x, \tilde{y}) = g(\tilde{x}, y) = g(\tilde{x}, \tilde{y}) = g(x, y) + 1$ and thus finishes the proof.
    \qed
\end{proof}

\repeatclm{clmwindowb}
\begin{proof}
    Again, all calculations in this proof are done modulo 2 and it suffices to consider a horizontal window $w = \{(x,y), \, (x,y')\}$ due to symmetry reasons.
    Let the maps $(x, y, y') \to (x, \tilde{y}, \tilde{y}')$, $(x, y, y') \to (\tilde{x}, y, y')$ and $(x, y, y') \to (\tilde{x}, \tilde{y}, \tilde{y}')$ be denoted by $\ps{f}{10}$, $\ps{f}{01}$ and $\ps{f}{00}$, respectively. Using the identity $(1 + y_2 + y_2')(1 + y_3 + y_3') = 0$ (see the proof of Claim \ref{clmwindowa}), simple calculations yield
    $\ps{f}{10}^2(x, y, y') = (x, y, y')$,
    $\ps{f}{01}^2(x, y, y') = (x, y, y')$ and
    $\ps{f}{00}^2(x, y, y') = (x + e_1, y + e_1, y' + e_1)$,
    where $e_1 = (1, 0, 0)$. Hence, for any $a, b \in \bits$ and $y \neq y'$, the map $\ps{f}{ab}^4$ is the identity, which proves that $w \to \ps{w}{ab}$ is a bijection on the set of windows.
    \qed
\end{proof}

\repeatclm{clmweee}
\begin{proof}
   In order to prove mutual independence, we show that
    $q(x_F y_F \mid tw, \CD) = \prod_{e \in F} q(x_e y_e \mid tw, \CD)$
    holds for every subset $F \subseteq E$ of edges.
    If $\Zeros(w) \neq t_u \cup t_0$ or $x_F y_F \notin w_F$, both sides of this equality are zero. For $w, t_u, t_0, x_F y_F$ with $\Zeros(w) = t_u \cup t_0$ and $x_F y_F \in w_F$, we have
    \begin{align*}
        q(x_F y_F \mid tw, \CD)
        &= \sum_{\substack{\tilde{x} \tilde{y} \in w: \\ \tilde{x}_F \tilde{y}_F = x_F y_F}} q(\tilde{x} \tilde{y} \mid tw, \CD) \\
        &= |\{\tilde{x} \tilde{y} \in w: \tilde{x}_F \tilde{y}_F = x_F y_F\}| \cdot 2^{-|E|}
        = 2^{|E| - |F|} \cdot 2^{-|E|} = 2^{- |F|},
    \end{align*}
    where the second equality follows since $q(XY \mid tw, \CD)$ is uniform over the set $\{xy: xy \in w\}$. Applying this equation for $F = \{e\}$ yields    
    \begin{align*}
        \prod_{e \in F} q(x_e y_e \mid tw, \CD) &=
        \prod_{e \in F} q(x_{\{e \}} y_{\{e \}} \mid tw, \CD) 
        = \prod_{e \in F} 2^{-1} = 2^{- |F|},
    \end{align*}
    which proves the claim.
    \qed
\end{proof}

\repeatclm{clmblub}
\begin{proof}
    For any given $t, w, i$, let $\tilde{w}$ denote the window that fulfills $\ps{\tilde{w}^i}{00}\tilde{w}^{-i} = w$. Note that exactly one such window exists due to Claim \ref{clmwindowb}. Since $q(W \mid t)$ is uniform for any $t$, we have
     \begin{align*}
         q(xryw \mid t)
         &= q(xry) \cdot q(w \mid t) 
         = q(xry) \cdot q(W = \tilde{w} \mid t) \\
         &= q(xry) \cdot q(\ps{W^i}{00}W^{-i} = w \mid t)
         = q(xry, \ps{W^i}{00}W^{-i} = w \mid t),
     \end{align*}
    which implies $q(XRY\ps{W^i}{00}W^{-i} \mid t) = q(XRYW \mid t)$.
    Moreover, $X^iY^i$ is fully determined by $XRY$ and independent of $W$ and $\ps{W^i}{00}W^{-i}$ in $q(\cdot \mid t)$.
    Therefore, the distributions $q(X^iRY^i\ps{W^i}{00}W^{-i} \mid t)$ and $q(X^iRY^iW \mid t)$ are equal.
    Since $q(X^iRY^iW \mid t, \CD)$ is obtained from $q(X^iRY^i \ps{W^i}{00}W^{-i} \mid t, \CD_i)$ by replacing every occurrence of $\ps{W^i}{00}W^{-i}$ with $W$, it follows that these two distributions are also equal, which implies $ \Infc[q]{R}{X^iY^i}{T W, \CD} = \Infc[q]{R}{X^iY^i}{T \ps{W^i}{00}W^{-i}, \CD_i}$.
    
    Further note that for any fixed $t$, $W$ is determined by $\ps{W^i}{00}W^{-i}$ and vice versa.
    Thus, we have $\Infc[q]{R}{X^iY^i}{T \ps{W^i}{00}W^{-i}, \CD_i} = \Infc[q]{R}{X^iY^i}{TW,\CD_i}$, which concludes the proof.
    \qed
\end{proof}

\repeatclm{clmztuior}
\begin{proof}
    Since $q(zt_{-i} \mid \CD_i) > 0$, we see that there exist $w, x, y, t_u, t_0, t_i$ such that $\Zeros(x, y) = t_u \cup t_0$, $x, y \in \ps{w^i}{00}w^{-i}$, $\Zeros(w) = z$ and $t_u \cup t_0 \cup t_i \cup t_{-i} = \CT$ hold, which implies $\Zeros(\ps{w^i}{00}w^{-i}) = \Zeros(x, y) = t_u \cup t_0$.
    Since $w$ arises from $\ps{w^i}{00}w^{-i}$ by replacing $b$-windows with $\comp{b}$-windows along the path connecting the terminals in $t_i$, we conclude $z = \Zeros(w) = t_u \cup t_0 \cup t_i = \CT \setminus t_{-i}$.
    \qed
\end{proof}

\repeatclm{clmwednf}
\begin{proof}
    For any $a \in \bits$, we have
    \begin{multline*}
        p(A = a, B = 0, tw)
        = q(zt_{-i}, \CE)^{-1} \cdot q(A = a, B = 0, tw, \CE) \\
        = q(zt_{-i}, \CE)^{-1} \cdot q(XY \in \ps{w^i}{a0}w^{-i}, tw)
        = q(zt_{-i}, \CE)^{-1} \cdot q(tw) \cdot q(XY \in \ps{w^i}{a0}w^{-i}).
    \end{multline*}
    Since $q(XY \in \ps{w^i}{a0}w^{-i}) = 2^{|E|} \cdot ||S||_1^{-1} \cdot (|\Zeros(\ps{w^i}{a0}w^{-i})| - 1 + \eps$ and $\Zeros(\ps{w^i}{a0}w^{-i})$ does not depend on the choice of $a$ (by Claim \ref{clmwindowa}), this proves the claim.
    \qed
\end{proof}

\repeatclm{clmuoetzwh}
\begin{proof}
    We set $w_\CE := \ps{w^i}{00}w^{-i} \cup \ps{w^i}{01}w^{-i} \cup \ps{w^i}{10}w^{-i} \cup \ps{w^i}{11}w^{-i}$ and calculate $p(XY \mid rtw) = q(XY \mid rtw, XY \in w_\CE)$, which is a restriction of $q(XY \mid rtw)$ to a rectangle. Considering that $X$ and $Y$ are independent in $q(\cdot \mid rtw)$, it follows that $X$ and $Y$ are also independent in $p(\cdot \mid rtw)$, which implies that $X - R - Y$ holds in $p(\cdot \mid tw)$.
    Since $A$ is determined by $X$ and $B$ is determined by $Y$ in $p(\cdot \mid tw)$, this proves the statement of the claim.
    \qed
\end{proof}

\repeatclm{zwezroi}
\begin{proof}
    We have $p(A = 1, B = 0) = p(A = 0, B = 0)$ by the proof of \Cref{bigprop0}. Together with Claim \ref{clmwednf}, this yields $p(tw \mid A = 1, B = 0) = p(tw \mid A = 0, B = 0)$ for any $tw$, which proves $p(TW \mid A = a, B = 0) = p(TW \mid B = 0)$ for any $a \in \bits$.
    
    Next, for some $r, t, w, a$, let $\tilde{w}$ denote the window such that $\ps{\tilde{w}}{\bar{a}0} = \ps{w}{a0}$. Then we have
    \begin{align*}
        &\quad ~ p(rtw, A = a, B = 0) =
        q(zt_{-i}, \CE)^{-1} \cdot q(rtw, XY \in \ps{W}{a0}) \\
        &= q(zt_{-i}, \CE)^{-1} \cdot \sum_{xy \in \ps{w}{a0}} q(xrytw)
        = q(zt_{-i}, \CE)^{-1} \cdot \sum_{xy \in \ps{\tilde{w}}{\bar{a}0}} q(xryt, W = \tilde{w}) \\
        &= p(rt, W = \tilde{w}, A = \bar{a}, B = 0),
    \end{align*}
    where the third equality follows since $q(\cdot)$ is a product distribution and $q(W)$ is uniform.
    Together with $p(A = 1, B = 0) = p(A = 0, B = 0)$, this yields $p(rtw \mid A = a, B = 0) = p(rt, W = \tilde{w} \mid A = \bar{a}, B = 0)$. Since $\tilde{w}^{-i} = w^{-i}$ holds by the definition of $\tilde{w}$, we have
    $p(r \mid tw^{-i}, A = a, B = 0) = p(r \mid t\tilde{w}^{-i}, A = \bar{a}, B = 0) = p(r \mid tw^{-i}, A = \bar{a}, B = 0)$. 
    \qed
\end{proof}

\repeatclm{jjiofe}
\begin{proof}
    For a given window $w$, let $\tilde{w}$ denote the unique window which fulfills $\ps{w^i}{10}w^{-i} = \ps{\tilde{w}^i}{00}\tilde{w}^{-i}$. By Claim \ref{clmwindowb} and Claim \ref{clmwindowa}, we know that $w \mapsto \tilde{w}$ is a bijection for any $t$, and if $w$ is a $b$-window, then $\tilde{w}$ is also a $b$-window. Note that this implies $\Zeros(w) = \Zeros(\tilde{w})$.
    If $\Zeros(w) \neq z$, $t$ is not consistent with $t_{-i}$, or $xy \notin \ps{w^i}{10}w^{-i}$, we have $p(xrytw \mid A = 1, B = 0) = p(xryt, W = \tilde{w} \mid A = 0, B = 0) = 0$. Otherwise we have
    \begin{align*}
        &\quad ~ q(zt_{-i}, \CE) \cdot p(A = 1, B = 0) \cdot p(xrytw \mid A = 1, B = 0) \\
        &= q(zt_{-i}, \CE) \cdot p(xrytw, A = 1, B = 0)
        = q(xrytw, A = 1, B = 0, zt_{-i}, \CE) \\
        &= q(xrytw, xy \in \ps{w^i}{10}w^{-i})
        = q(xry) \cdot q(t) \cdot q(w),
    \end{align*}
    and with an equivalent calculation
    \[
        q(zt_{-i}, \CE) \cdot p(A = 0, B = 0) \cdot p(xryt, W = \tilde{w} \mid A = 0, B = 0) = q(xry) \cdot q(t) \cdot q(W = \tilde{w}).
    \]
    Following the proof of \Cref{bigprop0}, we see that $p(A = 1, B = 0) = p(A = 0, B = 0)$ holds. Since $q(W)$ is uniform, this shows that we have $p(xrytw \mid A = 1, B = 0) = p(xryt, W = \tilde{w} \mid A = 0, B = 0)$.
    Hence, together with \Cref{prop:infoexpected}, we get
    \begin{align*}
        &\Infc[p]{R}{X^iY^i}{TW, A = 1, B = 0} \\
        &=
        \sum_{tw} p(tw \mid A = 1, B = 0) \cdot \Infc[p]{R}{X^iY^i}{tw, A = 1, B = 0} \\
        & = \sum_{tw} p(t, W = \tilde{w} \mid A = 0, B = 0) \cdot \Infc[p]{R}{X^iY^i}{t, W = \tilde{w}, A = 0, B = 0} \\
        & = \sum_{t\tilde{w}} p(t, W = \tilde{w} \mid A = 0, B = 0) \cdot \Infc[p]{R}{X^iY^i}{t, W = \tilde{w}, A = 0, B = 0} \\
        & = \Infc[p]{R}{X^iY^i}{TW, A = 0, B = 0},
    \end{align*}
    where the third equality follows from $w \mapsto \tilde{w}$ being a bijection.
    \qed
\end{proof}

\repeatclm{hkfdsua}
\begin{proof}
    Let us denote the nodes and edges along the path between the terminals in $t_i$ as $v_1, e_1, \dots,$ $v_n, e_n, v_{n + 1}$. Recall that in $s(\cdot)$, the values of $T$, $W^{-i}$, $A$ and $B$ are fixed. This means that for every edge $e$ that does not lie on this path, the value of $g(x_e, y_e)$ is already fixed. Hence, $g(x_{e_1}, y_{e_1})$ alone determines whether $v_1$ is in $\Zeros(W)$ or not. Thus, since $Z$ is also fixed in $s(\cdot)$, it is predetermined whether $w_{e_1}$ has to be a $0$- or $1$-window. This however means that for every edge $e$ connecting to $v_2$, apart from $e_2$, the value of $g(x_e, y_e)$ is again fixed. The claim follows by repeating the same argument along the entire path.
    \qed
\end{proof}

\repeatclm{clmhkfdsud}
\begin{proof}
    First, note that $RXY$ and $W$ are independent in $q(\cdot \mid t)$, which implies that $RX^{-i}Y^{-i}$ and $W^i$ are independent in $q(\cdot \mid w^{-i}, x^iy^i, t)$. Moreover, observe that $s(RX^{-i}Y^{-i}, W^i \mid x^iy^i) = q(RX^{-i}Y^{-i}, W^i \mid x^iy^i, tw^{-i}, x^iy^i \in \ps{W^i}{a0}, X^{-i}Y^{-i} \in w^{-i})$ is the restriction of $q(RX^{-i}Y^{-i}, W^i \mid w^{-i}, x^iy^i, t)$ to the rectangle $\CS \times \CT$, where $\CS = \{rx^{-i}y^{-i}: x^{-i}y^{-i} \in w^{-i}\}$ and $\CT = \{w^i: \ps{w^i}{a0} \ni x^iy^i\}$. Thus we have $RX^{-i}Y^{-i} - X^iY^i - W^i$ in $s(\cdot)$, which implies $R - X^iY^i - W^i$.
    Recall that we have $\tilde{X}^i\tilde{Y}^i - W^i - X^iY^iR$ by the definition of $\tilde{X}^i\tilde{Y}^i$, which yields
    \begin{align*}
        &\quad ~ s(r, \tilde{x}^i\tilde{y}^iw^i \mid x^iy^i)
        = s(x^iy^i)^{-1} \cdot s(w^i) \cdot s(r, \tilde{x}^i\tilde{y}^ix^iy^i \mid w^i) \\
        &= s(x^iy^i)^{-1} \cdot s(w^i)
        \cdot s(rx^iy^i \mid w^i) \cdot s(\tilde{x}^i\tilde{y}^i \mid w^i) 
        = s(rw^i \mid x^iy^i) \cdot s(\tilde{x}^i\tilde{y}^i \mid w^i) \\
        &= s(r \mid x^iy^i) \cdot s(w^i \mid x^iy^i) \cdot s(\tilde{x}^i\tilde{y}^i \mid w^i, x^iy^i) 
        = s(r \mid x^iy^i) \cdot s(\tilde{x}^i\tilde{y}^i w^i \mid x^iy^i),
    \end{align*}
    where the fourth equation uses $R - X^iY^i - W^i$. Hence we have $R - X^iY^i - \tilde{X}^i\tilde{Y}^iW^i$ in $s(\cdot)$. Since the random variables $\tilde{X}^i\tilde{Y}^i$ and $\{X^iY^i, \tilde{X}^i\tilde{Y}^i\}$ are determined by each other in $s(\cdot \mid x^iy^i)$, this implies $R - X^iY^i - \{X^iY^i, \tilde{X}^i\tilde{Y}^i\} W^i$ in $s(\cdot)$.
    \qed
\end{proof}

\repeatclm{clmhtoewph}
\begin{proof}
    By applying the Markov chain $R - X^iY^i - \tilde{X}^i\tilde{Y}^iW^i$ (Claim \ref{clmhkfdsud}) twice, we get
    \begin{align*}
        s(r, w^i \mid x^iy^i, \tilde{x}^i\tilde{y}^i)
        &= s(\tilde{x}^i\tilde{y}^i \mid x^iy^i)^{-1} \cdot s(r, w^i, \tilde{x}^i\tilde{y}^i \mid x^iy^i) \\
        &= s(\tilde{x}^i\tilde{y}^i \mid x^iy^i)^{-1} \cdot s(r \mid x^iy^i) \cdot s(w^i, \tilde{x}^i\tilde{y}^i \mid x^iy^i) \\
        &= s(r \mid x^iy^i, \tilde{x}^i\tilde{y}^i) \cdot s(w^i \mid x^iy^i, \tilde{x}^i\tilde{y}^i)
    \end{align*}
    for any $r, w^i, x^iy^i, \tilde{x}^i\tilde{y}^i$. Moreover, we have
    \begin{align*}
        s(w^i \mid X^iY^i = x^iy^i, \tilde{X}^i\tilde{Y}^i = \tilde{x}^i\tilde{y}^i)
        &= s(w^i \mid \{x^iy^i, \tilde{x}^i\tilde{y}^i\} \subset \ps{W^i}{a0}) \\
        &= s(w^i \mid \{x^iy^i, \tilde{x}^i\tilde{y}^i\})
    \end{align*}
    for any $w^i, x^iy^i, \tilde{x}^i\tilde{y}^i$. Therefore we obtain
    \begin{align*}
        &\quad ~ s(r, w^i \mid \{x^iy^i, \tilde{x}^i\tilde{y}^i\}) \\
        &= s(X^iY^i = x^iy^i \mid \{x^iy^i, \tilde{x}^i\tilde{y}^i\}) \cdot s(r, w^i \mid X^iY^i = x^iy^i, \tilde{X}^i\tilde{Y}^i = \tilde{x}^i\tilde{y}^i) \\
        &+ s(X^iY^i = \tilde{x}^i\tilde{y}^i \mid \{x^iy^i, \tilde{x}^i\tilde{y}^i\}) \cdot s(r, w^i \mid X^iY^i = \tilde{x}^i\tilde{y}^i, \tilde{X}^i\tilde{Y}^i = x^iy^i) \\
        &= s(w^i \mid \{x^iy^i, \tilde{x}^i\tilde{y}^i\}) \\
        &\cdot [s(X^iY^i = x^iy^i \mid \{x^iy^i, \tilde{x}^i\tilde{y}^i\}) \cdot s(r \mid X^iY^i = x^iy^i, \tilde{X}^i\tilde{Y}^i = \tilde{x}^i\tilde{y}^i) \\
        &+ s(X^iY^i = \tilde{x}^i\tilde{y}^i \mid \{x^iy^i, \tilde{x}^i\tilde{y}^i\}) \cdot s(r \mid X^iY^i = \tilde{x}^i\tilde{y}^i, \tilde{X}^i\tilde{Y}^i = x^iy^i) ] \\
        &= s(w^i \mid \{x^iy^i, \tilde{x}^i\tilde{y}^i\})
        \cdot s(r \mid \{x^iy^i, \tilde{x}^i\tilde{y}^i\})
    \end{align*}
    for any $r, w^i, x^iy^i, \tilde{x}^i\tilde{y}^i$, which proves $R - \{X^iY^i, \tilde{X}^i\tilde{Y}^i\} - W^i$ in $s(\cdot)$.
    \qed
\end{proof}

\repeatclm{jlbopf}
\begin{proof}
    Let $W^{-i, 0}$ denote the entries of $W$ corresponding to edges that do not lie in the paths connecting the terminals in $t_i$ and $t_0$. Recall that $z = t_u \cup t_0 \cup t_i$ by Claim \ref{clmztuior}, which implies $|\Zeros(\ps{w^i}{ab} \ps{w^0}{cd} w^{-i, 0})| \in \{1, 3, 5\}$, where the exact value depends on how many of the two pairs $(a, b)$ and $(c, d)$ equal $(1, 1)$.
    Therefore, for any $tw$ consistent with $zt_{-i}$, we have
    \begin{align*}
        q(\CE \mid tw, \CF) 
        &= q(C = 1, D = 1 \mid tw, \CF) \\
        &= q(\CF \mid tw)^{-1} \cdot \sum_{a, b \in \bits}
        q(ab, C = 1, D = 1, \CF \mid tw) \\
        &= q(\CF \mid tw)^{-1} \cdot \sum_{a, b \in \bits}
        q(XY \in \ps{w^i}{ab} \ps{w^0}{11} w^{-i, 0} \mid tw) \\
        &= q(\CF \mid tw)^{-1} \cdot \sum_{a, b \in \bits} 2^{|E|} ||S||_1^{-1} \cdot 
        \left( |\Zeros(\ps{w^i}{ab} w^{-i})| - 1 + \eps \right) \\
        &= q(\CF \mid tw)^{-1} \cdot 2^{|E|} ||S||_1^{-1} \cdot (1 \cdot (4 + \eps) + 3 \cdot (2 + \eps)) \\
        &= q(\CF \mid tw)^{-1} \cdot 2^{|E|} ||S||_1^{-1} \cdot (10 + 4 \eps)
    \end{align*}
    and
    \begin{align*}
        &q(|\Zeros(X,Y)| = 1 \mid tw, \CF) \\
        &= q(\CF \mid tw)^{-1} \cdot \sum_{a,b,c,d \in \bits}
        q(abcd, |\Zeros(X, Y)| = 1, \CF \mid tw) \\
        &= q(\CF \mid tw)^{-1} \cdot \left[ 7 \cdot 0 + 2^{|E|}||S||_1^{-1} \cdot 9 \cdot \eps \right] 
        = q(\CF \mid tw)^{-1} \cdot 2^{|E|}||S||_1^{-1} \cdot 9\eps,     
    \end{align*}
    which implies
    \begin{align*}
        &\frac{q(|\Zeros(X,Y)| = 1 \mid zt_{-i}, \CF)}{q(\CE \mid zt_{-i}, \CF)} \\
        &= \frac{\sum_{tw} q(tw \mid zt_{-i}, \CF) \cdot q(|\Zeros(X,Y)| = 1 \mid tw, \CF)}
        {\sum_{tw} q(tw \mid zt_{-i}, \CF) \cdot q(\CE \mid tw, \CF)} \\
        &= \frac{9 \eps \cdot \sum_{tw} q(tw \mid zt_{-i}, \CF)}
        {(10 + 4 \eps) \cdot \sum_{tw} q(tw \mid zt_{-i}, \CF)}
        = \frac{9 \eps}{10 + 4 \eps} = O(\eps).
    \end{align*}
    \qed
\end{proof}

\repeatclm{thewiufd}
\begin{proof}
    As seen before in the proof of Claim \ref{clmuoetzwh}, we can observe that the set $\bigcup_{a,b,c,d \in \bits} \ps{w^i}{ab} \ps{w^0}{cd} w^{-i, 0}$ is a rectangle, which implies that $X - R - Y$ holds in $q(\cdot \mid tw, \CF)$.
    Since $AC$ is determined by $X$ and $BD$ is determined by $Y$ in $q(\cdot \mid tw, \CF)$, the statement of the claim follows.
    \qed
\end{proof}

\repeatclm{clmhkfdsub}
\begin{proof}
    Consider any $x^iy^i$ such that $s(x^iy^i) > 0$. Note that this immediately implies $x^iy^i \in \ps{W^i}{a0}$ and therefore $x^iy^i \in g^{-1}(b^i)$. Then, we have
    \begin{align*}
        &\quad ~ q(tw^{-i}a, B = 0, \CE, z) \cdot s(x^iy^i)
        = q(x^iy^i, tw^{-i} a, B = 0, \CE, z) \\
        &= \sum_{x^{-i}y^{-i}, w^i}
        q(xy, tw a, B = 0, \CE, z) \\
        &= \sum_{x^{-i}y^{-i}, w^i}
        q(xy, tw) = \sum_{x^{-i}y^{-i}, w^i}
        q(tw) \cdot ||S||_1^{-1} \cdot (2 + \eps),
    \end{align*}
    where the sums are over $x^{-i}y^{-i} \in w^{-i}$ and $w^i$ such that $x^iy^i \in \ps{w^i}{a0}$.
    Since $q(TW)$ is uniform and $|\{w^i: x^iy^i \in \ps{w^i}{a0}\}|$ is determined by $t$, the final expression does not depend on $x^iy^i$, which proves that $s(X^iY^i)$ is uniformly distributed in $g^{-1}(b^i)$.
    \qed
\end{proof}

\repeatclm{clmhkfdsuc}
\begin{proof}
    Let $w^i$ be a $\comp{b^i}$-window, which implies $\Zeros(w^iw^{-i}) = z$, and, due to Claim \ref{clmztuior}, $\Zeros(\ps{w^i}{a0} w^{-i}) = t_u \cup t_0$. Then, we have
    \begin{align*}
        q(tw^{-i}a, B = 0, \CE, z) \cdot s(w^i)
        &= q(twa, B = 0, \CE, zt_{-i}) = q(tw, XY \in \ps{w^i}{a0}w^{-i}) \\
        &= q(tw) \cdot q(XY \in \ps{w^i}{a0}w^{-i}) \\
        &= q(tw) \cdot 2^{-n} \cdot ||S||_1^{-1} \cdot (|\Zeros(\ps{w^i}{a0} w^{-i}) - 1 + \eps) \\
        &= q(t) \cdot q(w) \cdot 2^{-n} \cdot ||S||_1^{-1} \cdot (2 + \eps).
    \end{align*}
    Since $q(W)$ is uniform, this proves that $s(w^i)$ does not depend on the choice of $w^i$. Therefore, $s(W^i)$ is the uniform distribution over all $\comp{b^i}$-windows.
    \qed
\end{proof}

\end{document}